\documentclass[a4paper,12pt,leqno]{article}

\usepackage{amssymb,amsmath}
\usepackage{amsthm}
\usepackage[abbrev]{amsrefs}
\usepackage{xy}
\xyoption{all}
\usepackage{graphicx}
\usepackage{hyperref}
\usepackage[text={14.5cm,24.5cm}]{geometry}
\usepackage{enumitem}

\numberwithin{equation}{section}

\newtheorem{defn}{Definition}[section]
\newtheorem{exm}[defn]{Example}
\newtheorem{theorem}[defn]{Theorem}
\newtheorem*{theorem*}{Theorem}

\newtheorem{lemma}[defn]{Lemma}
\newtheorem{proposition}[defn]{Proposition}
\newtheorem{corollary}[defn]{Corollary}
\newtheorem{rem}[defn]{Remark}

\newenvironment{definition}{\begin{defn}\em}{\end{defn}}
\newenvironment{example}{\begin{exm}\em}{\end{exm}}
\newenvironment{remark}{\begin{rem}\em}{\end{rem}}

\newcommand\V{\bigvee}
\newcommand\CC{\mathbb{C}}
\newcommand\RR{\mathbb{R}}
\newcommand\NN{\mathbb{N}}
\newcommand\Max{\operatorname{Max}}
\newcommand\ie{i.e.}
\newcommand\eg{e.g.}
\newcommand\st{\mid}
\newcommand\cf{\textrm{cf.}}
\newcommand\opens{\operatorname{\mathcal{O}}}
\newcommand\topology{\operatorname{\Omega}}
\newcommand\downsegment{\operatorname\downarrow}

\newcommand\SobL{\mathsf{SobL}}

\newcommand\Cstarcat{\mathsf{C^*}\textrm{-}\mathsf{Alg}}
\newcommand\MS{\mathsf{MSp}}
\newcommand\LCS{\mathsf{LocConvSp}}
\newcommand\ident{\mathrm{id}}
\newcommand\ipi{\mathcal I}

\newcommand\supp{\operatorname{supp}}
\newcommand\osupp{\operatorname{supp}^\circ}
\newcommand\Sub{\operatorname{Sub}}
\newcommand\norm[1]{\| #1\|}
\newcommand\pair{\operatorname{Pair}}
\newcommand\ps[1]{\widetilde{#1}}

\newcommand\spanmap{\operatorname{span}}

\newcommand\closedsets{\operatorname{\mathsf{C}}}
\newcommand\spinup{{\uparrow}}
\newcommand\spindown{{\downarrow}}
\newcommand\inv{*}
\newcommand\Aut{\operatorname{Aut}}
\newcommand\ms{\opens}
\newcommand\groupoid{\operatorname{\mathcal G}}
\newcommand\conv{*}

\newcommand\basechg{\boldsymbol c}

\newcommand\rs{\operatorname{R}}
\newcommand\ls{\operatorname{L}}
\newcommand\ts{\operatorname{T}}

\newcommand\slit{\boldsymbol s}
\newcommand\retraction{\sigma}

\begin{document}

\title{On the geometry of physical measurements: topological and algebraic aspects}
\author{Pedro Resende\thanks{Work funded by FCT/Portugal through project UIDB/04459/2020.}}

\date{~}

\maketitle

\begin{abstract}
We study the mathematical structure of the notion of measurement space, which extends aspects of noncommutative topology that are based on quantale theory. This yields a geometric model of physical measurements
that provides a realist picture, yet also operational, such that measurements and classical information arise interdependently as primitive concepts. A derived notion of classical observer caters for a mathematical formulation of Bohr's classical/quantum divide. Two important classes of measurement spaces are obtained, respectively from C*-algebras and from second-countable locally compact open sober topological groupoids. The latter yield measurements of classical type and relate to Schwinger's notion of selective measurement. We show that the measurement space associated to the reduced C*-algebra of any second-countable locally compact Hausdorff \'etale groupoid is canonically equipped with a  classical observer, and we establish a correspondence between properties of the observer and properties of the groupoid.\\
\vspace*{-2mm}~\\
\textit{Keywords:} Measurement problem, classical observers, stably Gelfand quantales, locally compact groupoids, group\-oid C*-algebras.\\
\vspace*{-2mm}~\\
2020 \textit{Mathematics Subject
Classification}: 81P15 (18F75 22A22 46L05 81P05 81P13)
\end{abstract}

{\small \tableofcontents}

\section{Introduction}

The noncommutative topology of Giles and Kummer~\cite{GK} (see also~\cite{Giles,Akem70}) has led Mulvey~\cites{M86} to conjecture that suitable lattices of closed linear subspaces of a C*-algebra $A$ equipped with the naturally induced multiplication may play the role of spectrum of $A$. Such a notion of spectrum is in the spirit of pointfree topology~\cites{stonespaces,pointless}, according to which the main object of study is the locale of open sets of a space rather than the space of points itself. An instance of this idea is the fact that for any C*-algebra $A$ the lattice of closed ideals $I(A)$ is a locale, the induced multiplication on closed ideals coincides with intersection~\cite{MP1}, and $I(A)$ is isomorphic to the topology of the primitive spectrum of $A$. Therefore we can simply say that the locale $I(A)$ \emph{is} the spectrum of $A$. In particular, the Gelfand--Naimark representation theorem ensures that if $A$ is commutative the locale $I(A)$ contains all the information about $A$.

But there is no reason why the multiplication of more general subspaces should coincide with intersection, and Mulvey's idea was more ambitious because it aimed at finding a general definition of spectrum such that any C*-algebra --- not just commutative ones --- can be fully reconstructed from its spectrum. Such lattices equipped with multiplication generalize locales and were termed \emph{quantales} --- see~\cite{Rosenthal1} for a general introduction. This terminology was inspired by quantum physics, in particular by the idea that the multiplication can be interpreted as a temporal structure: read the product $ab$ as ``$b$ and then $a$,'' as when performing two sequential quantum measurements.

While a related operational interpretation of quantales was picked up in computer science~\cites{AV93,Re01,RV} and then in a quantale-based construction~\cite{MR} of Connes'~\cite{Connes} noncommutative space of Penrose tilings, it remained mostly dormant in the early years of quantale theory, during which special attention was given to right-sided quantales, such as those of closed right ideals of C*-algebras~\cites{BB86,BRB,Rosicky}. But for general C*-algebras these do not contain enough information and, partly inspired by Rosicky's~\cite{Rosicky} quantum frames, another attempt came again from Mulvey~\cite{Curacao}, who proposed addressing the much larger quantale $\Max A$ of all the closed linear subspaces of a C*-algebra $A$, equipped with the naturally induced involution. Although in the commutative setting this does not coincide with the localic spectrum $I(A)$ of commutative C*-algebras, and indeed this mismatch is essentially inevitable~\cite{KPRR}, the interplay between the whole quantale and its right-sided part plays an important role because the latter carries the noncommutative topology of Giles and Kummer, and from this fact and from the classification of irreducible representations of Mulvey and Pelletier~\cite{MP1} (see also~\cite{Re2forms}) follows the surprising fact that for unital C*-algebras the quantale $\Max A$ is a complete invariant~\cite{KR}.

While C*-algebras are the basic objects of noncommutative geometry in the sense of Connes, other structures play an important role, notably locally compact groupoids, in particular Lie groupoids --- for instance in modeling quotients of spaces, foliated manifolds, or spaces equipped with local symmetries ---, from which C*-algebras are then obtained as completions of convolution algebras. See~\cite{Paterson}. More generally, C*-algebras can be obtained from convolution algebras of sections of Fell bundles on groupoids~\cite{Kumjian98}, and there are important characterizations of the Fell line bundles that arise from central circle extensions of \'etale groupoids, in terms of ``diagonals'' of their C*-algebras~\cites{Kumjian,RenaultLNMath,Renault,ExePit19}.
Quantales have a close relation to groupoids, too, in fact a much better understood one than to C*-algebras because, for instance, a precise algebraic characterization exists of the quantale $\opens(G)$ associated to an \'etale groupoid $G$~\cite{Re07}, which moreover can be framed in terms of a (bi-)categorical equivalence~\cites{Re15,Funct2}. In addition, Fell bundles on $G$ are closely related to maps of quantales $p:\Max A\to\opens(G)$~\cite{QFB}.

One of the main aims of this paper is to link the above developments to the early operational interpretation of quantales in the context of quantum measurements.
More than just to develop a mathematical theory for its own sake, the main motivation is to provide an account of measurements that is also
free of the conceptual paradoxes that have plagued quantum mechanics since its beginning. This requires a conceptual shift: while states of systems are usually regarded, even if implicitly, as being the fundamental ``elements of reality,'' in this paper the elements of reality are taken to be the measurements themselves (for more of this rationale see section~\ref{sec:prelim}). In order to achieve this, the mathematical structure of spaces of measurements needs to be understood, along with its relation to state spaces. For instance, in algebraic quantum mechanics and algebraic quantum field theory systems are described by C*-algebras that contain their observables~\cite{Araki93}, and if a system is described by a C*-algebra $A$, the quantale $\Max A$, itself equipped with a suitable topology, will be regarded in this paper as being the space of measurements associated to the system described by $A$. If in turn $A$ is obtained from a Fell bundle on a groupoid $G$, the corresponding map $p:\Max A\to\opens(G)$ contains a mathematical description of how the space of measurements $\Max A$ relates to a classical state space, where the latter is the space of groupoid units $G_0$ equipped with symmetries carried by the groupoid structure.

The topology of a space of measurements $\Max A$ is not just abstract mathematical structure but rather it has its own conceptual significance: the open sets are regarded as physical properties which can be measured using only finitely many resources and carry a finite amount of classical information about the system being observed. This idea, along with an abstract definition of measurement space, was introduced in~\cite{fop}, where the intention was to exhibit a theory that takes measurements and classical information to be primitive concepts, in some sense yielding a model of Wheeler's \emph{it from bit}~\cite{itfrombit}.
The aim of the present paper is to complement this by providing both a solid introduction to the mathematics of measurement spaces and a thorough account of the main examples, occasionally delving again into the rationale behind the chosen mathematical structures, for instance showing that some axioms that are imposed (\eg, the continuity of disjunctions) are in fact not arbitrary. Alongside this, it will be seen that the topology on the quantales $\Max A$ improves the properties of the functor $\Max$ as a complete invariant of unital C*-algebras. Also, the maps of quantales $p:\Max A\to\opens(G)$ that were studied in~\cite{QFB} and are associated to Fell bundles on the groupoid $G$ will be revisited, leading to a generalization of some results of~\cite{QFB}.

In order to illustrate physically the various bits of mathematical structure that will be introduced, a running example will be that of spin measurements performed with a Stern--Gerlach apparatus, which provides a simple finite dimensional instantiation of the general constructions. The two-slit experiment is briefly addressed at the end of section~\ref{sec:classical}.

Throughout the paper I will use a few notions that are not staples of mathematical physics. For general information I direct the reader to \cite{bigdomainbook,stonespaces,picadopultr} for continuous lattices and the Scott topology, \cite{elephant}*{\S C1.2} for sober spaces, \cites{stonespaces,picadopultr} and ~\cite{elephant}*{\S C1.1--C1.2} for locales, \cite{Rosenthal1} for quantales, and~\cites{Re07,QFB,SGQ} for the general interplay between \'etale groupoids, quantales, and C*-algebras.

\section{Measurements and observable properties}\label{sec:prelim}

This non-technical section can be safely skipped by those who are mostly interested in the mathematical development. In it I recall (and sometimes improve) the rationale behind spaces of measurements, leading to the claim that (abstract) measurements are the points of a sober topological space whose open sets are the observable properties. Each observable property can be thought of as corresponding to a finite chunk of classical information, albeit of a ``semi-decidable'' type: not obtaining such information does not mean that there is a complementary observable property. This interpretation of open sets originated in computer science \cites{topologyvialogic,Smyth83,Ab91}, with the difference that spaces of measurements are not spaces of states, and the interpretation of $m\in U$ for a measurement $m$ and an open set $U$ is not the same as the interpretation of $x\in U$ for a state $x$.

\subsection{Rationale}

States of systems in classical physics are abstractly identified with sets of properties, such as position and momentum, and physical laws are expressed in terms of algebraic and geometric regularities to be found in spaces of such abstract states. It is implicit in this idea that physical systems exist ``out there'' in their own right and do not depend on how we observe or describe them. Quantum physics has challenged this view and led to a large variety of interpretations, ranging from those that regard a wave function as being ``real'' (a physical state of a system) to the opposite extreme where it is just viewed as a mathematical bookkeeping tool for predicting outcomes of observations, or even just a state of knowledge of an observer (see~\cite{PBRtheorem} and references therein). 

Nevertheless, the ontic view of the wave function, along with its ``collapse,'' has become pragmatic common ground for many physicists, not least due to the contribution of von Neumann, whose quantum mechanical description of systems entangled with experimental devices and observers led him to the projection postulate and to discussing conditions under which collapse occurs, to the extent of invoking the subjective perception of observers~\cite{vonNeumann1955}*{Ch.\ VI}. This has created difficulties, not only due to the conspicuous absence of a scientific understanding of subjective perception in a fundamental sense, but also because the definition of the boundary between the observed system and the observer is elusive. Bell called this boundary the ``shifty split'' and claimed, somewhat provocatively, that ``measurement'' ought to be banned from the language of quantum mechanics~\cite{Bell90}. While it is true that decoherence
has produced a more perspicuous understanding of how phenomena appear to become classical, and even of why certain basis states are selected, the basic problem persists: the experimental observation of a single outcome in the lab is not explained by Schr\"odinger's equation~\cite{adler}, so the need remains for interpretations such as Everett's many-worlds,
or, alternatively, modifications like stochastic
or gravity-induced
collapse.

Part of the problem is likely one of language, for it is well known that the language which we use to describe concepts helps shape those same concepts. This is true in everyday language, and when it comes to physics it becomes a matter of mathematical language: undoubtedly the existence of a sophisticated geometry with which to model space and time has helped carve our understanding of classical mechanics along with the feeling that the concepts associated with it are natural and ``real.'' However, the usual abstract notions of state are derived quite indirectly from our experience of physical phenomena, and it can be argued that measurements themselves are closer to direct experience. 
It is therefore justified that attention should be given to geometric properties of measurements.
For instance, borrowing a nice illustration from~\cite{PhillipBall}*{p.~88} (\cf\ \cite{Omnes1994}*{Ch.\ 5}), the idea that a photon seems to travel along a straight path in a lab follows not from any single measurement but from the fact that if we try to detect a photon in the space between a fixed source and target we can only find it along the straight path between the two and not anywhere else. But each such detection of the photon, \ie, at each of the possible positions along that straight path, is in fact a different measurement, made using a slightly different apparatus, so the appearance of a straight path \emph{hinges on geometric structure which is found in the collection of photon detecting measurements, rather than on a property of an independent entity called a photon} --- see Fig.\ \ref{fig:path}.
\begin{figure}[h!]
\begin{center}\includegraphics[width=\textwidth]{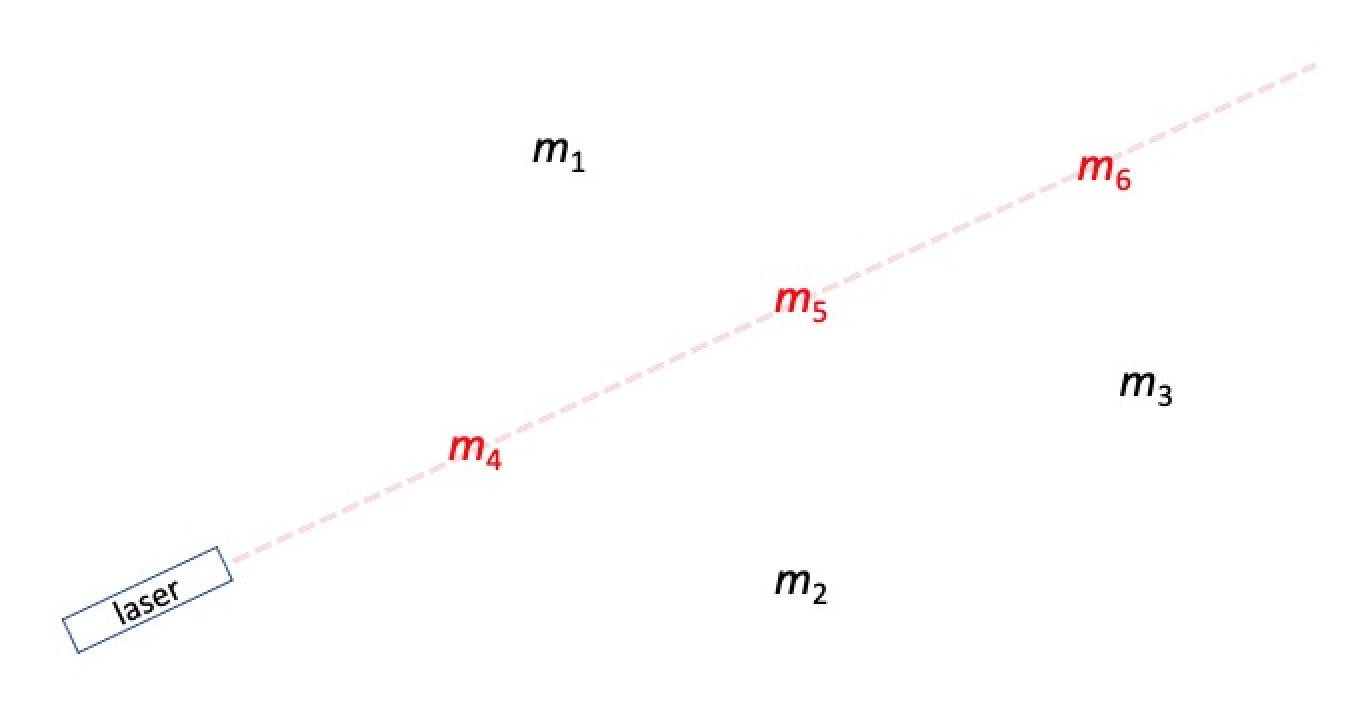}\caption{The ``straight path of a photon'' is a path in a space of measurements.}\label{fig:path}\end{center}
\end{figure}

However, reliance on concrete definitions of experimental procedure and apparatus is not a mathematical approach of the kind that would enable us to describe laws of physics in terms of mathematical regularities in a putative space of measurements, so an abstract definition is needed according to which measurements are as abstract as states are in classical physics. The analog of the idea that states are identified with their sets of properties can be that measurements correspond to sets of \emph{observable properties}, namely those of the kind that can be elicited in the course of a measuring process which, intuitively speaking, consumes only finite time and resources and furthermore meets Bohr's demands that the results of measurements ought to be communicable in the language of classical physics.
More precisely, each observable property can be regarded as consisting of a finite amount of \emph{classical information}, informally understood to be that which is recorded by writing a finite list of 0s and 1s on a sheet of paper. For instance, a single run of a typical Stern--Gerlach experiment for a spin 1/2 particle yields a single bit of classical information, corresponding to the two measurable deflections.
Physical measurements which are equivalent in terms of the information they produce are representations of the same \emph{abstract measurement}.

\subsection{Topology}

Suppose $E$ is the collection of all the physical experiments which can be performed on a given physical system. Such a collection is not abstract at all, since a typical member of $E$ will involve an apparatus, an experimental procedure, a list of possible outcomes, etc. However, as suggested above, the ``meaning'' of each experiment boils down to the collection of finite chunks of classical information which can be extracted from it, each of which is regarded as an \emph{observable property} of the system.

Each observable property can be identified with the set $U\subset E$ of the experiments via which that property can be asserted. More precisely, the meaning of $e\in U$ is that \emph{the experiment $e$ is compatible with $U$} in the sense that performing $e$ may give us the information which is represented by $U$.
For instance, let $\boldsymbol z$ be a Stern--Gerlach experiment that measures the spin of an electron along the $z$ axis, and let $U^\uparrow$ represent the information that there was an upwards deflection. So $U^\uparrow$ is compatible with $\boldsymbol z$ because it represents information that \emph{can be} obtained (but is not necessarily obtained) by performing $\boldsymbol z$. Equally, $\boldsymbol z$ is compatible with the information $U^\downarrow$ that there was a downwards deflection, so we have $\boldsymbol z\in U^\uparrow\cap U^\downarrow$ (Fig.\ \ref{fig:SG}).
\begin{figure}[h!]
\begin{center}\includegraphics[width=0.5\textwidth]{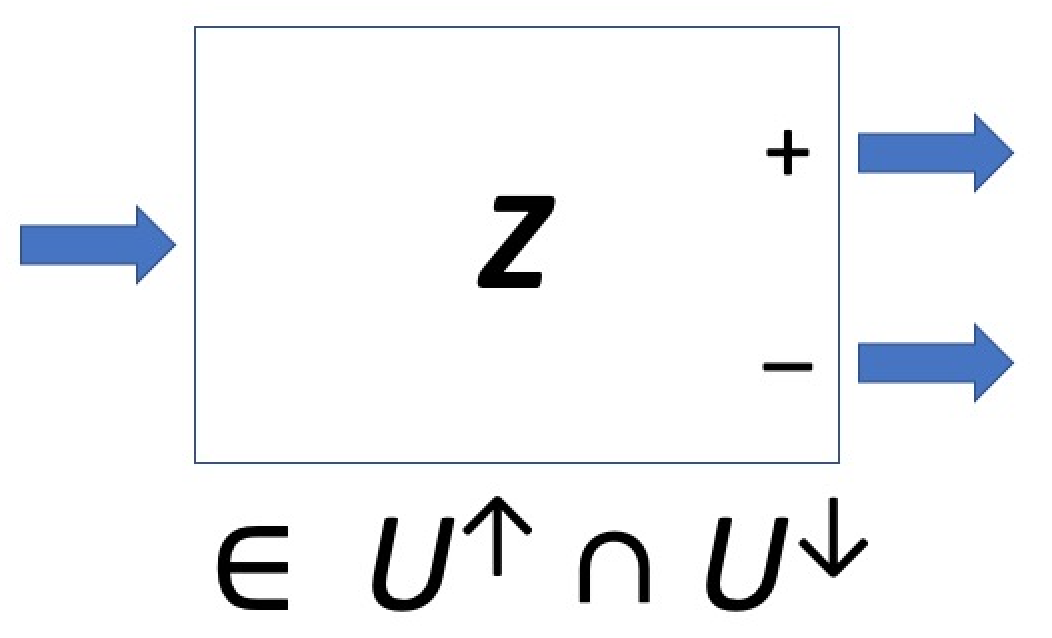}\caption{A schematic Stern--Gerlach apparatus. Upwards deflection corresponds to recording the observable property $U^\spinup$, and downwards deflection corresponds to recording the observable property $U^\spindown$.}\label{fig:SG}\end{center}
\end{figure}
Note that an inclusion $V\subset U$ of observable properties means that $V$ represents \emph{more} information than $U$. This can be read as the logical proposition ``$V$ implies $U$.'' For instance, $U^\uparrow\cap U^\downarrow$ represents more information than $U^\downarrow$ because the former corresponds to a situation where \emph{more} possible outcomes of a spin measurement (both up and down) have been shown to be possible (for instance by running the same experiment twice), whereas $U^\downarrow$ is only the information that a downwards deflection is possible. Accordingly, we may interpret intersections and unions as logical operations:
\begin{itemize}
\item $U\cap V$ is the \emph{conjunction} of $U$ and $V$, which can be verified by conducting the same experiment twice, each time recording one of the properties $U$ and $V$ --- for instance, performing the Stern--Gerlach experiment $\boldsymbol z$ at least twice we can verify that $\boldsymbol z\in U^\uparrow\cap U^\downarrow$.
\item $U\cup V$ is the \emph{disjunction} of $U$ and $V$, which can be verified by performing an experiment once and recording one of the properties $U$ or $V$.
\end{itemize}
But there is an important asymmetry. A disjunction $\bigcup_i U_i$ of infinitely many properties can equally well be verified by performing an experiment just once, whereas a conjunction of infinitely many properties $U_i$ may require us to perform infinitely many experiments, which can take infinitely much time or use infinitely many resources. So, in general, $\bigcap_i U_i$ might not be an observable property. This leads to the idea that the observable properties should be the open sets of a topology on $E$, with $\emptyset$ and $E$ being the \emph{impossible property} and the \emph{trivial property}, respectively.
Then any two experiments which belong to the same open sets yield exactly the same classical information, so they should be considered equivalent for all practical purposes. Each equivalence class is therefore an \emph{abstract experiment}, which is what in this paper will be meant by a \emph{measurement}.

Accordingly, from here on any topological space of measurements $M$ will be required to be $T_0$; that is, the specialization order $\le$ of $M$, which is defined by
\[
m\le n\iff m\in\overline{\{n\}},
\]
is a partial order. Moreover, following~\cite{fop}, a criterium of logical completeness implies that $M$ should be a sober space (any collection of open sets that ``looks'' like the set of open neighborhoods of a point must indeed be the set of open neighborhoods of a point), and thus (i)~the specialization order is directed complete (a dcpo); and (ii)~the topology is contained in the Scott topology, whose open sets $U$ are the upwards closed sets such that, for all directed sets $D\subset M$, if $\V D\in U$ then $d\in U$ for some $d\in D$. See~\cite{fop} for a more in-depth discussion.

An additional property which will often hold is separability. This makes sense if one thinks of the physical experiments that can be given a formal description by, say, writing down their specifications on paper using any finite alphabet. The collection of all such specifications is denumerable, and thus the collection of all the conceivable human-made experiments is at most denumerable. Hence, a non-separable space of measurements would be one for which there exists an open set (an observable property) that does not contain any human-made measurement. Such an observable property could only be asserted on the basis of some phenomenon which is beyond the control of any experimenter. Therefore, imposing separability is a way of formalizing the assumption that any observable property whatsoever is ultimately detectable by means of a human-made experiment.

\subsection{Continuity, contexts and observers}\label{sec:contobs}

Let $M$ and $N$ be sober spaces of measurements. A continuous map
\[
f:M\to N
\]
can be regarded as a \emph{representation} of $M$ in terms of the measurements in $N$. Since the open sets of $M$ and $N$ are understood to be \emph{finitely} observable properties, the natural way to interpret the continuity of $f$ is that a \emph{finite process} exists for transforming $m$ into $f(m)$, and therefore that observing a property of $f(m)$ by finite means is a particular way of obtaining a property of $m$ by finite means: the property $f^{-1}(U)$ of $m$ is \emph{the property that $U$ can be observed after we apply the transformation $f$ to $m$}.

In particular, if $\iota:S\to M$ is a topological embedding, we can identify $S$ with the subspace $\iota(S)\subset M$ and $\iota$ with the inclusion map, and regard $S$ as a selected collection of measurements, such as when restricting to spin measurements along $z$, or paying attention to Alice's measurements while ignoring Bob's measurements, etc. This provides grounds for addressing \emph{contextuality}, with $S$ being regarded as a \emph{context} within $M$.

Now let us consider topological retracts. Let $\opens\subset M$ be a subspace together with a continuous map $r:M\to \opens$ such that $r(\omega)=\omega$ for all $\omega\in \opens$. Here $\opens$ is a context but, in addition, every measurement $m\in M$ can be represented continuously in $\opens$.
This provides the grounds for defining a notion of \emph{observer context} of $M$: a context $\opens$ equipped with a finite \emph{transformation recipe} for describing, in the language afforded by $\opens$, any \emph{event} that occurs in the system being measured, furthermore such that if $\omega$ is already in $\opens$ then the transformation is trivial because $r(\omega)=\omega$.

\section{Disjunctions of measurements}\label{sec:disjunctions}

In this section I will further address the topology of spaces of measurements by requiring, in addition to sobriety, that an operation of logical disjunction of measurements should exist. This has been done preliminarily in~\cite{fop}, but most mathematical details were glossed over. The disjunction operation has a clear interpretation in terms of the classical information associated to measurements, and it will be deduced that it should coincide with the operation of binary join in the specialization order of the topology, and moreover that this operation should be continuous, thus leading to the definition of \emph{sober lattice}.

In many situations the physical realization of a disjunction is easy to describe, such as when regarding a spin measurement to be a disjunction of the various possible outcomes.
Regardless of whether or not we know how to physically realize disjunctions in any given practical situation, their existence in spaces of measurements is, from the mathematical point of view, a useful completeness requirement which has the consequence that spaces of measurements are complete lattices and that the join operation in any arity is continuous. As we shall see, this turns spaces of measurements into topological sup-lattices, whose localic analogues have been studied in~\cite{RV}. 

\subsection{Disjunctions = continuous joins}\label{sec:continuousjoins}

Consider the two variants of the measurement of a spin-1/2 particle along $z$, as depicted in Fig.\ \ref{fig:zupzdown}.
\begin{figure}[h!]
\begin{center}\includegraphics[width=0.4\textwidth]{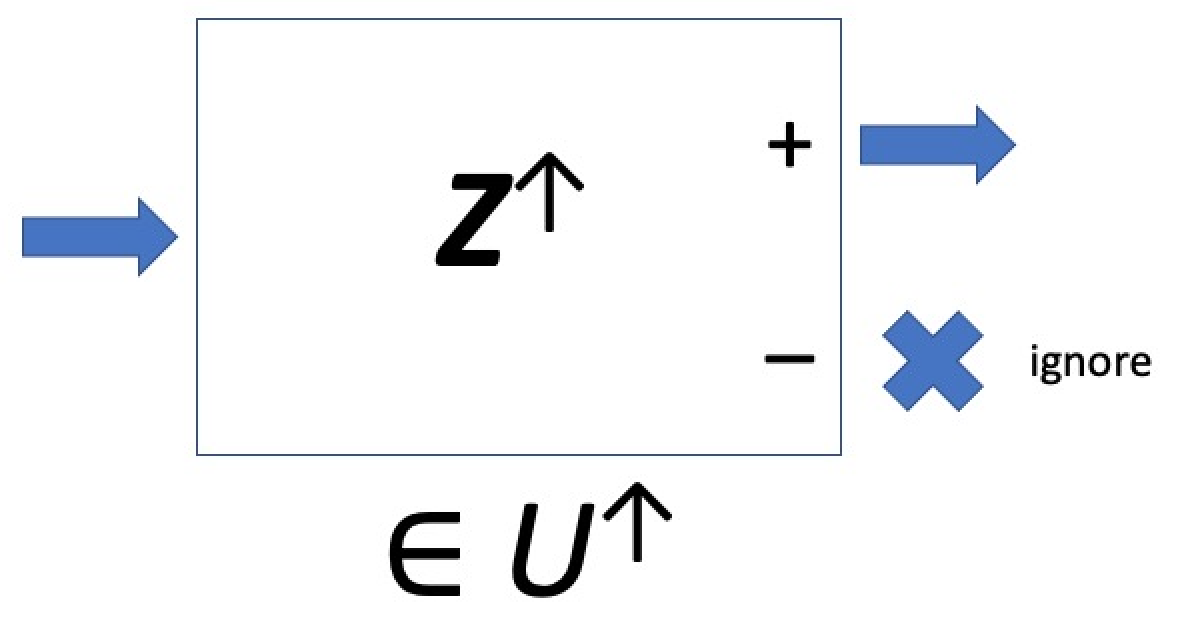}\quad\quad\quad\includegraphics[width=0.4\textwidth]{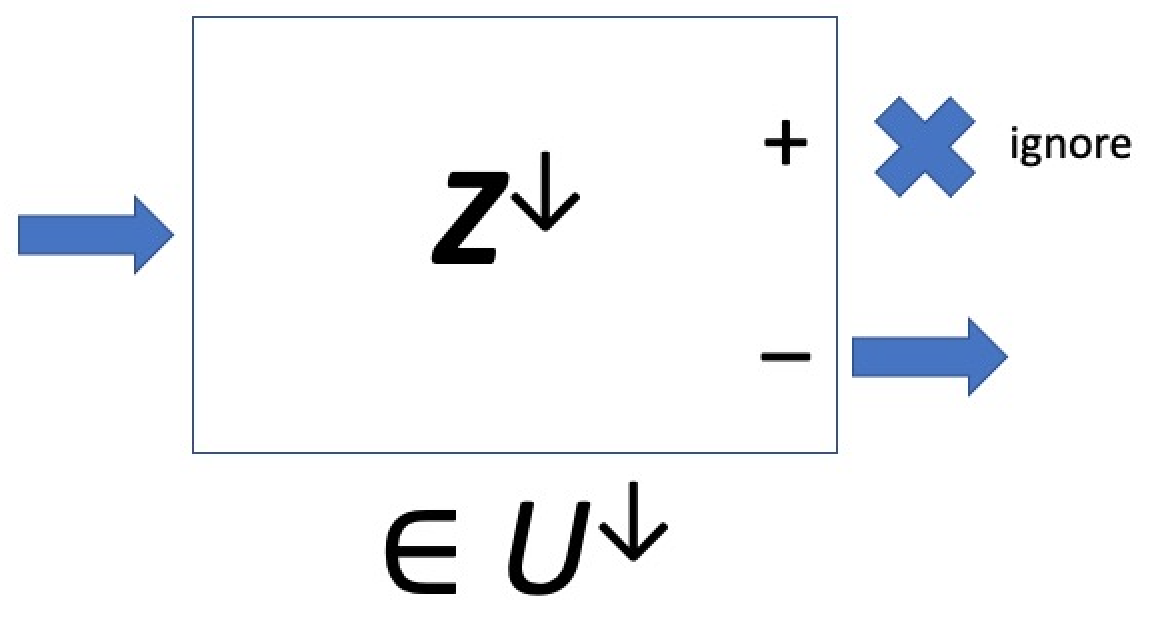}\caption{
The measurement $\boldsymbol z^\spinup$ is compatible with the observable property $U^\spinup$ but not with $U^\spindown$, and $\boldsymbol z^\spindown$ is only compatible with $U^\spindown$.
}\label{fig:zupzdown}\end{center}
\end{figure}
The measurement $\boldsymbol z$ of Fig.\ \ref{fig:SG} is ``richer'' in the sense that both $U^\spinup$ and $U^\spindown$ can be read from it, whereas only $U^\spinup$ can be read from $\boldsymbol z^\spinup$ and only $U^\spindown$ can be read from $\boldsymbol z^\spindown$. In other words, $\boldsymbol z$ is compatible with \emph{more} observable properties than either $\boldsymbol z^\spinup$ or $\boldsymbol z^\spindown$. This makes it a \emph{less determined} measurement than either of the other two. In fact a run of $\boldsymbol z$ can be said to be either a run of $\boldsymbol z^\spinup$ or a run of $\boldsymbol z^\spindown$, and thus any property that can be observed in the course of a single run of $\boldsymbol z$ should be observable by running $\boldsymbol z^\spinup$ once or by running $\boldsymbol z^\spindown$ once. So we can regard $\boldsymbol z$ as a logical \emph{disjunction} of $\boldsymbol z^\spinup$ and $\boldsymbol z^\spindown$, which we can write as follows:
\[
\boldsymbol z=\boldsymbol z^\spinup\vee \boldsymbol z^\spindown.
\]
Let us examine this idea a bit further. If $V$ is compatible with $\boldsymbol z^\spinup$ then it must also be compatible with $\boldsymbol z$, since the former is a particular way of doing the latter. If, in addition, $W$ is compatible with $\boldsymbol z^\spindown$, then $V\cap W$ is compatible with $\boldsymbol z$ --- this can be verified by running $\boldsymbol z$ at least twice. But not all the observable properties of $\boldsymbol z$ should be intersections of this form. Rather, whatever the observable properties of $\boldsymbol z$ are, the finite information they represent should be contained in information that is obtained from multiple runs of $\boldsymbol z^\spinup$ or $\boldsymbol z^\spindown$. In other words, any property $U$ which is compatible with $\boldsymbol z$ must be implied by an intersection $V\cap W$ of properties such that $V$ is compatible with $\boldsymbol z^\spinup$ and $W$ is compatible with $\boldsymbol z^\spindown$.

Based on these ideas, let us give a precise definition of disjunction:
\begin{definition}
Let $m$ and $n$ be measurements. A measurement $m\vee n$ is a \emph{disjunction} of $m$ and $n$ if the following conditions hold:
\begin{enumerate}
\item\label{dis1} Any observable property which is compatible with $m$ is also compatible with $m\vee n$.
\item\label{dis2} Any observable property which is compatible with $n$ is also compatible with $m\vee n$.
\item\label{dis3} If an observable property $U$ is compatible with $m\vee n$ then there are observable properties $V$ and $W$, respectively compatible with $m$ and $n$, such that $V\cap W\subset U$.
\end{enumerate}
\end{definition}

As stated above, it will be useful to assume that every pair of measurements has a disjunction. The consequences of this requirement are summarized as follows:

\begin{theorem}
Let $M$ be a sober space of measurements. The following conditions are equivalent:
\begin{enumerate}
\item\label{thm:dis1} Every pair $m,n\in M$ has a disjunction $m\vee n$.
\item\label{thm:dis2} Every pair $m,n\in M$ has a join $m\vee n$ in the specialization order of $M$, and the operation ${\vee}:M\times M\to M$ is continuous.
\end{enumerate}
Moreover, when the above equivalent conditions hold the disjunction of $m$ and $n$ is unique and it coincides with the join $m\vee n$.
\end{theorem}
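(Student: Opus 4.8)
The plan is to translate the three defining conditions of a disjunction into statements about open neighbourhoods and then read off the join and its continuity almost directly. Write $N(m)=\{U\text{ open}\mid m\in U\}$ for the neighbourhood filter of a measurement $m$; since ``$U$ is compatible with $m$'' means $m\in U$, and since the specialization order satisfies $m\le n\iff N(m)\subseteq N(n)$, conditions~\ref{dis1} and~\ref{dis2} assert exactly that $m\vee n$ is an upper bound of $m$ and $n$, while condition~\ref{dis3} asserts that every neighbourhood of $m\vee n$ contains some $V\cap W$ with $V\in N(m)$ and $W\in N(n)$. The observation that drives everything is that \ref{dis1}--\ref{dis3} together pin down $N(m\vee n)$ as the filter generated by $N(m)\cup N(n)$, that is $\{U\mid\exists V\in N(m),\ W\in N(n),\ V\cap W\subseteq U\}$: conditions \ref{dis1},\ref{dis2} give one inclusion and \ref{dis3} the other.

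For \ref{thm:dis1}$\Rightarrow$\ref{thm:dis2} I would argue first that the disjunction is the \emph{least} upper bound: any upper bound $q$ has $N(m)\cup N(n)\subseteq N(q)$, hence $N(m\vee n)\subseteq N(q)$ by the generated-filter description, i.e.\ $m\vee n\le q$, so $m\vee n$ is the join. Continuity of ${\vee}\colon M\times M\to M$ is then read straight off condition \ref{dis3}: for open $U$ one has $(m,n)\in{\vee}^{-1}(U)$ iff some $V\in N(m)$, $W\in N(n)$ satisfy $V\cap W\subseteq U$, and then the rectangle $V\times W$ is an open neighbourhood of $(m,n)$ contained in ${\vee}^{-1}(U)$, because the same pair $V,W$ witnesses membership for every point of $V\times W$. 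Hence ${\vee}^{-1}(U)$ is open.

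For \ref{thm:dis2}$\Rightarrow$\ref{thm:dis1}, the join is an upper bound, so \ref{dis1} and \ref{dis2} hold, and only \ref{dis3} uses continuity. Given open $U$ with $m\vee n\in U$, continuity makes ${\vee}^{-1}(U)$ an open neighbourhood of $(m,n)$, so it contains a basic open $V\times W$ with $m\in V$ and $n\in W$. The one non-routine step, which I expect to be the crux, is to pass to the diagonal: for every $x\in V\cap W$ we have $(x,x)\in V\times W\subseteq{\vee}^{-1}(U)$, so $x\vee x\in U$, and since joins are idempotent $x\vee x=x$; therefore $x\in U$, giving $V\cap W\subseteq U$, which is \ref{dis3}. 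Uniqueness is then immediate: every disjunction has neighbourhood filter equal to the generated filter, so by $T_0$-separation any two disjunctions coincide; equivalently, each one is the join, which is unique. The whole proof is short, its only genuinely clever ingredient being the idempotency--diagonal trick that converts a product-topology box back into the intersection $V\cap W$.
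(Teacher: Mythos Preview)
Your proof is correct and follows essentially the same route as the paper's. The only difference is packaging: the paper records once and for all the identity $V\cap W = \{v\vee w\mid v\in V,\ w\in W\}$ for open sets $V,W$, and then uses it in both directions to identify condition~\ref{dis3} with continuity of $\vee$ at $(m,n)$; your ``idempotency--diagonal trick'' is exactly the $\supseteq$ half of that identity, and your argument that the rectangle $V\times W$ lands in $\vee^{-1}(U)$ is the $\subseteq$ half. So the two proofs coincide up to the choice of whether to state the identity explicitly.
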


\begin{proof}
Let us prove that \eqref{thm:dis1} implies \eqref{thm:dis2}. Assume \eqref{thm:dis1}. The first two conditions in the definition of disjunction are immediately equivalent to the statements that $m\le m\vee n$ and $n\le m\vee n$ in the specialization order. Let $u$ be another upper bound of $m$ and $n$, and let $U$ be an open set containing $m\vee n$. The third condition in the definition of disjunction implies that there are open sets $V$ and $W$ such that $m\in V$, $n\in W$, and $V\cap W\subset U$, from which it follows that $u\in U$. Hence, we have $m\vee n\le u$, and thus $m\vee n$ is the join of $m$ and $n$ (unique because $M$ is a $T_0$ space). So we have concluded from \eqref{thm:dis1} that $M$ is a join semilattice.

It is easy to see that the intersection of each pair of open sets $V$ and $W$ coincides with their pointwise join:
\[
V\cap W = V\vee W := \{m\vee n\st m\in V\text{ and }n\in W\}.
\]
Therefore the third condition in the definition of disjunction translates precisely to the statement that $\vee:M\times M\to M$ is a continuous operation at the point $(m,n)$, so \eqref{thm:dis1} implies that $\vee$ is everywhere continuous.

Now let us prove that \eqref{thm:dis2} implies \eqref{thm:dis1}. Assume \eqref{thm:dis2}, and let $m,n\in M$. We prove that $m$ and $n$ have a disjunction by proving that the join $m\vee n$ is a disjunction. The first two conditions of the definition of disjunction are immediate because they are equivalent to the statement that $m\vee n$ is an upper bound of $\{m,n\}$. For the third condition, let $U$ be an open set and $m\vee n\in U$.
The continuity of $\vee$ ensures that there is a basic open set $V\times W$ of $M\times M$ containing $(m,n)$ such that the pointwise join $V\vee W$ is contained in $U$, which, due to the equality $V\cap W=V\vee W$, is precisely the third condition in the definition of disjunction. So the join $m\vee n$ is the (necessarily unique) disjunction of $m$ and $n$.
\end{proof}

In~\cite{fop} it is further required that the specialization order of any sober space of measurements should have a least element (the \emph{impossible measurement}), hence making it a join semi-lattice with zero (in fact a complete lattice, as we shall see below).

\subsection{Sober lattices}

Let us study the mathematical properties of the spaces of measurements that we have been discussing, in particular requiring every pair of measurements to have a disjunction:

\begin{definition} \cite{fop}
By a \emph{sober lattice} is meant a sober topological space $L$ whose specialization order has a least element $0$ and a join $x\vee y$ of each pair of elements $x,y\in L$, such that the binary join operation
\[
\vee:L\times L\to L
\]
is continuous.
A \emph{homomorphism} of sober lattices $f:L\to M$ is a continuous map that preserves binary joins and $0$:
\begin{itemize}
\item $f(x\vee y)=f(x)\vee f(y)$ for all $x,y\in L$;
\item $f(0)=0$.
\end{itemize}
The resulting \emph{category of sober lattices} is denoted by $\SobL$. An \emph{embedding} of sober lattices is a homomorphism which is also a topological embedding.
\end{definition}

\begin{example}\label{exm:2}
The two element chain $\boldsymbol 2:=\{0,1\}$ ($0<1$) is a sober lattice with the Alexandrov topology $\topology(\boldsymbol 2)=\bigl\{\emptyset,\{1\},\{0,1\}\bigr\}$. Regarded as a space of measurements, it contains a unique non-impossible measurement, $1$, which can be thought of as the mere detection of a system without asserting any additional properties.
\end{example}

\begin{theorem}
Let $L$ and $M$ be sober lattices. The cartesian product $L\times M$, equipped with the product topology, is a sober lattice. Moreover, it is the biproduct of $L$ and $M$ in $\SobL$ (so we also refer to $L\times M$ as the \emph{direct sum} of $L$ and $M$, and denote it by $L\oplus M$).
\end{theorem}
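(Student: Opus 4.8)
The plan is to establish the two claims separately, with the continuity requirements carrying the real content in each.

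First I would show $L\times M\in\SobL$. Sobriety is the only non-formal point: the full subcategory of sober spaces is reflective in $\mathsf{Top}$ (the soberification being the reflector), hence closed under limits, so the product space $L\times M$ is sober; alternatively one checks by hand that every completely prime filter $F$ on the frame $\topology(L\times M)$ restricts along the two projections to completely prime filters $\{U\mid U\times M\in F\}$ and $\{V\mid L\times V\in F\}$ on $\topology(L)$ and $\topology(M)$, whose unique points (by sobriety of $L$ and $M$) assemble into the unique point of $L\times M$ with neighbourhood filter $F$. The remaining data is order-theoretic: since $\overline{\{(x,y)\}}=\overline{\{x\}}\times\overline{\{y\}}$ in a product topology, the specialization order of $L\times M$ is the product order, so $(0,0)$ is least and $(x,y)\vee(x',y')=(x\vee x',y\vee y')$ is the binary join, computed componentwise. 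Continuity of this join then follows by factoring $\vee\colon(L\times M)\times(L\times M)\to L\times M$ as the evident rearrangement homeomorphism onto $(L\times L)\times(M\times M)$ followed by $\vee_L\times\vee_M$, both of which are continuous.

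Next I would set up the biproduct structure maps: the product projections $p_L,p_M$, and the injections $i_L(x)=(x,0)$ and $i_M(y)=(0,y)$. Each of the four maps is continuous and preserves $0$ and binary joins, hence is a morphism of $\SobL$, and the biproduct identities $p_Li_L=\ident_L$, $p_Mi_M=\ident_M$, $p_Li_M=0=p_Mi_L$ and $i_Lp_L\vee i_Mp_M=\ident_{L\times M}$ are immediate componentwise computations (the last using $(x,0)\vee(0,y)=(x,y)$). To promote these identities to a genuine biproduct I would verify the two universal properties directly. The product property is routine: for $f\colon N\to L$ and $g\colon N\to M$ in $\SobL$ the map $n\mapsto(f(n),g(n))$ is the unique mediating morphism, continuous by the universal property of the product topology and join/$0$-preserving because $f,g$ are.

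The coproduct property is where the sober-lattice axioms become essential, and the continuity check there is the step I expect to be the main obstacle. Given $f\colon L\to N$ and $g\colon M\to N$, the relation $h(x,y)=h\bigl((x,0)\vee(0,y)\bigr)=f(x)\vee g(y)$ both forces the formula for the mediating morphism and proves its uniqueness; that $h$ preserves $0$ and binary joins follows from commutativity and idempotency of $\vee_N$. Its continuity, however, is \emph{not} automatic from the product topology: it must be obtained by factoring $h$ as $L\times M\xrightarrow{f\times g}N\times N\xrightarrow{\vee_N}N$ and invoking the continuity of $\vee_N$ --- exactly the axiom that distinguishes sober lattices from plain ordered spaces. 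With both universal properties established and the two families of structure maps compatible, $L\times M$ is at once the product and the coproduct of $L$ and $M$, that is, their biproduct, which justifies writing it $L\oplus M$.
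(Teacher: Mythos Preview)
Your proof is correct and follows essentially the same route as the paper: sobriety of the product via closure under limits, componentwise joins with continuity obtained by rearranging factors, and the copairing map $h(x,y)=f(x)\vee g(y)$ made continuous by factoring it as $\vee_N\circ(f\times g)$. The paper is terser---it cites the sup-lattice biproduct from Joyal--Tierney and then only checks continuity---whereas you spell out the universal properties from scratch, but the substance is the same.
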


\begin{proof}
The binary joins in the pairwise order of $L\times M$ are computed as
\[
(x,y)\vee(z,w) = (x\vee z,y\vee w),
\]
they are continuous in the product topology, there is a least element $(0,0)$, and the product of sober spaces is sober (the latter follows from the fact that the spectrum functor from locales to sober spaces is a right adjoint and therefore preserves products, \cf\ \cite{stonespaces}). Moreover, it is well known~\cite{JT} that $L\times M$ is a biproduct in the category of sup-lattices, whose projections and coprojections
\[
\xymatrix{
L\ar@<1ex>[rr]^{i_1}&&L\times M\ar@<1ex>[ll]^{\pi_1}\ar@<1ex>[rr]^{\pi_2}&&M\ar@<1ex>[ll]^{i_2}
}
\]
are defined by $\pi_1(x,y)=x$, $\pi_2(x,y)=y$, $i_1(x)=(x,0)$, and $i_2(y)=(0,y)$. Then, since $L\times M$ is a categorical product of topological spaces, it is a categorical product of sober lattices. In order to see that it is a coproduct in $\SobL$ it suffices to observe that the coprojections are obviously continuous, and that for any pair of homomorphisms of sober lattices
\[
\xymatrix{
L\ar[rr]^f&&N&& M\ar[ll]_g
}
\]
the copairing map $h=[f,g]:L\times M\to N$, which is defined by $h(x,y)=f(x)\vee g(y)$, is continuous.
\end{proof}

\begin{example}\label{exm:22}
Consider the chain $\boldsymbol 2$ of Example~\ref{exm:2}. The product $\boldsymbol 2^2:=\boldsymbol 2\times\boldsymbol 2$ is the following four element Boolean algebra:
\[
\xymatrix{
&(1,1)\ar@{-}[dl]\ar@{-}[dr]\\
(1,0)\ar@{-}[dr]&&(0,1)\ar@{-}[dl]\\
&(0,0).
}
\]
This can be regarded as the space of measurements of a classical bit, where for instance $(1,0)$ is ``one'' and $(0,1)$ is ``zero.'' Similarly, the Boolean algebra $\boldsymbol 2^n$ is a model of the measurements on a classical $n$ state system (however not containing every possible measurement on such a system, but rather only measurements of ``static type'' --- \cf\ section~\ref{sec:groupoids}).
\end{example}

\begin{theorem}\label{arbitrarycontinuousjoins}
The specialization order of any sober lattice $L$ is a sup-lattice, and for any set $J$ the join operation $\V:L^J\to L$ is continuous with respect to the product topology of $L^J$. Moreover, homomorphisms of sober lattices preserve arbitrary joins.
\end{theorem}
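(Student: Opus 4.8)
The plan is to treat the three assertions in turn, leaning on the two consequences of sobriety recorded in Section~\ref{sec:prelim}: the specialization order is a dcpo, and the topology is contained in the Scott topology.

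First I would show that $L$ is a sup-lattice. Given an arbitrary subset $S\subseteq L$, I would form the family $D$ of all finite joins $\bigvee F$ with $F\subseteq S$ finite (taking $\bigvee\emptyset=0$). Because $\bigvee F_1\vee\bigvee F_2=\bigvee(F_1\cup F_2)$, the family $D$ is directed, so its join $\bigvee D$ exists by directed completeness; a routine check shows that $\bigvee D$ is an upper bound of $S$ lying below every upper bound of $S$, whence $\bigvee S=\bigvee D$. Together with the least element $0$ this makes every subset have a join, so $L$ is a sup-lattice.

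Next, the continuity of $\V:L^J\to L$. Binary join is continuous by hypothesis, so by an easy induction every finite join $\vee_F\colon L^F\to L$ is continuous. I would fix an open set $U\subseteq L$ and a point $x=(x_j)_j\in\V^{-1}(U)$, so that $\bigvee_j x_j\in U$. Writing this element as the directed join of the partial joins $\bigvee_{j\in F}x_j$ over finite $F\subseteq J$, and using that $U$ is Scott-open (since the topology is contained in the Scott topology), I obtain a finite $F_0\subseteq J$ with $\bigvee_{j\in F_0}x_j\in U$. Continuity of $\vee_{F_0}$ then provides a basic open box $\prod_{j\in F_0}V_j$ around $(x_j)_{j\in F_0}$ whose image under $\vee_{F_0}$ lies in $U$; extending by $L$ in all other coordinates yields a basic product-open neighbourhood $W$ of $x$. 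Since the total join dominates the partial join over $F_0$ and $U$ is upwards closed, $\V(W)\subseteq U$, whence $\V^{-1}(U)$ is open.

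Finally, let $f:L\to M$ be a homomorphism of sober lattices. I would first observe that $f$, being continuous, is monotone for the specialization orders, so $f(D)$ is directed and hence $f$ preserves directed joins: for directed $D$ one has $\bigvee f(D)\le f(\bigvee D)$, and were the inequality strict the open set $M\setminus\overline{\{\bigvee f(D)\}}$ would contain $f(\bigvee D)$, so Scott-openness of its preimage would force some $f(d)\not\le\bigvee f(D)$, contradicting $f(d)\le\bigvee f(D)$. Combining this with the preservation of finite joins built into the definition of homomorphism, and decomposing an arbitrary join as the directed join of finite subjoins exactly as in the first step, gives $f(\bigvee S)=\bigvee f(S)$ for every $S$. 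I expect the continuity of $\V$ to be the main obstacle: product-topology neighbourhoods constrain only finitely many coordinates, so the real content is the reduction of an arbitrary join to a finite subjoin, which is precisely what Scott-openness of $U$ delivers.
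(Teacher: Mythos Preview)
Your proof is correct and follows essentially the same route as the paper's: both arguments reduce arbitrary joins to directed joins of finite subjoins, invoke Scott-openness of the opens (a consequence of sobriety) to extract a finite $F_0$, and then use the continuity of finite joins together with upper-closedness to build the required product-open neighbourhood. The only cosmetic difference is that you spell out why a continuous map between sober spaces preserves directed joins, whereas the paper takes this as a known fact.
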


\begin{proof}
Sober spaces have joins of directed sets, and thus the existence of $0$ (the $0$-ary join) and of binary joins implies that all joins exist: given a set $S\subset L$, the join $\V S$ equals $\V \check S$, where $\check S$ is the directed set
\[
\check S=\{0\}\cup S\cup\{x\vee y\st x,y\in S\}\cup\{x\vee y\vee z\st x,y,z\in S\}\cup\ldots
\]
Moreover, continuous maps of sober spaces preserve directed joins, and thus a homomorphism of sober lattices preserves all joins. It remains to verify that joins are continuous for arbitrary arities.
For $J$ of cardinality $0$ or $1$ this is trivial, and for finite cardinalities $\ge 2$ the continuity of joins is obtained by iterating binary joins. So let us consider an infinite set $J$, and write $\check J$ for the set of finite subsets of $J$. For each family $x\in L^J$ and for each $F\in\check J$ write $\check x_F$ for the finite join $\V_{j\in F} x_j$ (in particular $\check x_\emptyset=0$). This defines a family $\check x\in L^{\check J}$ such that
\[
\V_{F\in\check J} \check x_F = \V_{j\in J} x_j.
\]
In order to prove the continuity of $J$-indexed joins, let $\V_{j\in J} x_j\in U$ for an open $U\subset L$. Then $\V_{F\in\check J} \check x_F\in U$ and, since the set $\{\check x_F\st F\in\check J\}$ is directed and $U$ is Scott open (because $L$ is sober), there is $F\in\check J$ such that $\check x_F\in U$. Since the join operation is continuous for all finite arities, for each $j\in F$ there is an open $U_j\subset L$ such that $\V_{j\in F}y_j\in U$ for all $y\in\prod_{j\in F}U_j$. Now consider the basic open $V:=\prod_{j\in J} V_j$ of $L^J$ such that $V_j=U_j$ for all $j\in F$, and $V_j=L$ for $j\notin F$, and let $y\in V$. Then $\V_{j\in J} y_j\ge \V_{j\in F}y_j\in U$, and thus
$\V_{j\in J} y_j\in U$ because $U$ is upper-closed. Therefore $\V:L^J\to L$ is continuous.
\end{proof}

The following is a simple but useful property:

\begin{lemma}\label{tinyprop}
Let $L$ be a sober lattice, and $S\subset L$ any set. Then
\[
\V S = \V\overline S.
\]
\end{lemma}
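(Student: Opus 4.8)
The plan is to establish the two inequalities $\V S\le\V\overline S$ and $\V\overline S\le\V S$ separately. The first is immediate and requires no topology: since $S\subseteq\overline S$ and forming the join is monotone in the specialization order (a join of a larger set dominates a join of a smaller one, because every upper bound of the larger set bounds the smaller), we obtain $\V S\le\V\overline S$ at once. Note that all the relevant joins exist by Theorem~\ref{arbitrarycontinuousjoins}.

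For the reverse inequality I would set $a:=\V S$ and exploit the defining link between the topology and the specialization order. By definition $m\le n$ iff $m\in\overline{\{n\}}$, so the principal down-set $\downarrow a=\{m\in L\st m\le a\}$ coincides with $\overline{\{a\}}$, the closure of the singleton $\{a\}$; in particular $\downarrow a$ is a \emph{closed} set. Since $a=\V S$ is an upper bound of $S$, we have $S\subseteq\downarrow a=\overline{\{a\}}$. Taking closures and using that $\overline{\{a\}}$ is already closed gives $\overline S\subseteq\overline{\{a\}}=\downarrow a$; that is, every element of $\overline S$ lies below $a$. Hence $a$ is an upper bound of $\overline S$, so $\V\overline S\le a=\V S$. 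Combining the two inequalities yields the claimed equality.

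The one point requiring care, which I would flag as the crux, is the observation that the down-set of the \emph{single} element $\V S$ is closed: this is false for arbitrary down-sets in a sober space, but holds here precisely because a principal down-set equals the closure of a one-point set, directly from the definition of the specialization order. Everything else is routine, and in particular the argument needs neither the continuity of joins nor completeness beyond the mere existence of $\V S$.
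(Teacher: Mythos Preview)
Your proof is correct and follows essentially the same approach as the paper's own proof: both establish the nontrivial inequality $\V\overline S\le\V S$ by noting that $S\subset\downsegment(\V S)$, that $\downsegment(\V S)=\overline{\{\V S\}}$ is closed, and hence that $\overline S\subset\downsegment(\V S)$. The paper's version is terser, but the content is identical.
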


\begin{proof}
For any $s\in S$ we have $s\le\V S$, so $S\subset\downsegment(\V S)$. Then, since $\downsegment(\V S)$ is closed (it is the closure of the singleton $\{\V S\}$), we also have $\overline S\subset\downsegment(\V S)$, which implies
$\V\overline S\le\V S$. The converse inequality is trivial.
\end{proof}

We conclude this introduction to sober lattices by pointing out that the requirement of continuity of binary joins is not trivial:

\begin{example}
Not all sober spaces are sober lattices, even if their specialization orders have all joins. An example of this is the space
\[
X = \{0\}\cup (\{0\}\times [0,1])\cup\{2\}\cup(\{1\}\times[0,1])\cup\{1\},
\]
whose specialization order is the least partial order that satisfies the following conditions:
\begin{itemize}
\item $0$ is the least element and $1$ is the top element;
\item $(0,x)<(1,x)$ for all $x\in[0,1]$;
\item $2<(1,x)$ for all $x\in[0,1]$.
\end{itemize}
This is an infinite sup-lattice, which can be represented as follows:
\[
\xymatrix{
&&1\ar@{-}[dll]\ar@{-}[dl]\ar@{-}[d]\\
(1,0)\ar@{-}[d]\ar@{-}[drrr]\ar@{.}[r]&(1,x)\ar@{-}[drr]\ar@{-}[d]\ar@{.}[r]&(1,1)\ar@{-}[d]\ar@{-}[dr]\\
(0,0)\ar@{-}[drr]\ar@{.}[r]&(0,x)\ar@{-}[dr]\ar@{.}[r]&(0,1)\ar@{-}[d]&2\ar@{-}[dl]\\
&&0
}
\]
For instance, we have the following joins:
\begin{itemize}
\item $(m,x)\vee(n,y)=1$ if $x\neq y$;
\item $(m,x)\vee 2=(1,x)$.
\end{itemize}
A basis for the topology consists of the following sets:
\begin{itemize}
\item $X$;
\item $\{(1,x),1\}$ for each $x\in[0,1]$;
\item $(\{0,1\}\times U)\cup\{1\}$ for each open set $U$ of the real interval $[0,1]$;
\item $\{2\}\cup(\{1\}\times[0,1])\cup\{1\}$ (the principal filter generated by $2$);
\item $\{1\}$.
\end{itemize}
(Notice that the subspace $\{0\}\times[0,1]$ is homeomorphic to the real interval $[0,1]$, whereas $\{1\}\times[0,1]$ is discrete.) It is not hard to see that the only irreducible closed sets of $X$ are the principal ideals, so $X$ is a sober space. However, it is not a sober lattice: we have $(1,1)=(0,1)\vee 2$, and $(1,1)$ is contained in the open set $U=\{(1,1),1\}$; but the only open set $V$ containing $2$ contains all the points $(1,x)$, and any open set $W$ containing $(0,1)$ must contain elements $(1,x)$ with $x<1$, so the intersection $V\cap W$ contains such elements, too, showing that $V\cap W\not\subset U$. Hence, the binary join operation is not continuous at the point $((0,1),2)$ [in fact it is not continuous at any point $((0,x),2)$].
\end{example}

\begin{remark}
Let $L$ be a sober lattice and $S\subset L$ a subset, possibly infinite. Analogously to the interpretation of $x\vee y$ as the disjunction of $x,y\in L$, one may ask whether the join $\V S$ can be regarded as the disjunction of all the measurements in $S$. The rationale given above for binary disjunctions can of course be iterated for arbitrary finitary disjunctions, namely, in the absence of more specific information, asserting an observable property $U$ of $m:=x_1\vee\ldots\vee x_k$ may need at least $k$ runs of the measurement $m$: one run for measuring a property $V_i$ containing $x_i$ for each $i\in\{1,\ldots,k\}$, such that $\bigcap_{i=1}^k V_i\subset U$. If $S$ has infinite order the join $\V S$ can still be interpreted as a disjunction, but one does not need infinitely many runs of the measurement $\V S$ (which would be inconsistent with the requirement of finite observability of $U$): writing $\check S$ for the directed set
\[
\check S := \{\V F\st F\subset S\text{ and }F\text{ is finite}\},
\]
(\cf\ Theorem~\ref{arbitrarycontinuousjoins}), we have
\[
\V S = \V \check S
\]
and thus, since the topology of any sober lattice is contained in the Scott topology, the condition $\V S\in U$ is equivalent to the existence of a finite set $F\subset S$ such that $\V F\in U$, so any property of $\V S$ can be measured by running a finite disjunction $\V F$ of elements of $S$.
\end{remark}

\subsection{Quantum superpositions}

\paragraph{Linear powerspaces.}

Let us get a little closer to the standard formalism of quantum mechanics by considering quantum superpositions defined in complex vector spaces. In what follows, $A$ will be an arbitrary locally convex space. For the purposes of this section, $A$ can be thought of as being a Hilbert space, although in section~\ref{sec:Cstar} the natural examples will be seen to be C*-algebras (\cf\ Example~\ref{exm:hilbrep} and Example~\ref{exm:Cstarrep} below). Hence, borrowing the notation of~\cites{Curacao,MP1}, let us denote by $\Max A$ the set of all the closed linear subspaces of $A$.
This will always be assumed to be a topological space whose topology is the lower Vietoris topology, which is generated by a subbasis formed by the sets of the form
\[
\ps U = \{P\in\Max A\st P\cap U\neq\emptyset\}
\]
where $U$ is an open set of $A$.

\begin{theorem}
If $A$ is a locally convex space then $\Max A$ is a sober lattice whose specialization order is the inclusion order. Moreover, if $A$ is separable so is $\Max A$.
\end{theorem}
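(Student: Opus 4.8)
The plan is to verify in turn the four clauses in the definition of sober lattice and then to treat separability separately. I would first dispose of the order-theoretic bookkeeping. For closed subspaces $P,Q$, reading off the subbasis shows that $P\in\overline{\{Q\}}$ if and only if $P\cap U\neq\emptyset\Rightarrow Q\cap U\neq\emptyset$ for every open $U\subseteq A$; letting $U$ run through neighbourhoods of a point of $P$ and using that $Q$ is closed, this is equivalent to $P\subseteq Q$. Hence the specialization order is inclusion and $\Max A$ is $T_0$. The least element is the smallest closed subspace $\overline{\{0\}}$, the binary join is the closed sum $P\vee Q=\overline{P+Q}$, and more generally $\V C=\overline{\spanmap\bigcup_{Q\in C}Q}$, so $\Max A$ is a join-semilattice with least element (indeed a complete lattice under inclusion).

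The substance of the theorem is sobriety, and everything hinges on one elementary observation: if $y=q_1+\cdots+q_k\in U$ with $U$ open and $q_j\in Q_j$, then continuity of addition yields opens $W_j\ni q_j$ with $W_1+\cdots+W_k\subseteq U$, so that any choice of vectors $q_j'\in W_j$ drawn from subspaces $Q_j'$ produces $q_1'+\cdots+q_k'\in U$ lying in $\sum_j Q_j'$. Now let $C$ be an irreducible closed set and set $P:=\V C$. Since closed sets are down-closed and $P$ is an upper bound of $C$, it is enough to prove $P\in C$, whence $C=\downsegment P=\overline{\{P\}}$ (uniqueness of the generic point being automatic from $T_0$). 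Because $C$ is closed, $P\in C$ follows once every basic neighbourhood $\ps{U_1}\cap\cdots\cap\ps{U_n}$ of $P$ meets $C$, and by irreducibility this reduces to proving $\ps U\cap C\neq\emptyset$ whenever $P\cap U\neq\emptyset$. Given such a $U$, a point $x\in P\cap U$ lies in $\overline{\spanmap\bigcup_{Q\in C}Q}$, so some finite sum $y=q_1+\cdots+q_k\in U$ with $q_j\in Q_j\in C$; applying the observation, each $\ps{W_j}$ meets $C$, so irreducibility produces $Q\in C$ with $Q\cap W_j\neq\emptyset$ for all $j$, and the corresponding sum lands in $Q\cap U$, giving $Q\in\ps U\cap C$.

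Continuity of $\vee$ is the same manoeuvre: for a subbasic $\ps U$ with $(P\vee Q)\cap U\neq\emptyset$, I would approximate a point of $\overline{P+Q}\cap U$ by some $p+q\in U$ with $p\in P$, $q\in Q$, choose opens $W_1\ni p$, $W_2\ni q$ with $W_1+W_2\subseteq U$, and note that $\ps{W_1}\times\ps{W_2}$ is a neighbourhood of $(P,Q)$ sent into $\ps U$ by $\vee$; as continuity need only be tested on the subbasis, $\vee$ is continuous and $\Max A$ is a sober lattice. For separability, take a countable dense $D\subseteq A$ and let $\mathcal D$ be the countable family of closed spans of finite subsets of $D$. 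Given a nonempty basic open $\ps{U_1}\cap\cdots\cap\ps{U_n}$, a point in it meets each $U_i$, so each $U_i$ is nonempty and contains some $d_i\in D$; then $\overline{\spanmap\{d_1,\ldots,d_n\}}\in\mathcal D$ meets every $U_i$ and hence lies in the basic open. Thus $\mathcal D$ is dense.

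The hard part is the sobriety argument, and precisely the passage from an approximating finite sum, whose terms may come from different members of $C$, to a single member of $C$ meeting $U$. This is where irreducibility is indispensable --- a finite join of elements of $C$ need not in general belong to $C$ --- and it is what forces the topological vector space structure into play rather than purely lattice-theoretic reasoning.
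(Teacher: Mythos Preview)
Your proof is correct. The arguments for the specialization order, the continuity of $\vee$, and separability are essentially identical to the paper's. The one genuine difference is sobriety: the paper simply cites \cite{RS2}, whereas you give a direct self-contained argument. Your approach --- setting $P=\V C$ for an irreducible closed $C$, reducing membership $P\in C$ to the subbasic case via irreducibility, then using continuity of addition to pass from a finite sum $q_1+\cdots+q_k$ with $q_j\in Q_j\in C$ to a single $Q\in C$ meeting $U$ --- is clean and works exactly as you describe. In fact your sobriety argument uses only the topological vector space structure (continuity of addition and closure under scalars), not local convexity, so it is slightly more general than what the statement asks for. The trade-off is that the paper's citation keeps the proof short, while your version makes the argument transparent and independent of the external reference; in particular your remark that irreducibility is indispensable precisely at the step where the summands $q_j$ come from \emph{different} members of $C$ correctly identifies the crux.
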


\begin{proof}
$\Max A$ is a sober space for any locally convex space $A$, by~\cite{RS2}.
Let us see that the specialization order $\le$ is the inclusion order. Let $P,Q\in\Max A$, and assume $P\subset Q$. In order to conclude $P\le Q$ we need to show that for all open sets $\mathcal U\subset\Max A$ the condition $P\in\mathcal U$ implies $Q\in\mathcal U$. It suffices to do this for a subbasic open set $\ps U$. Then the condition $P\in\ps U$ means that $P\cap U\neq\emptyset$, which implies $Q\cap U\neq\emptyset$, so $Q\in\ps U$ and we conclude $P\le Q$.

Now let us prove the converse. Assume $P\le Q$, and let $p\in P$. For any open set $U\subset A$ containing $p$ we have $P\in\ps U$, which implies $Q\in\ps U$, and thus $U\cap Q\neq\emptyset$. Hence, $p\in \overline Q=Q$, and thus $P\subset Q$.

So the specialization order is a sup-lattice, and all that is left to prove is that $\vee$ is continuous. Let $P,Q\in\Max A$ and $U$ an open set of $A$ such that $P\vee Q\in\ps U$. We have $P\vee Q=\overline{P+Q}$,
so there are $p\in P$ and $q\in Q$ such that
$p+q\in U$. By the continuity of the sum in $A$, there are neighborhoods
$V\ni p$ and $W\ni q$ such that the pointwise sum $V+W$ is contained in $U$. Then $\ps V\times\ps W$ is an open set of $\Max A\times\Max A$ that contains $(P,Q)$, and for all $(P',Q')\in\ps V\times\ps W$ there are $p'\in P'\cap V$ and $q'\in Q'\cap W$, so $p'+q'\in U$. Hence, $P'\vee Q'\in\ps U$, and we conclude that $\vee$ is continuous.

Finally, let us verify the statement about separability. Assume that $A$ is separable, and let $S\subset A$ be a countable dense subset. Let $\mathcal U$ be an open set of $\Max A$. It suffices to take $n\in\NN$ and $\mathcal U=\ps U_1\cap\ldots\cap\ps U_n$ for some open sets $U_1,\ldots, U_n\subset A$. For each $U_i$ let $a_i\in S\cap U_i$. Then $\langle a_1,\ldots, a_n\rangle\in\mathcal U$, and thus $\Max A$ is separable because the set of spans of finite subsets of $S$ is countable.
\end{proof}

Let us fix some terminology:

\begin{definition}
Let $A$ be a locally convex space. The sober lattice $\Max A$ is referred to as the \emph{linear powerspace} of $A$.
\end{definition}

Let $A$ be a locally convex space, and $a,b\in A$. The spans $\langle a\rangle$ and $\langle b\rangle$ are elements of $\Max A$, so let us think of them as being measurements. In $\Max A$ their join is the two-dimensional space $\langle a,b\rangle$. The linear combinations $\alpha a+\beta b\in \langle a,b\rangle$ with $\alpha,\beta\in\CC$ can be regarded as (non-normalized) quantum superpositions of $a$ and $b$.

\begin{example}\label{exm:hilbrep}
Consider the measurements $\boldsymbol z^\spinup$, $\boldsymbol z^\spindown$, and $\boldsymbol z=\boldsymbol z^\spinup\vee \boldsymbol z^\spindown$ described in section~\ref{sec:continuousjoins}. These can be represented by means of the usual Hilbert space representation of a qubit, namely by taking $A=\CC^2$ and
\begin{itemize}
\item $\boldsymbol z^\spinup=\langle(1,0)\rangle$,
\item $\boldsymbol z^\spindown=\langle(0,1)\rangle$,
\item $\boldsymbol z=\CC^2$.
\end{itemize}
Measurements of spin along $x$ can be defined in a similar way in terms of the  basis of eigenvectors of the Pauli matrix $\sigma_x$, and we obtain
\[
\boldsymbol x=\boldsymbol x^\spinup\vee \boldsymbol x^\spindown=\CC^2=\boldsymbol z.
\]
Hence, in this representation, a fragment of the order of $\Max A$ that contains the above measurements is the following:
\[
\xymatrix{
&&\boldsymbol x=\boldsymbol z\ar@{-}[dll]\ar@{-}[dl]\ar@{-}[dr]\ar@{-}[drr]\\
\boldsymbol x^\spinup\ar@{-}[drr]&\boldsymbol x^\spindown\ar@{-}[dr]&&\boldsymbol z^\spinup\ar@{-}[dl]&\boldsymbol z^\spindown\ar@{-}[dll]\\
&&0
}
\]
\end{example}

The coincidence of $\boldsymbol x$ and $\boldsymbol z$ in the above example is undesirable, since the classical information obtained from these two measurements is clearly different (it corresponds to deflections in different planes), so they ought to be distinct measurements. A better representation is based on taking $A$ to be the matrix algebra $M_2(\CC)$:

\begin{example}\cite{fop}\label{exm:Cstarrep}
Consider again the measurements of spin along $x$ and $z$ as in the previous example, but now represented in $\Max M_2(\CC)$ by taking $\boldsymbol z^\spinup$, $\boldsymbol z^\spindown$, $\boldsymbol x^\spinup$, and $\boldsymbol x^\spindown$ to be spans of projection matrices:
\begin{align*}
&\boldsymbol z^{\spinup} = \left\langle\left(\begin{array}{cc}1&0\\0&0\end{array}\right)\right\rangle
\quad\quad \boldsymbol z^{\spindown} =  \left\langle\left(\begin{array}{cc}0&0\\0&1\end{array}\right)\right\rangle\\
&\boldsymbol x^{\spinup} = \left\langle\left(\begin{array}{cc}1&1\\1&1\end{array}\right)\right\rangle\quad\quad
\boldsymbol x^{\spindown} =  \left\langle\left(\begin{array}{cc}1&-1\\-1&1\end{array}\right)\right\rangle
\end{align*}
Now $\boldsymbol z$ and $\boldsymbol x$ are distinct abelian subalgebras of $M_2(\CC)$, generated by the Pauli spin matrices $\sigma_z$ and $\sigma_x$:
\begin{align*}
\boldsymbol z = D_2(\CC) &=\left\langle \left(\begin{array}{cc}1&0\\0&1\end{array}\right)\ ,\ \left(\begin{array}{cc}1&0\\0&-1\end{array}\right)\right\rangle\\
\boldsymbol x &=\left\langle \left(\begin{array}{cc}1&0\\0&1\end{array}\right)\ ,\ \left(\begin{array}{cc}0&1\\1&0\end{array}\right)\right\rangle.
\end{align*}
\end{example}

\paragraph{Sup-linear maps.}

Let $A$ be a locally convex space. A join $\langle a\rangle\vee \langle b\rangle=\langle a,b\rangle$ in $\Max A$ conveys a notion of disjunction of $a$ and $b$ that consists of closure under quantum superpositions, and $\Max A$ can be regarded as the result of freely adding such disjunctions to (the projectivization of) $A$. Let us make the latter statement precise in terms of a universal property which is based on the following notion:

\begin{definition}
Let $A$ be a vector space, and $L$ a sup-lattice. A map $f:A\to L$ is \emph{sup-linear} if $f(0)=0$ and $f(\lambda a+b)\le f(a)\vee f(b)$ for all $a,b\in A$ and $\lambda\in\CC$.
\end{definition}

\begin{lemma}
Let $A$ be a vector space, $L$ a sup-lattice and $f:A\to L$ a map. The following conditions are equivalent:
\begin{enumerate}
\item\label{propsuplin1} $f$ is sup-linear;
\item\label{propsuplin2} $\V f\bigl(\langle S\rangle\bigr)\le\V f(S)$ for all $S\subset A$;
\item\label{propsuplin3} $\V f\bigl(\langle S\rangle\bigr)=\V f(S)$ for all $S\subset A$.
\end{enumerate}
Moreover, if $f$ is sup-linear then for all linear subspaces $P,Q\subset A$ we have
\begin{equation}\label{propsuplin4}
\V f(P+Q) = \V f(P)\vee\V f(Q).
\end{equation}
\end{lemma}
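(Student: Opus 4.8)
The plan is to establish the three equivalent conditions as a cycle of implications and then read off the final identity~\eqref{propsuplin4} from~\eqref{propsuplin3}. I would begin with the cheapest step. Condition~\eqref{propsuplin3} trivially gives~\eqref{propsuplin2}, while \eqref{propsuplin2} gives~\eqref{propsuplin3} because $S\subset\langle S\rangle$ forces $f(S)\subset f(\langle S\rangle)$ and hence $\V f(S)\le\V f(\langle S\rangle)$, which upgrades the inequality of~\eqref{propsuplin2} to the equality of~\eqref{propsuplin3}. Thus \eqref{propsuplin2} and \eqref{propsuplin3} are interchangeable, and I may use whichever is convenient.

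Next I would prove \eqref{propsuplin3}$\,\Rightarrow\,$\eqref{propsuplin1} by feeding well-chosen sets into~\eqref{propsuplin3}. Taking $S=\emptyset$ gives $\langle S\rangle=\{0\}$ and $\V f(\emptyset)=0$, so~\eqref{propsuplin3} reads $f(0)=0$. Taking $S=\{a,b\}$, the vector $\lambda a+b$ lies in $\langle S\rangle$, so $f(\lambda a+b)\le\V f(\langle S\rangle)=\V f(S)=f(a)\vee f(b)$, which is precisely sup-linearity.

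The substantive step is \eqref{propsuplin1}$\,\Rightarrow\,$\eqref{propsuplin2}. Assuming $f$ sup-linear, every element of $\langle S\rangle$ is a finite linear combination $\sum_{i=1}^n\lambda_i s_i$ with $s_i\in S$, and I would prove by induction on $n$ that $f\bigl(\sum_{i=1}^n\lambda_i s_i\bigr)\le\V f(S)$. The base case $n=0$ is just $f(0)=0\le\V f(S)$. For the inductive step I write
\[
\sum_{i=1}^{n+1}\lambda_i s_i=\lambda_{n+1}s_{n+1}+\sum_{i=1}^n\lambda_i s_i,
\]
which has the shape $\lambda a+b$ with $a=s_{n+1}$ and $b=\sum_{i=1}^n\lambda_i s_i$; sup-linearity then yields $f\bigl(\sum_{i=1}^{n+1}\lambda_i s_i\bigr)\le f(s_{n+1})\vee f\bigl(\sum_{i=1}^n\lambda_i s_i\bigr)$, where the first joinand is $\le\V f(S)$ since $s_{n+1}\in S$ and the second is $\le\V f(S)$ by the induction hypothesis. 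Hence every element of $f(\langle S\rangle)$ is bounded above by $\V f(S)$, giving $\V f(\langle S\rangle)\le\V f(S)$, which is~\eqref{propsuplin2}. Nothing here is deep; the only point requiring care is that peeling off a single term reproduces exactly the form $f(\lambda a+b)$ demanded by the sup-linearity axiom, which it does.

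Finally, for~\eqref{propsuplin4} I would note that for linear subspaces $P,Q\subset A$ one has $P+Q=\langle P\cup Q\rangle$, since $P+Q$ is the smallest subspace containing both $P$ and $Q$. Applying the now-available equality~\eqref{propsuplin3} to $S=P\cup Q$ and using $f(P\cup Q)=f(P)\cup f(Q)$, we obtain
\[
\V f(P+Q)=\V f\bigl(\langle P\cup Q\rangle\bigr)=\V f(P\cup Q)=\V f(P)\vee\V f(Q),
\]
which is the desired identity.
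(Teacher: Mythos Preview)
Your proof is correct and follows essentially the same approach as the paper: the equivalence \eqref{propsuplin2}$\iff$\eqref{propsuplin3} via $S\subset\langle S\rangle$, the implication \eqref{propsuplin1}$\Rightarrow$\eqref{propsuplin2} by bounding $f$ on finite linear combinations (the paper asserts $f(\lambda_1 x_1+\cdots+\lambda_n x_n)\le f(x_1)\vee\cdots\vee f(x_n)$ directly rather than framing it as an induction, but the content is the same), and the derivation of \eqref{propsuplin4} from \eqref{propsuplin3} via $P+Q=\langle P\cup Q\rangle$. In fact your argument is slightly more complete: the paper's proof omits the direction \eqref{propsuplin2}/\eqref{propsuplin3}$\Rightarrow$\eqref{propsuplin1}, which you supply by specializing to $S=\emptyset$ and $S=\{a,b\}$.
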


\begin{proof}
\eqref{propsuplin2} and \eqref{propsuplin3} are obviously equivalent because $f(S)\subset f\bigl(\langle S\rangle\bigr)$. Let us assume \eqref{propsuplin1} and prove \eqref{propsuplin2}. Let $S\subset A$. If $S=\emptyset$ we have
$\langle S\rangle=\{0\}$ and thus \eqref{propsuplin2} is equivalent to
$f(0)\le 0$, which is true because we are assuming that $f$ is sup-linear. Now let $x\in \langle S\rangle$, say $x=\lambda_1 x_1+\cdots+\lambda_n x_n$ for some $x_1,\ldots,x_n\in S$ and $\lambda_1,\ldots,\lambda_n\in\CC$. Then sup-linearity gives us
\[
f(x)=f(\lambda_1 x_1+\cdots+\lambda_n x_n)\le f(x_1)\vee\cdots\vee f(x_n)\le \V f(S),
\]
so \eqref{propsuplin2} holds. Finally, \eqref{propsuplin4} is an immediate consequence of \eqref{propsuplin3}:
\begin{eqnarray*}
\V f(P+Q) &=& \V f\bigl(\langle P\cup Q\rangle\bigr)=\V f(P\cup Q)\\
&=&
\V f(P)\cup f(Q)=\V f(P)\vee\V f(Q). \qedhere
\end{eqnarray*}
\end{proof}

It is easy to see that for all locally convex spaces $A$ the span map
\[
\spanmap=\langle-\rangle:A\to\Max A
\]
is continuous (see~\cite{RS}), and it is obviously sup-linear. The following proposition describes the envisaged universal property:

\begin{theorem}\label{smallvsbigcont}
Let $A$ be a locally convex space. For each sober lattice $L$ and each continuous sup-linear map $f:A\to L$ there is a unique homomorphism of sober lattices $f^\sharp:\Max A\to L$ such that the following diagram commutes:
\[
\xymatrix{
A\ar[drr]_f\ar[rr]^-{\spanmap}&&\Max A\ar[d]^{f^\sharp}\\
&&L.
}
\]
Concretely, $f^\sharp$ is defined for each $P\in\Max A$ by
\begin{equation}\label{uniqueext}
f^\sharp(P)=\V_{p\in P} f(p).
\end{equation}
\end{theorem}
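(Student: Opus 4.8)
The plan is to verify in turn that the formula \eqref{uniqueext} is well defined, that it makes the triangle commute, that it is a homomorphism of sober lattices, and finally that it is the only such homomorphism. Since $L$ is a sober lattice, Theorem~\ref{arbitrarycontinuousjoins} guarantees that its specialization order is a sup-lattice, so the join $\V_{p\in P} f(p)$ exists for every $P\in\Max A$ and \eqref{uniqueext} does define a map $f^\sharp:\Max A\to L$. For commutativity I would argue directly from sup-linearity: for $P=\langle a\rangle$ every element has the form $\lambda a$, and $f(\lambda a)=f(\lambda a+0)\le f(a)\vee f(0)=f(a)$, while $f(a)$ itself occurs at $\lambda=1$; hence $f^\sharp(\langle a\rangle)=\V_{p\in\langle a\rangle}f(p)=f(a)$, that is, $f^\sharp\circ\spanmap=f$.

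Next I would check the two algebraic conditions. Preservation of $0$ is immediate, since the least element of $\Max A$ is the subspace $\{0\}$ and $f^\sharp(\{0\})=f(0)=0$. For binary joins, recall that the join in $\Max A$ is $P\vee Q=\overline{P+Q}$, so that $f^\sharp(P\vee Q)=\V f(\overline{P+Q})$. Here I would invoke continuity of $f$ in the standard topological form $f(\overline{P+Q})\subseteq\overline{f(P+Q)}$, together with Lemma~\ref{tinyprop} applied in $L$, to obtain $\V f(\overline{P+Q})=\V f(P+Q)$; equation~\eqref{propsuplin4} then yields $\V f(P+Q)=\V f(P)\vee\V f(Q)=f^\sharp(P)\vee f^\sharp(Q)$.

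The hard part will be continuity of $f^\sharp$, and it is here that sobriety and the lower Vietoris topology do the real work. Given an open $U\subseteq L$ and $P$ with $f^\sharp(P)=\V_{p\in P}f(p)\in U$, I would first use that the topology of the sober lattice $L$ is contained in the Scott topology: the set of finite joins of the family $\{f(p)\}_{p\in P}$ is directed with join in $U$, so already some finite join $f(p_1)\vee\cdots\vee f(p_n)$ lies in $U$. Continuity of the $n$-ary join (Theorem~\ref{arbitrarycontinuousjoins}) supplies open sets $W_i\ni f(p_i)$ in $L$ whose pointwise join lands in $U$, and continuity of $f$ makes each $f^{-1}(W_i)$ an open neighbourhood of $p_i$ in $A$. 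The finite intersection $\bigcap_{i=1}^n\ps{f^{-1}(W_i)}$ is then an open neighbourhood of $P$ in $\Max A$, and for any $Q$ in it one can pick $q_i\in Q\cap f^{-1}(W_i)$, whence $f^\sharp(Q)\ge f(q_1)\vee\cdots\vee f(q_n)\in U$; since $U$ is upwards closed this gives $f^\sharp(Q)\in U$, proving $(f^\sharp)^{-1}(U)$ open.

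Finally, uniqueness would follow formally. Every $P\in\Max A$ is the join $\V_{p\in P}\langle p\rangle$ of the one-dimensional subspaces it contains, since their algebraic sum is $P$ and the join in $\Max A$ is the closure of that sum, namely $\overline P=P$. Any homomorphism $g$ of sober lattices with $g\circ\spanmap=f$ preserves arbitrary joins by Theorem~\ref{arbitrarycontinuousjoins}, so $g(P)=\V_{p\in P}g(\langle p\rangle)=\V_{p\in P}f(p)=f^\sharp(P)$, forcing $g=f^\sharp$. I expect the only genuinely delicate steps to be the interchange of closure and join in the binary-join argument and the reduction of an infinite join to a finite one in the continuity argument, both of which are supplied by sobriety through Lemma~\ref{tinyprop} and the containment of the topology in the Scott topology.
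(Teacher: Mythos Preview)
Your proof is correct and follows essentially the same approach as the paper's: the formula~\eqref{uniqueext}, the use of Lemma~\ref{tinyprop} together with continuity of $f$ to handle the closure in $P\vee Q=\overline{P+Q}$, and the reduction to a finite join via the Scott topology for continuity of $f^\sharp$ all match the paper's argument. The only differences are in ordering (the paper derives uniqueness first) and in your continuity argument, which unpacks the Scott-open/finite-join reduction explicitly rather than invoking the continuity of the $P$-indexed join $\V:L^P\to L$ from Theorem~\ref{arbitrarycontinuousjoins} directly --- but these amount to the same thing.
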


\begin{proof}
The uniqueness, as well as \eqref{uniqueext}, is a consequence of the fact that every $P\in\Max A$ is the join of the spans $\langle p\rangle$ for $p\in P$, together with the fact that we are requiring $f^\sharp$ to preserve joins:
\[
f^\sharp(P)=f^\sharp\bigl(\V_{p\in P}\langle p\rangle\bigr)
=\V_{p\in P} f^\sharp\bigl(\langle p\rangle\bigr)=\V_{p\in P} f(p).
\]
In addition, the sup-linearity of $f$ implies
\[
f^\sharp(\langle0\rangle)=f(0)=0,
\]
and for all $P,Q\in \Max A$ we have
\begin{eqnarray*}
f^\sharp(P\vee Q)&=&f^\sharp\bigl(\overline{P+Q}\bigr)=\V f\bigl(\overline{P+Q}\bigr)\\
&=&\V\overline{f\bigl(\overline{P+Q}\bigr)}\quad\textrm{(by Proposition~\ref{tinyprop})}\\
&=&\V\overline{f(P+Q)}\quad\textrm{(because $f$ is continuous)}\\
&=&\V f(P+Q)\quad\textrm{(by Proposition~\ref{tinyprop})}\\
&=&\V f(P)\vee\V f(Q)\quad\textrm{(by sup-linearity of $f$)}\\
&=&f^\sharp(P)\vee f^\sharp(Q).
\end{eqnarray*}
The only thing remaining to be proved is that $f^\sharp$ is continuous. Let $P\in\Max A$ and $U\subset L$ be an open set containing $f^\sharp(P)$. We shall find an open set $\mathcal U\subset\Max A$ containing $P$ such that $f^\sharp(\mathcal U)\subset U$. In order to do this, notice that the equation
\[
f^\sharp(P)=\V_{p\in P} f(p)
\]
means that $f^\sharp(P)$ is the value of the continuous join map $\V:L^P\to L$ at the family $\bigl(f(p)\bigr)_{p\in P}$. Then, by the continuity of $\V$, there is a family $(U_p)_{p\in P}$ of opens of $L$ such that the following conditions hold:
\begin{itemize}
\item all but finitely many $U_p$'s equal $L$;
\item $f(p)\in U_p$ for each $p\in P$;
\item for all $(x_p)_{x\in P}\in\prod_{p\in P} U_p$ we have $\V_{p\in P} x_p\in U$.
\end{itemize}
Let $F\subset P$ be a finite set such that $U_p=L$ for all $p\in P\setminus F$,  for each $p\in F$ let $V_p\subset A$ be the open $f^{-1}(U_p)$, and let
$\mathcal U$ be the finite intersection $\bigcap_{p\in F} \ps V_p$, which is an open set of $\Max A$. Since $p\in V_p\cap P$ for each $p$ it follows that $P\in\mathcal U$. Now let $Q\in\mathcal U$, for each $p\in F$ choose
$a_p\in Q\cap V_p$, and extend the finite family $(a_p)_{p\in F}$ in an arbitrary way to a family $(a_p)_{p\in P}$ in $Q$ --- for instance making $a_p=0$ for all $p\in P\setminus F$. Then $\bigl(f(a_p)\bigr)_{p\in P}\in\prod_{p\in P} U_p$, and thus
\[
f^\sharp(Q)=\V_{q\in Q} f(q)\ge\V_{p\in P} f(a_p)\in U,
\]
so $f^\sharp(Q)\in U$ because $U$ is upper-closed.
\end{proof}

From here on I shall denote by $\LCS$ the category of locally convex spaces and continuous linear maps between them.

\begin{corollary}\label{cor:soblatfunctor}
$\Max$ extends to a functor from $\LCS$ to $\SobL$ which to each continuous linear map $f:A\to B$ of locally convex spaces assigns the homomorphism of sober lattices $\Max f$ defined for all $P\in\Max A$ by
\[
\Max f(P)=\overline{f(P)}.
\]
\end{corollary}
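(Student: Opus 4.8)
The plan is to obtain $\Max f$ from the universal property of Theorem~\ref{smallvsbigcont} rather than to guess the closure formula and verify everything by hand. Given a continuous linear map $f\colon A\to B$, I would first consider the composite $\spanmap_B\circ f\colon A\to\Max B$, where $\spanmap_B=\langle-\rangle$ is the span map of $B$. This composite is continuous, being a composite of the continuous map $f$ with the continuous span map. It is also sup-linear: indeed $(\spanmap_B\circ f)(0)=\langle f(0)\rangle=\langle 0\rangle=0$, and for $a,b\in A$ and $\lambda\in\CC$ the linearity of $f$ gives $(\spanmap_B\circ f)(\lambda a+b)=\langle\lambda f(a)+f(b)\rangle$, which is contained in $\langle f(a)\rangle\vee\langle f(b)\rangle=\langle f(a),f(b)\rangle$ and hence lies below it in the inclusion (specialization) order. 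Theorem~\ref{smallvsbigcont} then yields a unique homomorphism of sober lattices $\Max f:=(\spanmap_B\circ f)^\sharp\colon\Max A\to\Max B$ satisfying $(\Max f)\circ\spanmap_A=\spanmap_B\circ f$.

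Next I would identify this abstractly defined map with the stated formula. By~\eqref{uniqueext} we have $\Max f(P)=\V_{p\in P}\langle f(p)\rangle$ for each $P\in\Max A$. Since the join in $\Max B$ of any family of subspaces is the closure of their linear span, this join equals $\overline{\spanmap\{f(p)\st p\in P\}}$; and because $P$ is a subspace and $f$ is linear, $\spanmap\{f(p)\st p\in P\}=f(P)$, so $\Max f(P)=\overline{f(P)}$. This also confirms that $\overline{f(P)}$ genuinely lies in $\Max B$.

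Finally I would establish functoriality using the uniqueness clause of Theorem~\ref{smallvsbigcont}, which avoids any direct manipulation of nested closures. For the identity, $\ident_{\Max A}$ is a homomorphism of sober lattices satisfying $\ident_{\Max A}\circ\spanmap_A=\spanmap_A=\spanmap_A\circ\ident_A$, so uniqueness gives $\Max(\ident_A)=\ident_{\Max A}$. For composition, given $f\colon A\to B$ and $g\colon B\to C$, the map $\Max g\circ\Max f$ is a composite of homomorphisms of sober lattices, hence itself such a homomorphism, and precomposing with $\spanmap_A$ and using the two defining factorizations gives
\[
(\Max g\circ\Max f)\circ\spanmap_A=\Max g\circ(\spanmap_B\circ f)=(\spanmap_C\circ g)\circ f=\spanmap_C\circ(g\circ f).
\]
Since $\Max(g\circ f)$ is by definition the unique homomorphism of sober lattices factoring $\spanmap_C\circ(g\circ f)$ through $\spanmap_A$, uniqueness forces $\Max g\circ\Max f=\Max(g\circ f)$.

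The one step needing genuine care is the identification $\V_{p\in P}\langle f(p)\rangle=\overline{f(P)}$ in the second paragraph, where the concrete description is extracted from the universal property; everything else is formal. Had I instead tried to verify the composition law directly from the closure formula, the main obstacle would have been the identity $\overline{g(f(P))}=\overline{g\bigl(\overline{f(P)}\bigr)}$, which relies on the continuity of $g$ (so that $g\bigl(\overline{f(P)}\bigr)\subseteq\overline{g(f(P))}$); routing functoriality through the uniqueness in Theorem~\ref{smallvsbigcont} sidesteps this bookkeeping entirely.
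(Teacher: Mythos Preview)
Your proof is correct and follows essentially the same approach as the paper: define $\Max f$ as $(\spanmap_B\circ f)^\sharp$ via the universal property of Theorem~\ref{smallvsbigcont}, read off the closure formula from~\eqref{uniqueext}, and derive functoriality from the uniqueness clause. Your write-up is simply a more detailed expansion of the paper's terse two-sentence argument.
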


\begin{proof}
For all continuous linear maps $f:A\to B$ the composition
\[
\xymatrix{A\ar[r]^f&B\ar[r]^-\spanmap&\Max B}
\]
is continuous and sup-linear, and we define $\Max f:\Max A\to\Max B$ to be the  extension $(\spanmap\circ f)^\sharp$. The functoriality follows from the uniqueness of the extension, and the formula for $\Max f$ is obtained from \eqref{uniqueext}.
\end{proof}

\subsection{Classical measurements}\label{sec:classical}

The distinction between classical and non-classical spaces of measurements does not have to do with disjunctions alone, but rather with how they interact with conjunctions, and with computational considerations like those underlying the notion of domain in computer science. This leads to the conclusion that spaces of classical measurements are countably based locally compact locales --- \ie, up to isomorphism they are topologies of second-countable locally compact spaces --- equipped with the Scott topology. The main arguments behind this were given in~\cite{fop}, and here I shall provide complementary details.

\paragraph{Conjunctions and distributivity.}

If $L$ is a sober lattice then, because it is a complete lattice, it has meets of any subsets, in particular a binary meet $x\wedge y$ for each $x,y\in L$. Bearing in mind the interpretation of $L$ as a space of measurements, the binary meet $x\wedge y$ is a measurement which is \emph{both} $x$ \emph{and} $y$. This is not in the sense of having both the observable properties of $x$ and those of $y$ (that would be the disjunction $x\vee y$), but rather that every observable property of $x\wedge y$ is both an observable property of $x$ and an observable property of $y$. We say that $x\wedge y$ is the \emph{conjunction} of $x$ and $y$.

\begin{example}
Let $\boldsymbol z^\spindown$ and $\boldsymbol z^\spinup$ be the spin measurements of Example~\ref{exm:Cstarrep}. The meet $\boldsymbol z^\spindown\wedge \boldsymbol z^\spinup$ corresponds to a conjunction of incompatible alternatives, namely that a deflection was observed which is simultaneously upwards and downwards. So the conjunction of $\boldsymbol z^\spindown$ and $\boldsymbol z^\spinup$ should, as indeed is the case in $\Max M_2(\CC)$, coincide with the impossible measurement: $\boldsymbol z^\spindown\wedge\boldsymbol z^\spinup=0$.
\end{example}

Sober lattices in general are not distributive: given three measurements $m$, $n$ and $p$, only the inequality
\begin{equation}\label{distrineq}
(p\wedge m)\vee(p\wedge n)\le p\wedge(m\vee n)
\end{equation}
is guaranteed to hold. Similarly to the quantum logic of Birkhoff and von Neumann, lack of distributivity is a sign of nonclassical behavior. Intuitively, measurements of classical type should be those that allow us to pretend that we are measuring intrinsic, and mutually independent (\ie, non-interfering), properties of systems, such that a measurement $p\wedge (m\vee n)$, which reads \emph{both $p$ and the disjunction $m\vee n$}, cannot be distinguished from the disjunction \emph{(both $p$ and $m$) or (both $p$ and $n$)}~\cite{fop}. Accordingly, a situation where distributivity always exists is the following:

\begin{lemma}\label{lem:composedsys}
Let $L$ and $M$ be sober lattices. For all $x\in L$ and all $m,n\in M$ the following condition holds in $L\oplus M$:
\[
(x,1)\wedge\bigl((1,m)\vee(1,n)\bigr)
= \bigl((x,1)\wedge (1,m)\bigr)\vee\bigl((x,1)\wedge (1,n)\bigr).
\]
\end{lemma}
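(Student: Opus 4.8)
The plan is to reduce each side of the claimed identity to an explicit element of $L\oplus M$ by computing all the meets and joins coordinatewise, and then to observe that the two resulting elements coincide. The one preliminary fact I need is that in the biproduct $L\oplus M$ both joins \emph{and} meets are computed coordinatewise. For joins this is already recorded in the theorem exhibiting $L\times M$ as the biproduct, where $(a,b)\vee(c,d)=(a\vee c,b\vee d)$. For meets I would argue as follows: by Theorem~\ref{arbitrarycontinuousjoins} each of $L$ and $M$ is a complete lattice (a sup-lattice, hence possessing a top element $1=\V L$, respectively $1=\V M$), so $L\oplus M$ is complete as well; and in any product of posets under the coordinatewise order, a pair $(e,f)$ is a lower bound of $(a,b)$ and $(c,d)$ exactly when $e\le a$, $e\le c$, $f\le b$, and $f\le d$, so the greatest lower bound is $(a\wedge c,b\wedge d)$. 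Thus $(a,b)\wedge(c,d)=(a\wedge c,b\wedge d)$.

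With this in hand the verification is a direct calculation using only that $1$ is the top of the relevant factor, so that $x\wedge 1=x$ and $1\wedge y=y$. For the left-hand side I would first compute $(1,m)\vee(1,n)=(1,m\vee n)$, and then $(x,1)\wedge(1,m\vee n)=(x\wedge 1,\,1\wedge(m\vee n))=(x,m\vee n)$. For the right-hand side I would compute $(x,1)\wedge(1,m)=(x,m)$ and $(x,1)\wedge(1,n)=(x,n)$, followed by $(x,m)\vee(x,n)=(x\vee x,\,m\vee n)=(x,m\vee n)$. Both sides therefore equal $(x,m\vee n)$, which establishes the identity.

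There is no genuine obstacle in the argument; the substance of the lemma is the conceptual point that distributivity, which may fail in an arbitrary sober lattice (only the inequality~\eqref{distrineq} being guaranteed), is automatically recovered in this particular configuration inside a direct sum, where the factor $x$ lives entirely in $L$ while $m$ and $n$ live entirely in $M$, so the two sides never ``interfere.'' The only step that warrants an explicit word is the coordinatewise computation of meets, which I would dispatch with the general product-of-posets observation above rather than by invoking any special feature of sober lattices beyond their completeness.
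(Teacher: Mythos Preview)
Your proof is correct and follows essentially the same approach as the paper's: both reduce the identity to a coordinatewise computation using that joins and meets in $L\oplus M$ are computed componentwise, arriving at $(x,m\vee n)$ on each side. If anything, you supply a slightly more explicit justification for the coordinatewise meets than the paper, which simply asserts it by analogy with joins.
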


\begin{proof}
Similarly to joins, the binary meets are computed pairwise, so we have
\begin{eqnarray*}
(x,1)\wedge\bigl((1,m)\vee(1,n)\bigr)&=&(x,1)\wedge (1,m\vee n)\\
&=&(x,m\vee n)\\
&=&(x\vee x,m\vee n)\\
&=&(x,m)\vee(x,n)\\
&=&\bigl((x,1)\wedge(1,m)\bigr)\vee\bigl((x,1)\wedge(1,n)\bigr). \qedhere
\end{eqnarray*}
\end{proof}

An interpretation of this lemma is that a direct sum $L\oplus M$ can represent the measurements on a system composed of two non-interfering parts whose measurements are respectively described by $L$ and $M$, such that each pair $(x,m)$ represents a conjunction of $x$ and $m$, whereas doing only $x$ is interpreted as doing $x$ on the system represented by $L$ and doing the completely undetermined measurement $1$ on the system represented by $M$ (that is, anything at all may have happened to $M$ while $x$ was performed). So $x$ becomes the pair $(x,1)$ and, similarly, doing $m\in M$ alone is represented by the pair $(1,m)$. As expected for a conjunction, $(x,m)$ is a meet: $(x,m)=(x,1)\wedge (1,m)$.

\begin{example}
Referring again to a spin measurement of an electron, let there be two possible  intensities for the gradient of the magnetic field along $z$, due to which there are two possible magnitudes, $m_1$ and $m_2$, of the deflection in the Stern--Gerlach apparatus. Then $m_1\wedge \boldsymbol z^\spindown$ is a measurement of spin along $z$ which besides recording a downwards deflection also determines its magnitude to be $m_1$. In this case it makes sense to have distributivity:
\[
m_1\wedge \boldsymbol z = (m_1\wedge \boldsymbol z^\spindown) \vee (m_1\wedge \boldsymbol z^\spinup).
\]
An example of a sober lattice that models such measurements is $\boldsymbol 2^2\oplus\Max M_2(\CC)$ (\cf\ Example~\ref{exm:22}). This is the product of a classical bit, which describes the two possible magnitudes, with the space of spin measurements.
\end{example}

\paragraph{Locally compact locales and classical state spaces.}

The same argument used in computer science~\cite{Stoy}*{chapter 6} when modeling spaces of data values and computable functions on them applies to spaces of measurements in those situations where new information about a system never contradicts previously obtained information, so that each new measurement can be regarded simply as more data about the system~\cite{fop}. Mathematically, this implies that spaces of classical measurements must be countably-based continuous lattices, and they can be topologized with the Scott topology itself. Moreover, for any continuous lattice $L$ equipped with the Scott topology the product topology on $L\times L$ coincides with the Scott topology~\cite{bigdomainbook}*{Th.\ II-1.14 and Th.\ II-4.13}, and thus any function $\theta:L\times L\to L$ in two variables on $L$ is continuous with respect to the product topology if and only if it is continuous in each variable separately~\cite{bigdomainbook}*{Lemma II-2.8}. Hence, we obtain:

\begin{lemma}\label{lem:contscottsober}
Any continuous lattice equipped with the Scott topology is a sober lattice.
\end{lemma}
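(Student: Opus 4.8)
The plan is to verify, for a continuous lattice $L$ equipped with the Scott topology, each of the four clauses in the definition of sober lattice: sobriety, existence of a least element, coincidence of the specialization order with the lattice order, and continuity of the binary join. The first three are immediate from standard domain theory, whereas the continuity of $\vee$ is the only substantial point, and it is there that the ``continuity'' of the lattice (in the technical sense of the way-below relation) does the real work.

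First I would dispose of the routine clauses. Any continuous lattice is sober in its Scott topology (\cf\ \cite{bigdomainbook,stonespaces,picadopultr}), so sobriety comes for free. Being a complete lattice, $L$ has a least element $0=\V\emptyset$. Finally, the specialization order of the Scott topology agrees with the lattice order: the closure of a singleton $\{n\}$ is exactly the principal ideal $\downsegment n$, so $m\in\overline{\{n\}}$ iff $m\le n$. This leaves only the continuity of $\vee$.

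For the continuity of $\vee:L\times L\to L$ with respect to the product of the Scott topologies, I would argue pointwise and directly, thereby sidestepping any question about whether the Scott topology of the product coincides with the product of the Scott topologies. Fix $x,y\in L$ and a Scott-open $U$ with $x\vee y\in U$. Since each element of a continuous lattice is the directed join of the elements way-below it, I would write $x=\V\{a\st a\ll x\}$ and $y=\V\{b\st b\ll y\}$, whence
\[
x\vee y=\V\{a\vee b\st a\ll x,\ b\ll y\}.
\]
Because $U$ is Scott-open and contains this directed join, there exist $a\ll x$ and $b\ll y$ with $a\vee b\in U$. Taking the Scott-open neighborhoods $V=\{z\st a\ll z\}$ of $x$ and $W=\{z\st b\ll z\}$ of $y$, any $v\in V$ and $w\in W$ satisfy $a\le v$ and $b\le w$, so $a\vee b\le v\vee w$, and upward closure of $U$ forces $v\vee w\in U$. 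Thus $V\times W$ witnesses continuity of $\vee$ at $(x,y)$.

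The main obstacle is the splitting step that extracts a single pair $a\ll x$, $b\ll y$ with $a\vee b\in U$, and this hinges on two facts that I would check carefully rather than the interpolation property of $\ll$. One is that the family $\{a\vee b\st a\ll x,\ b\ll y\}$ is directed, which follows from the directedness of $\{a\st a\ll x\}$ and of $\{b\st b\ll y\}$ built into the definition of a continuous lattice. The other is that its join is genuinely $x\vee y$, which is a consequence of the distributivity of binary join over arbitrary joins in a complete lattice, namely $\V\{a\vee b\}=(\V_{a\ll x} a)\vee(\V_{b\ll y} b)=x\vee y$. Once both are established, the Scott-openness of $U$ supplies the required pair and the argument closes without any further appeal to properties of $\ll$.
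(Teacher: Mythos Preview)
Your proof is correct, but it takes a different route from the paper's. The paper argues abstractly: a function between dcpos is Scott-continuous iff it preserves directed joins, and a function of two variables is Scott-continuous iff it is Scott-continuous in each variable separately; since $\vee$ obviously preserves directed joins in each variable (directed sets being nonempty), it is Scott-continuous on $L\times L$, and sobriety is cited as a standard fact about continuous lattices. Your argument instead works pointwise with the way-below relation, producing explicit basic opens $\twoheaduparrow a\times\twoheaduparrow b$ in the \emph{product} topology.

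The trade-off is this: the paper's argument is slicker but tacitly relies on the fact that, for continuous lattices, the Scott topology on $L\times L$ coincides with the product of the Scott topologies (this is what makes the separate/joint continuity equivalence deliver continuity for the product topology, which is what the definition of sober lattice demands). You explicitly flag and sidestep that issue, which is a genuine gain in transparency. On the other hand, the paper's approach makes it clear that the continuity of $\vee$ really needs nothing beyond $L$ being a sup-lattice once the topological groundwork is in place, whereas your argument invokes the way-below machinery (directedness of $\{a:a\ll x\}$, openness of $\twoheaduparrow a$) at the core step. Both are valid; yours is more self-contained, the paper's more conceptual.
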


\begin{proof}
A function $f$ is continuous for the Scott topology if and only if for all directed sets $D$ in the domain of $f$ we have~\cite{bigdomainbook}*{Prop.\ II-2.1}
\[
f(\V D) = \V f(D).
\]
Hence, since directed sets are nonempty, if a continuous lattice $L$ is equipped with the Scott topology the operation $\vee:L\times L\to L$ is continuous in each variable separately, and thus it is continuous. Moreover, it is a property of continuous lattices that the Scott topology makes them sober spaces~\cite{bigdomainbook}*{Th.\ II-1.14 and Cor.\ II-4.16}.
\end{proof}

It is well known that any continuous lattice which is also distributive is a \emph{locale}, \ie, a sup-lattice $L$ that satisfies the following infinite distributivity property for all $x\in L$ and $S\subset L$:
\[
x\wedge \V S = \V_{y\in S} x\wedge y.
\]
Moreover, any continuous locale $L$ is necessarily order isomorphic to the topology $\topology(X)$ of a sober locally compact space $X$ (locally compact in the strong sense that for any $x\in X$ and any open neighborhood $U$ of $x$ there is an open neighborhood $V$ of $x$ and a compact set $K$ such that $x\in V\subset K\subset U)$, and, conversely, the topology of a sober locally compact space is necessarily a continuous locale.
For this reason, distributive continuous lattices are usually called \emph{locally compact locales}. The condition that $\topology(X)$ is countably-based means precisely that $X$ is second-countable.

\begin{definition}\label{def:classicalmeasurements}
A sober lattice is referred to as a \emph{space of classical measurements} if it is a countably-based locally compact locale equipped with the Scott topology.
\end{definition}

A consequence of this definition is that there is a natural notion of \emph{topological spectrum} associated to classical measurements: given a locally compact locale $L$, any sober locally compact space $X$ such that $L\cong\topology(X)$ is unique up to a homeomorphism, and it is homeomorphic to the locale spectrum of $L$~\cite{stonespaces}.

\begin{corollary}
A sober lattice is a space of classical measurements if and only if it is homeomorphic to the topology $\topology(X)$, itself equipped with the Scott topology, of a second-countable sober locally compact space $X$. Moreover, any space of classical measurements is separable.
\end{corollary}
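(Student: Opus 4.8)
The plan is to treat the corollary as essentially a translation of Definition~\ref{def:classicalmeasurements} through the order-theoretic description of locally compact locales recalled just above it, the only genuine ingredient being that ``order isomorphism of locally compact locales'' and ``homeomorphism for the Scott topologies'' are the same relation. I would isolate this as the first step: for a complete lattice the Scott topology is defined purely from the order, so any order isomorphism of complete lattices is automatically a homeomorphism of the associated Scott spaces; conversely, a homeomorphism between two spaces carrying the Scott topology is in particular an isomorphism of specialization orders, and the specialization order of the Scott topology on a continuous lattice recovers the lattice order (since the closure of a point $n$ is the principal ideal $\downsegment n$, so $m\in\overline{\{n\}}$ iff $m\le n$). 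Hence ``$L$ is order isomorphic to $\topology(X)$'' and ``$L$ is homeomorphic to $\topology(X)$ with the Scott topology'' are interchangeable.

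With that bridge in hand, the equivalence is immediate in both directions. If $L$ is a space of classical measurements then by definition it is a countably-based locally compact locale carrying the Scott topology; by the isomorphism theorem quoted before Definition~\ref{def:classicalmeasurements} its order is isomorphic to $\topology(X)$ for a sober locally compact space $X$, which is second-countable precisely because $L$ is countably based, and by the first step this order isomorphism is a homeomorphism. Conversely, if $L$ is homeomorphic (for the Scott topologies) to $\topology(X)$ with $X$ second-countable sober locally compact, then the first step turns this into an order isomorphism of $L$ onto a countably-based locally compact locale, and it also forces the topology of $L$ to be the Scott topology of its own specialization order; thus $L$ satisfies Definition~\ref{def:classicalmeasurements}.

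For separability I would argue that a space of classical measurements is in fact second-countable. Fixing $L\cong\topology(X)$ with the Scott topology, second-countability of $X$ makes $L$ a countably-based continuous lattice: starting from a countable base of open sets of $X$ and closing it under finite unions yields a countable set $B\subset L$ such that every element is the directed join of the members of $B$ way-below it. It is then a standard fact about continuous lattices that the family $\{\,\{y\in L\st b\ll y\}\st b\in B\,\}$ of way-above sets of the basis elements is a base for the Scott topology: given a Scott-open $U$ and $x\in U$, writing $x$ as the directed join of those $b\in B$ with $b\ll x$ produces some such $b$ lying in $U$, and then $x$ belongs to the way-above set of $b$, which is contained in the upper set $U$. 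This base is countable, so $L$ is second-countable and hence separable.

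The only step requiring care is this last one: one must check that the chosen $B$ is genuinely a base in the way-below sense (not merely a join-generating set), which is where local compactness of $X$ enters, and that the way-above sets it determines really form a topological base. I expect this to be the main, though still routine, obstacle; everything else is bookkeeping on top of the quoted correspondence between continuous locales and locally compact sober spaces together with the order-theoretic nature of the Scott topology.
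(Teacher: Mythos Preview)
Your proposal is correct. For the equivalence you spell out carefully what the paper simply declares needs no explanation, and your bridge between order isomorphism and Scott-homeomorphism is exactly the right point to make explicit.

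For separability the two arguments share the same engine---write an element as a directed join of finite unions drawn from a countable base of $X$ and invoke Scott-openness---but diverge in what they extract from it. The paper stops once it has shown that the countable set $\check{\mathcal B}$ of finite unions meets every nonempty Scott-open set, giving density directly. You instead use the way-below relation to argue that the sets $\{y\st b\ll y\}$ for $b\in B$ form a base, obtaining second-countability and hence separability. Your route yields a slightly stronger conclusion at the price of invoking the domain-theoretic notion of basis (and, as you note, checking that local compactness makes $B$ a basis in the way-below sense rather than merely a join-generating set); the paper's argument is shorter and avoids the way-below relation altogether.
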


\begin{proof}
Only the statement about separability needs some explanation. Let $X$ be a locally compact space with a countable basis $\mathcal B$, and let $\mathcal U$ be a non-empty open set of $\topology(X)$ in the Scott topology. Then there is $U\in\topology(X)$ such that $U\in\mathcal U$. Express $U$ as a union of open sets taken from $\mathcal B$:
\[
U = \bigcup_{i\in I} B_i.
\]
Then for some finite subset $F\subset I$ we have $\bigcup_{i\in F} B_i\in\mathcal U$. So we see that the set $\check {\mathcal B}$ (the completion of $\mathcal B$ under finite unions) is again countable, and it is dense.
\end{proof}

\paragraph{Example.}

Let us briefly address the two-slit experiment,
modeling the experimental apparatus as a space of classical measurements. Take the target where interference patterns are recorded to be a copy of the real line. So we can take $\topology(\RR)$ to be the space of position measurements in the target. Secondly, as usual we may or may not be able to detect which slit the particle went through. Recording that it went through the first slit will be the measurement $\slit_1$, and recording that it went through the second slit will be the measurement $\slit_2$. Stating that it must have gone through either one slit or the other gives us the equation
\[
\slit_1\vee\slit_2=1,
\]
and the impossibility of simultaneously recording $\slit_1$ and $\slit_2$ translates to
\[
\slit_1\wedge\slit_2=0,
\]
so we obtain the following Boolean algebra $S\cong\boldsymbol 2^2$:
\[
\xymatrix{
&1\ar@{-}[dl]\ar@{-}[dr]\\
\slit_1\ar@{-}[dr]&&\slit_2\ar@{-}[dl]\\
&0
}
\]
The classical measurement space that describes the whole apparatus is 
\[
M=S\oplus\topology(\RR)\cong\topology\bigl(\{\slit_1,\slit_2\}\amalg\RR\bigr)
\]
(\cf\ Lemma~\ref{lem:composedsys} and the comments that follow).
For instance, if $U$ is an open set of $\RR$, the pair $(\slit_1,U)$ is a measurement that consists of detecting both that the particle traversed the first slit and that it landed in the target region described by $U$, and we have
\[
(\slit_1,U)\vee(\slit_2,U)=(1,U).
\]

This description abstracts away the experimental details of how the detection of particles through slits is done (and also ignores the subtle effects produced by the change in boundary conditions that occurs when, say, we block one of the slits). For instance, it is irrelevant whether we are running a standard two-slit experiment or Wheeler's delayed choice experiment: if the particle is ultimately known to have gone through the first slit and to have landed in $U$, then the abstract experiment being performed is $(\slit_1,U)$, whereas if it has been detected through the second slit the abstract experiment is $(\slit_2,U)$. But, if no detection through any of the slits has been made, the experiment is the disjunction of the previous two, in other words $(1,U)$.

Summarizing: \emph{the only actual two-slit experiment that records a hit of the particle in $U$ is $(1,U)$, whereas $(\slit_1,U)$ and $(\slit_2,U)$ are ``one-slit experiments.''} This classification of measurements is exactly that which is afforded by their classical information content: if a particle is detected through one of the slits, no matter how this was done, then the measurement is, in terms of the classical information it elicits, not a mysterious ``two-slit experiment with some subtle modification about one particular slit,'' but, rather, it is a one-slit experiment.

A full treatment of this example, accounting for interference patterns on the target, needs to be left for a continuation of this paper that addresses measure-theoretic structure on measurement spaces.

\section{Sequential compositions and reversibility}\label{sec:MS}

In this section I address the algebraic structure of measurements that relates to time and causality. Although there are several options for algebraic operations that represent causality structures~\cite{fop}, here I shall focus only on the basic sequential composition of measurements that underlies the definition of measurement space. This will encompass spaces of classical measurements, in particular a broad class of examples based on locally compact groupoids, and spaces of quantum measurements based on C*-algebras.

\subsection{Measurement spaces and quantales}\label{sec:quantales}

\paragraph{Measurement spaces.}

The meaning of the continuous multiplication of measurements introduced in~\cite{fop} is that a product $mn$ is a \emph{composition} of measurements read from right to left: $n$ \emph{and then} $m$. For instance, an electron may traverse two Stern--Gerlach analyzers one after the other, as in Fig.~\ref{fig:zzup}, where the composition $\boldsymbol z\boldsymbol z^\spinup$ conveys exactly the same information as $\boldsymbol z^\spinup$, so the two measurements should be considered the same: $\boldsymbol z\boldsymbol z^\spinup=\boldsymbol z^\spinup$.
\begin{figure}[h!]
\begin{center}\includegraphics[width=0.8\textwidth]{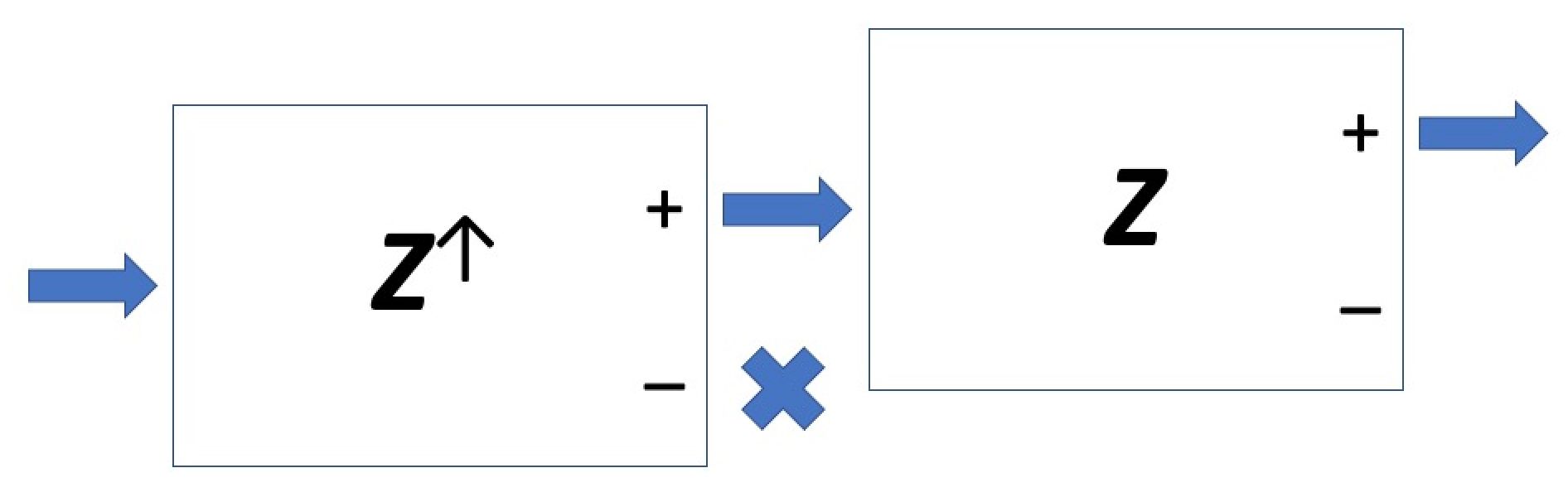}\caption{Sequential composition ``$\boldsymbol z^\spinup$ and then $\boldsymbol z$.'' After $\boldsymbol z^\spinup$ the measurement $\boldsymbol z$ gives no new information.}\label{fig:zzup}\end{center}
\end{figure}

Besides the multiplication, measurement spaces are equipped with a continuous involution operation $m\mapsto m^*$. This is a purely formal time reversal which allows one to express a necessary condition for a measurement $m$ to be considered reversible, namely $mm^*m=m$.

\begin{definition}\label{def:ms}\cite{fop}
By a \emph{measurement space} is meant a sober lattice $M$ equipped with a continuous associative \emph{product} $(m,n)\mapsto mn$ and a continuous \emph{involution} $m\mapsto m^*$ that satisfy the following properties for all $m,n,n'\in M$:
\begin{enumerate}
\item $m0=0$ (absorption);
\item $m(n\vee n')=mn\vee mn'$ (distributivity);
\item $m^{**}=m$ (involution I);
\item $(mn)^*=n^*m^*$ (involution II);
\item $mm^*m\le m\Longrightarrow m\le mm^*m$ (stably Gelfand).
\end{enumerate}
A \emph{homomorphism} of measurement spaces $h:M\to N$ is a homomorphism of sober lattices such that for all $m,m'\in M$ we have
\[
h(mm')=h(m)h(m')\quad\textrm{and}\quad h(m^*)=h(m)^*.
\]
The ensuing \emph{category of measurement spaces} is denoted by $\MS$. If the product has a unit $e$ for the multiplication the measurement space is said to be \emph{unital}.
\end{definition}

\begin{lemma}
In any measurement space $M$ the following conditions hold for all $m,n,p\in M$ and all $S\subset M$:
\begin{enumerate}
\item\label{msp1} $(-)^*:M\to M$ is a homeomorphism;
\item\label{msp2}  $(-)^*:M\to M$ is an order isomorphism;
\item\label{msp3}  $(\V S)^*=\V_{n\in S} n^*$;
\item\label{msp4}  $m(\V S)=\V_{n\in S}mn$;
\item\label{msp5}  $(\V S)m=\V_{n\in S}nm$;
\item\label{msp5.1} $(m\vee n)p=mp\vee np$;
\item\label{msp6}  $0m=0$;
\item\label{msp7}  $mm^*m\le m\iff mm^*m=m$;
\item\label{msp8}  $e^*=e$ if $M$ is unital.
\end{enumerate}
Moreover, if $h:M\to N$ is a homomorphism of measurement spaces then for all $S\subset M$ we have
\begin{enumerate}
\setcounter{enumi}{9}
\item\label{msp9} $h\bigl(\V S\bigr) = \V f(S)$.
\end{enumerate}
\end{lemma}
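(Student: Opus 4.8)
The plan is to derive the nine identities as a short cascade, each following either directly from an axiom of Definition~\ref{def:ms} or from an item proved just before it, with Theorem~\ref{arbitrarycontinuousjoins} (every sober lattice is a complete sup-lattice on which joins of every arity are continuous, and homomorphisms of sober lattices preserve arbitrary joins) doing most of the heavy lifting. I would open with the three statements about the involution. For \eqref{msp1}, the involution is continuous by hypothesis and is its own inverse by the axiom $m^{**}=m$, so it is a continuous bijection with continuous inverse, \ie\ a homeomorphism. Then \eqref{msp2} is immediate, since a homeomorphism commutes with closure and therefore preserves the specialization order $m\le n\iff m\in\overline{\{n\}}$ in both directions. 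Finally \eqref{msp3} follows from \eqref{msp2} together with completeness (Theorem~\ref{arbitrarycontinuousjoins}), because an order isomorphism between complete lattices preserves all joins.

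The real content is \eqref{msp4}, distributivity of left multiplication over arbitrary joins, and this is the step I expect to be the main obstacle: the axioms give only \emph{binary} distributivity $m(n\vee n')=mn\vee mn'$ and absorption $m0=0$, not infinitary distributivity, so continuity must be used to bridge the gap. I would fix $m$ and consider $\lambda_m:M\to M$, $\lambda_m(n)=mn$, which is continuous as a restriction of the continuous product. Being a continuous self-map of a sober space, $\lambda_m$ preserves directed joins (Theorem~\ref{arbitrarycontinuousjoins}); and binary distributivity together with absorption show that it preserves all \emph{finite} joins. Since any $\V S$ is the directed join $\V\check S$ of its finite subjoins, these two facts combine to give $\lambda_m(\V S)=\V_{n\in S}mn$, which is \eqref{msp4}. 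The delicate point is precisely that continuity by itself yields only the directed case, so the finite case has to be supplied separately from the algebraic axioms and then married to it.

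Everything else reduces to the above. For \eqref{msp5} and its binary instance \eqref{msp5.1} I would transport \eqref{msp4} across the involution: using $(mn)^*=n^*m^*$, $m^{**}=m$ and \eqref{msp3}, one gets $(\V S)m=\bigl(m^*(\V S)^*\bigr)^*=\bigl(\V_{n\in S}m^*n^*\bigr)^*=\V_{n\in S}nm$. Item \eqref{msp6} is the instance $S=\emptyset$ of \eqref{msp5}, since $\V\emptyset=0$ forces $0m=\V_{n\in\emptyset}nm=0$ (alternatively $0^*=0$ by \eqref{msp2} plus absorption). For \eqref{msp7} the nontrivial direction is exactly the stably Gelfand axiom: $mm^*m\le m$ implies $m\le mm^*m$, and antisymmetry of the partial order then yields equality, the converse being trivial. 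For \eqref{msp8}, if $e$ is a unit then the unit law gives $ee^*=e^*$, and applying the involution with $(ab)^*=b^*a^*$ and $m^{**}=m$ yields $e=(ee^*)^*=(e^*)^*e^*=ee^*=e^*$. Finally \eqref{msp9} is immediate, since a homomorphism of measurement spaces is in particular a homomorphism of sober lattices and hence preserves arbitrary joins by Theorem~\ref{arbitrarycontinuousjoins}, giving $h(\V S)=\V h(S)$.
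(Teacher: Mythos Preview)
Your proof is correct and follows essentially the same route as the paper: both argue \eqref{msp1}--\eqref{msp3} from the involution being a continuous self-inverse, obtain \eqref{msp4} by combining finite distributivity with preservation of directed joins under continuity, derive \eqref{msp5}--\eqref{msp6} by transporting across the involution, and treat \eqref{msp7}--\eqref{msp9} as immediate. The only cosmetic difference is in the chain for \eqref{msp8}, where the paper computes $e^*=e^*e=e^*e^{**}=(e^*e)^*=e^{**}=e$ instead.
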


\begin{proof}
\eqref{msp1} The continuous involution is its own inverse. \eqref{msp2} Homeomorphisms are order isomorphisms of the specialization order. \eqref{msp3} Order isomorphisms preserve all the joins. \eqref{msp4} Continuity of the product implies that it preserves joins of directed sets in the right variable, so due to distributivity and absorption all the joins are preserved in that variable. \eqref{msp5}--\eqref{msp6} Apply the involution to \eqref{msp4}. \eqref{msp7} Obvious. \eqref{msp8} $e^*=e^*e=e^* e^{**}=(e^* e)^*=e^{**}=e$. Finally, \eqref{msp9} is a consequence of the continuity of $h$, which implies that it preserves directed joins, and the fact that it preserves joins of finite subsets.
\end{proof}

\paragraph{Quantales.}

The above lemma shows that, algebraically, measurement spa\-ces are stably Gelfand quantales~\cite{GSQS,SGQ}. Let us recall the main definitions surrounding the notion of quantale:

\begin{definition}
By a \emph{quantale} is meant a sup-lattice $Q$ equipped with an associative multiplication $(a,b)\mapsto ab$ that distributes over arbitrary joins in each variable:
\[
a\bigl(\V_i b_i\bigr) = \V_i(ab_i)\quad\textrm{and}\quad\bigl(\V_i a_i\bigr) b=\V_i(a_i b).
\]
A quantale is \emph{unital} if there is unit for the multiplication, usually denoted by $e$.

Given a quantale $Q$, an element $a\in Q$ is \emph{right-sided} if $a1\le a$,  \emph{left-sided} if $1a\le a$, and \emph{two-sided} if it is both right- and left-sided. The set of right-sided elements of $Q$ is denoted by $\rs(Q)$, that of left-sided elements is denoted by $\ls(Q)$, and the set of two-sided elements is denoted by $\ts(Q)$.

An \emph{involutive quantale} is a quantale equipped with a semigroup involution $a\mapsto a^*$ that preserves arbitrary joins:
\[
\bigl(\V_i a_i\bigr)^*=\V_i a_i^*.
\]
An involutive quantale $Q$ is \emph{stably Gelfand} if for all $a\in Q$ we have
\[
aa^*a\le a\ \Longrightarrow\  a\le aa^*a,
\]
and it is \emph{Gelfand} (resp.\ \emph{strongly Gelfand}) if for all $a\in \rs(Q)$ (resp.\ all $a\in Q$) we have
\[
a\le aa^*a.
\]

A \emph{homomorphism} of quantales is a mapping which is both a sup-lattice homomorphism and a semigroup homomorphism. A homomorphism between unital quantales is \emph{unital} if it is also a monoid homomorphism, and a homomorphism between involutive quantales is \emph{involutive} if it commutes with the involutions.

By a \emph{subquantale} of $Q$ is meant a subset $S\subset Q$ closed under joins and the multiplication. Unital subquantales and involutive subquantales have the obvious meanings.
\end{definition}

The following are well known simple facts:

\begin{proposition}
Let $Q$ be a quantale.
\begin{enumerate}
\item We have the following lattice of subquantale inclusions:
\[
\xymatrix{
&Q\ar@{-}[dl]\ar@{-}[dr]\\
\ls(Q)\ar@{-}[dr]&&\rs(Q)\ar@{-}[dl]\\
&\ts(Q).
}
\]
\item If $Q$ is involutive, the involution restricts to an order isomorphism
\[
\rs(Q)\to\ls(Q).
\]
\item Any strongly Gelfand quantale is stably Gelfand, and any stably Gelfand quantale is Gelfand.
\end{enumerate}
\end{proposition}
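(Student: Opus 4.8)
The three items are essentially independent, so I would dispatch them in turn, the first two being routine closure checks and the third containing the only real idea.

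For item~(1), the plan is to verify directly from the definitions that $\rs(Q)$ and $\ls(Q)$ are subquantales. For the right-sided elements: if $a_i1\le a_i$ for all $i$, then distributivity of the product over joins gives $\bigl(\V_i a_i\bigr)1=\V_i(a_i1)\le\V_i a_i$, so the set is closed under arbitrary joins (including the empty join $0$, since $01=0$); and if $a1\le a$ and $b1\le b$ then $(ab)1=a(b1)\le ab$, giving closure under multiplication. The argument for $\ls(Q)$ is the symmetric one, swapping left and right. Finally $\ts(Q)=\rs(Q)\cap\ls(Q)$ is a subquantale as an intersection of two subquantales, and every inclusion displayed in the diagram is then immediate.

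For item~(2), I would first record two standing facts: the involution fixes the top, $1^*=1$, since $1=\V Q$ and the involution is a join-preserving bijection; and, being join-preserving, the involution is an order isomorphism of $Q$. Then, given right-sided $a$, applying the involution to $a1\le a$ and using $(a1)^*=1^*a^*=1a^*$ yields $1a^*\le a^*$, \ie\ $a^*\in\ls(Q)$; the symmetric computation sends $\ls(Q)$ into $\rs(Q)$. As the involution is its own inverse, its restriction is a mutually inverse bijection $\rs(Q)\to\ls(Q)$, and it is an order isomorphism because both it and its inverse preserve order.

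Item~(3) is where the single nontrivial step sits, and I expect it to be the main obstacle. The implication ``strongly Gelfand $\Rightarrow$ stably Gelfand'' is free: if $a\le aa^*a$ holds for every $a$, then the conditional defining the stably Gelfand property is satisfied vacuously. The content is ``stably Gelfand $\Rightarrow$ Gelfand''. Here I would take $a\in\rs(Q)$ and try to establish the hypothesis $aa^*a\le a$ of the stably Gelfand implication, after which that implication returns the sought conclusion $a\le aa^*a$. The crux is the observation that $a^*a\le 1$ holds trivially because $1$ is the top element, whence $aa^*a=a(a^*a)\le a1\le a$, the final step using right-sidedness exactly once. This is the only place the one-sided hypothesis of the Gelfand condition enters; everything else is bookkeeping.
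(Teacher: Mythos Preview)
Your proof is correct. The paper does not actually give a proof of this proposition, introducing it only as ``well known simple facts,'' so there is nothing to compare against; your argument fills in precisely the routine verifications one would expect, and the one genuine step---using $a^*a\le 1$ together with right-sidedness to force $aa^*a\le a$ and then invoking the stably Gelfand implication---is the standard one.
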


\begin{example}\label{exm:quantales}
Let us list some examples of quantales:
\begin{enumerate}
\item Any locale is a unital strongly Gelfand quantale for the multiplication given by binary meet, $ab=a\wedge b$, and the trivial involution $a^*=a$. The unit $e$ coincides with the top element $1$. Conversely, any unital quantale whose multiplication is idempotent and whose unit coincides with the top is a locale~\cite{JT}.
\item If $Q$ is any Gelfand quantale then $\ts(Q)$ is a locale, and its involutive quantale structure as a subquantale of $Q$ coincides with the locale structure as in the previous example~\cite{MP1}.
\item Given any set $X$, the set $\mathcal P(X\times X)$ of binary relations on $X$ under the inclusion order is a unital strongly Gelfand quantale whose multiplication is given by composition of binary relations,
\[
RS = R\circ S = \{(x,y)\st \exists_{z\in X}\ (x,z)\in R\text{ and }(z,y)\in S\},\\
\]
whose involution is relation reversal,
\[
R^* = \{(x,y)\st (y,x)\in R\},
\]
and whose unit is the diagonal relation $e=\Delta_X$.

\item More generally, in sections~\ref{sec:groupoids} and \ref{sec:etale} it will be seen that the topology of any open topological groupoid $G$ is a strongly Gelfand quantale $\opens(G)$~\cite{Re07}, unital if $G$ is \'etale; and
we have order isomorphisms $\rs(\opens(G))\cong\ls(\opens(G))\cong\topology(G_0)$, where $G_0$ is the object space of $G$.

\item Given any ring $R$, the set $\Sub(R)$ of additive subgroups of $R$ is a quantale, unital if $R$ has a unit, in which case $e$ is the additive cyclic group $\langle 1_R\rangle$. The right-sided elements of $\Sub(R)$ are precisely the right ideals of $R$, and the left-sided elements are the left ideals. See~\cite{Rosenthal1}. If $R$ is an involutive ring then $\Sub(R)$ is an involutive quantale with the pointwise involution. Analogous facts hold for topological rings and closed additive subgroups, closed right ideals, etc., or, more generally, for arbitrary topological $F$-algebras for some field $F$. Section~\ref{sec:Cstar} will address this example with $F=\CC$ for the specific case of C*-algebras.
\end{enumerate}
\end{example}

\begin{remark}
An equivalent definition of measurement space is that it is a stably Gelfand quantale equipped with a sober topology with respect to which the multiplication, the involution, and the joins are continuous; and a homomorphism of measurement spaces is the same as a continuous homomorphism of involutive quantales. However, Definition~\ref{def:ms} imposes only finitary algebraic structure, contrasting with the usual second-order definitions of quantale and quantale homomorphism, and it provides a distinction between ``essential'' algebraic structure and that which is derived from the topology. This is useful from the point of view of the ontological interpretation of measurement spaces.
\end{remark}

\subsection{C*-algebras}\label{sec:Cstar}

Algebraic quantum theory is based on describing quantum systems by means of C*-algebras that contain their observables, and therefore it is especially relevant to study measurement spaces associated to C*-algebras. This has already been done in the spin example (\cf\ Example~\ref{exm:Cstarrep}) in terms of the algebra $M_2(\CC)$, albeit without taking multiplication and involution into account.

\begin{theorem}
Let $A$ be a C*-algebra. Then $\Max A$ is a measurement space, separable if $A$ is separable, whose involution is computed pointwise and whose multiplication is defined for all $P,Q\in\Max A$ by
\[
PQ = \overline{\langle P\cdot Q\rangle}
\]
where $P\cdot Q$ is the pointwise product $\{ab\st a\in P,\ b\in Q\}$.
If $B$ is another C*-algebra and $f:A\to B$ is a $*$-homomorphism the mapping
\[
\Max f:\Max A\to\Max B
\]
defined by
$\Max f(P)=\overline{f(P)}$
is a homomorphism of measurement spaces.
\end{theorem}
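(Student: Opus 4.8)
The plan is to reuse as much of the already-established structure as possible and to isolate the genuinely new, topological content. Since a C*-algebra is in particular a locally convex space, the preceding theorem already gives that $\Max A$ is a sober lattice whose specialization order is inclusion and whose joins are closed linear spans, that these joins are continuous in every arity (Theorem~\ref{arbitrarycontinuousjoins}), and that $\Max A$ is separable whenever $A$ is. It therefore remains only to introduce the product and the involution, to verify that both are continuous for the lower Vietoris topology, and to check the five algebraic axioms of Definition~\ref{def:ms}. The purely algebraic fact that $\Max A$ is a stably Gelfand quantale is known \cites{Curacao,MP1,SGQ}, so the emphasis falls on continuity; indeed the remark following Definition~\ref{def:ms} says precisely that a measurement space is a stably Gelfand quantale carrying a sober topology for which product, involution and joins are continuous.

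For the involution I would set $P^*=\{a^*\st a\in P\}$, which is a closed linear subspace because $a\mapsto a^*$ is a conjugate-linear isometry of $A$; the identities $P^{**}=P$ and $(PQ)^*=Q^*P^*$ then descend from $a^{**}=a$ and $(ab)^*=b^*a^*$ once one knows that the involution commutes with the closure operator, which holds since it is a homeomorphism of $A$. Continuity is immediate: writing $U^*=\{a^*\st a\in U\}$, the preimage of a subbasic open $\ps U$ under $(-)^*$ is exactly $\ps{U^*}$, and $U^*$ is open because the involution of $A$ is a homeomorphism.

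The product $PQ=\overline{\langle P\cdot Q\rangle}$ is a closed linear subspace by construction, and unravelling the closed span gives $PQ=\V_{p\in P,\,q\in Q}\langle pq\rangle$. The crucial step is \emph{joint continuity}. Suppose $PQ\in\ps U$; since $\langle P\cdot Q\rangle$ is dense in $PQ$ and $U$ is open, there is an element $\sum_{i=1}^n p_iq_i\in U$ with $p_i\in P$ and $q_i\in Q$ (scalars being absorbed into the $p_i$). The map $(x_1,\dots,x_n,y_1,\dots,y_n)\mapsto\sum_i x_iy_i$ is continuous on $A^{2n}$ by bilinearity and joint continuity of the C*-multiplication, so there are opens $V_i\ni p_i$ and $W_i\ni q_i$ with $\sum_i x_iy_i\in U$ whenever $x_i\in V_i$ and $y_i\in W_i$. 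Then $\bigl(\bigcap_i\ps{V_i}\bigr)\times\bigl(\bigcap_i\ps{W_i}\bigr)$ is an open neighbourhood of $(P,Q)$ in $\Max A\times\Max A$: for any $(P',Q')$ in it one may pick $x_i\in P'\cap V_i$ and $y_i\in Q'\cap W_i$, whence $\sum_i x_iy_i\in U\cap P'Q'$ and so $P'Q'\in\ps U$. Associativity, distributivity and absorption all follow from the single \emph{closure-absorption principle} that if $x=\lim x_k$ with $x_k\in\langle\cdots\rangle$ then $xr=\lim x_kr$ by continuity of multiplication; this yields $(PQ)R=\overline{\langle P\cdot Q\cdot R\rangle}=P(QR)$ and, similarly, $P(Q\vee Q')=\overline{\langle P\cdot(Q\cup Q')\rangle}=PQ\vee PQ'$, while $P0=\overline{\langle P\cdot\{0\}\rangle}=0$ is immediate.

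For the stably Gelfand axiom I would either cite \cites{Curacao,MP1,SGQ} directly or reprove it by functional calculus: assuming $PP^*P\subseteq P$, an induction on $j$ shows $x(x^*x)^j\in P$ for all $x\in P$ and $j\ge 0$, whence $x(x^*x)^j=x\,x^*\bigl(x(x^*x)^{j-1}\bigr)\in\langle P\cdot P^*\cdot P\rangle$ for $j\ge1$; thus $x\,q(x^*x)\in PP^*P$ for every polynomial $q$ with $q(0)=0$, and since $x=\lim_n x(x^*x)^{1/n}$ in norm while $t\mapsto t^{1/n}$ is a uniform limit of such polynomials on the spectrum of $x^*x$, we get $x\in PP^*P$, i.e.\ $P\le PP^*P$. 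Finally, for a $*$-homomorphism $f\colon A\to B$, Corollary~\ref{cor:soblatfunctor} already gives that $\Max f(P)=\overline{f(P)}$ is a homomorphism of sober lattices; that it preserves the involution and the product follows from $f(a^*)=f(a)^*$, from multiplicativity of $f$, and again from closure-absorption, using $f(\overline S)\subseteq\overline{f(S)}$ together with continuity of the product in $B$. The main obstacle is the joint continuity of the product: separate continuity in each variable is easy, and can also be read off the universal property of Theorem~\ref{smallvsbigcont} applied to $a\mapsto\overline{aQ}$, but joint continuity requires the explicit interaction, displayed above, between the lower Vietoris subbasis and the closed span of pointwise products.
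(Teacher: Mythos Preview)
Your proposal is correct and follows essentially the same route as the paper: reduce to proving continuity of the product and the involution on top of the already-established sober lattice structure, and handle the product exactly as the paper does, by picking a finite sum $\sum_i p_iq_i\in U$ and using joint continuity of multiplication in $A$ to produce the neighbourhood $\bigl(\bigcap_i\ps{V_i}\bigr)\times\bigl(\bigcap_i\ps{W_i}\bigr)$. The only minor differences are that you verify continuity of $(-)^*$ by a direct preimage computation whereas the paper invokes Corollary~\ref{cor:soblatfunctor}, and you supply a self-contained functional-calculus argument for the stably Gelfand axiom where the paper simply cites~\cite{QFB}.
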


\begin{proof}
We have already seen that $\Max A$ is a sober lattice, separable if $A$ is. 
The involutive quantale structure of $\Max A$ is well known from~\cite{MP1}, where the functoriality of $\Max$ was established along with the formula for $\Max f$. Moreover, Corollary~\ref{cor:soblatfunctor} shows that $\Max f$ is continuous, and it is also known that $\Max A$ is stably Gelfand~\cite{QFB}. Hence, in order to show that $\Max A$ is a measurement space all we need to do is prove that the multiplication and the involution are continuous with respect to the lower Vietoris topology. Let us begin with the involution. The involution of $A$ is continuous, so Corollary~\ref{cor:soblatfunctor} shows that
it yields a continuous map $\Max A\to\Max A$ given for all $P\in\Max A$ by
\[
P\mapsto\overline{P^*}
\]
where $P^*$ is the pointwise involute of $P$. Moreover, the involution of $A$ is a homeomorphism, so $P^*$ is closed, and thus the mapping $P\mapsto P^*$, which is the involution of $\Max A$, is continuous.

Now let us prove that the multiplication of $\Max A$ is continuous.
Let $P,Q\in\Max A$, and suppose $\mathcal U\in\topology(\Max A)$ is such that $PQ\in\mathcal U$. I will show that there exist open sets $\mathcal V,\mathcal W\in\topology(\Max A)$ such that $P\in\mathcal V$, $Q\in\mathcal W$, and $\mathcal V\mathcal W\subset\mathcal U$. It suffices to do this for $\mathcal U=\ps U$ with $U\in\topology(A)$, in which case the condition $PQ\in\ps U$ translates to $\langle P\cdot Q\rangle\cap U\neq\emptyset$. Let then $a_1,\ldots, a_k\in P$ and $b_1,\ldots,b_k\in Q$ ($k\ge 1$) be such that
\[
a_1 b_1+\cdots+a_k b_k\in U.
\]
Due to the continuity of the multiplication of $A$ there are
\[
V_1,\ldots,V_k,W_1,\ldots, W_k\in\topology(A)
\]
such that $V_1 W_1+\cdots + V_k W_k\subset U$ and $a_i\in V_i$ and $b_i\in W_i$ for all $i=1,\ldots,k$. Then making $\mathcal V=\ps V_1\cap\ldots\cap \ps V_k$ and $\mathcal W=\ps W_1\cap\ldots\cap \ps W_k$ we obtain $P\in\mathcal V$ and $Q\in\mathcal W$. Now let $P'\in\mathcal V$ and $Q'\in\mathcal W$, and choose elements $a_i'\in P'\cap V_i$ and $b_i'\in Q'\cap W_i$ for each $i=1,\ldots, k$. Then
\[
a'_1 b'_1+\cdots+a'_k b'_k\in P'Q'\cap U,
\]
so $P'Q'\in\ps U$. This shows that $\mathcal V\mathcal W\subset\ps U$, so the multiplication of $\Max A$ is continuous.
\end{proof}

Let us denote the category of C*-algebras and $*$-homomorphisms by $\Cstarcat$.

\begin{corollary}
$\Max$ is a functor from $\Cstarcat$ to $\MS$.
\end{corollary}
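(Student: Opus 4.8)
The plan is to assemble the Corollary from the immediately preceding Theorem together with the already-established functoriality of $\Max:\LCS\to\SobL$ (Corollary~\ref{cor:soblatfunctor}). The key observation is that there is an evident forgetful functor $\Cstarcat\to\LCS$ sending each C*-algebra to its underlying (Banach, hence locally convex) space and each $*$-homomorphism to its underlying map, which is automatically continuous and linear since $*$-homomorphisms of C*-algebras are norm-decreasing. Thus every object and morphism of $\Cstarcat$ is simultaneously an object and morphism of $\LCS$, and crucially the assignment $P\mapsto\overline{f(P)}$ defining $\Max f$ is literally the same formula in both settings.

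First I would record that the object- and morphism-level assignments already take the required values: by the preceding Theorem, $\Max A$ is an object of $\MS$ for every C*-algebra $A$, and for every $*$-homomorphism $f:A\to B$ the map $\Max f:\Max A\to\Max B$, $P\mapsto\overline{f(P)}$, is a homomorphism of measurement spaces. Hence $\Max$ is well-defined as an assignment on the objects and morphisms of $\Cstarcat$ landing in $\MS$.

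It then remains only to verify the two functoriality laws, preservation of identities and of composition. But these hold already at the level of sober lattices: Corollary~\ref{cor:soblatfunctor} asserts that $\Max:\LCS\to\SobL$ is a functor, so for C*-algebra morphisms $f:A\to B$ and $g:B\to C$ we have $\Max(\ident_A)=\ident_{\Max A}$ and $\Max(g\circ f)=\Max g\circ\Max f$ as morphisms of $\SobL$ (concretely, $\Max(\ident_A)(P)=\overline P=P$ because $P$ is closed, and the composition law is handled by the uniqueness of the sup-lattice extension in Theorem~\ref{smallvsbigcont}). Since the forgetful functor $\MS\to\SobL$ is faithful --- a homomorphism of measurement spaces is determined by its underlying function --- these equalities of $\SobL$-morphisms lift verbatim to equalities of $\MS$-morphisms. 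There is no genuine obstacle here: all the analytic content (continuity of the multiplication and involution, the stably Gelfand property) was discharged in the preceding Theorem, and what remains is the purely formal bookkeeping of transporting an already-functorial construction across a faithful forgetful functor.
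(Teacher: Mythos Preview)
Your proposal is correct and matches the paper's treatment: the paper states this corollary without proof, treating it as immediate from the preceding Theorem (which supplies the object and morphism assignments landing in $\MS$) together with the functoriality already noted there via~\cite{MP1} and Corollary~\ref{cor:soblatfunctor}. Your explicit use of the faithful forgetful functor $\MS\to\SobL$ to transport the functoriality laws is a clean way to spell out what the paper leaves implicit.
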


Remarkably, at least for unital C*-algebras, the functor $\Max$ is a complete invariant, due to the following result (which considers only discrete quantales):

\begin{theorem}[\cite{KR}]\label{completeinv}
Let $A$ and $B$ be unital C*-algebras. There is a unital $*$-isomorphism $\varphi:A\to B$ if and only if there exists an isomorphism of unital involutive quantales $\alpha:\Max A\to\Max B$.
\end{theorem}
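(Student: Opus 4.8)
The forward implication is immediate and does not use the hypothesis that $A,B$ are unital beyond preservation of the unit: since $\Max$ is a functor from $\Cstarcat$ to $\MS$ (and in particular to unital involutive quantales), a unital $*$-isomorphism $\varphi\colon A\to B$ yields a homomorphism $\Max\varphi$ whose two-sided inverse is $\Max(\varphi^{-1})$ by functoriality, so $\Max\varphi$ is an isomorphism of unital involutive quantales. I would therefore concentrate entirely on the converse, which is the substance of the result of~\cite{KR}.

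The plan for the converse is to reconstruct $A$, as a unital $*$-algebra, from the involutive quantale $\Max A$, and then show that an isomorphism $\alpha\colon\Max A\to\Max B$ transports this reconstruction into a $*$-isomorphism. The first step is to recognize that the atoms of $\Max A$ --- the minimal nonzero elements of the specialization (inclusion) order --- are exactly the one-dimensional subspaces $\langle a\rangle$ with $a\neq 0$, so the atoms are the points of the projective space of $A$, while the lines, planes, and so on are recovered as joins of atoms. Being an order isomorphism, $\alpha$ carries atoms to atoms and preserves the incidence relation of this projective geometry.

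The heart of the argument is then an application of the fundamental theorem of projective geometry (in its version for lattices of closed subspaces of Banach spaces, \`a la Kakutani--Mackey): the induced collineation between the projective geometries of $A$ and $B$ is implemented by a bijective $\sigma$-semilinear map $T\colon A\to B$ for some field automorphism $\sigma$ of $\CC$, provided the linear dimension is $\ge 3$; the exceptional low-dimensional cases ($A\cong\CC$ or $A\cong\CC^2$) I would dispose of directly. The crucial point, and precisely what makes the \emph{involutive} quantale structure indispensable, is that $\sigma$ must be pinned down: the involution of $\Max A$ sends $\langle a\rangle$ to $\langle a^*\rangle$, so compatibility of $\alpha$ with the two involutions forces $\sigma$ to commute with complex conjugation; since $\RR$ is the fixed field of conjugation and admits no nontrivial field automorphisms, $\sigma\in\{\ident,\text{conjugation}\}$. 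Using that $\alpha$ fixes the unit $e=\langle 1\rangle$, I would normalize $T(1_A)=1_B$, and in the conjugate-linear case compose with the ambient conjugation so as to assume $T$ is linear.

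It remains to promote $T$ from a linear bijection respecting lines to an honest multiplicative $*$-map. Multiplicativity of $\alpha$ gives $\langle T(ab)\rangle=\langle T(a)T(b)\rangle$, hence $T(ab)=\nu(a,b)\,T(a)T(b)$ for scalars $\nu(a,b)$, and the involution gives $T(a^*)=\mu(a)\,T(a)^*$. I expect the delicate part of the proof to be showing that these scalar cocycles are trivial, that is, that $T$ can be rescaled so as to be genuinely multiplicative and $*$-preserving. This I would carry out by testing the relations on the unit and on self-adjoint elements and idempotents, using $T(1)=1$ together with the $*$-structure to propagate triviality of the scalars throughout $A$. Once $T$ is seen to be a bijective unital $*$-homomorphism, it is the desired isomorphism $\varphi=T$, and the reconstruction is complete; the full details are in~\cite{KR}.
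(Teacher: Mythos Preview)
The paper does not prove this statement: it is quoted from~\cite{KR} without any accompanying argument, and the surrounding text merely uses the result before moving on to show that the lower Vietoris topology rules out the pathological automorphism~\eqref{eq:alpha} of $\Max\CC^2$. So there is no proof in the paper against which to compare your attempt.

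For what it is worth, your outline --- recover the projective space of $A$ from the atoms of $\Max A$, invoke the fundamental theorem of projective geometry in its Banach-space (Kakutani--Mackey) form to produce a $\sigma$-semilinear bijection $T$, then use the involution and multiplication of the quantale to pin down $\sigma$ and trivialize the scalar cocycles --- is indeed the strategy of~\cite{KR}, and you rightly defer to that reference for the details. One step that would need more care than you indicate is the claim that compatibility with the involution alone forces $\sigma$ to commute with complex conjugation; the relation you get from $\alpha(\langle a^*\rangle)=\alpha(\langle a\rangle)^*$ is $T(a^*)=\mu(a)\,T(a)^*$ for some scalar function $\mu$, and extracting $\sigma(\bar\lambda)=\overline{\sigma(\lambda)}$ from this requires an additional argument (in~\cite{KR} the multiplicative structure and the unit are used here as well, not just the involution).
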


This raises the question of how much information the measurement space $\Max A$ contains about the C*-algebra $A$, and therefore also about any quantum system which is described by $A$. While on one hand $\Max A$ classifies $A$ up to $*$-isomorphisms, on the other it does not classify the isomorphisms themselves, which means there is loss of structural information in the assignment $A\mapsto\Max A$. In particular, in general
\[
\Aut(A)\not\cong\Aut(\Max A),
\]
and a simple example of this is obtained by taking $A=\CC^2$ because the involutive quantale $\Max A$ has the following automorphism $\alpha$ which is not of the form $\Max\varphi$~\cite{KR}:
\begin{equation}\label{eq:alpha}
\begin{array}{rcll}
\alpha\bigl( (z,w)\CC\bigr) &=&  (w,z)\CC &\textrm{if }z\neq 0\textrm{ and }w\neq 0\\
\alpha\bigl( (z,0)\CC\bigr) &=&  (z,0)\CC\\
\alpha\bigl( (0,w)\CC\bigr) &=& (0,w)\CC.
\end{array}
\end{equation}
However, the lower Vietoris topology rules out this pathology, thus showing that measurement spaces provide a better complete invariant of unital C*-algebras than discrete involutive quantales do:

\begin{theorem}
The automorphism $\alpha\in\Aut(\Max \CC^2)$ of \eqref{eq:alpha} is not continuous with respect to the lower Vietoris topology of $\Max\CC^2$.
\end{theorem}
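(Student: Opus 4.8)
The plan is to exhibit a single point of $\Max\CC^2$ at which $\alpha$ fails to be continuous, using the open-set characterization of continuity at a point. I take $P=(1,0)\CC$, the ``$x$-axis'': since $P=(z,0)\CC$, the defining formula gives $\alpha(P)=P$. As target open set I use the subbasic lower Vietoris open
\[
\mathcal U=\ps U,\qquad U=\{(a,b)\in\CC^2\st |b|<|a|\},
\]
the dilation-invariant open cone around the $x$-axis. Here $U$ is open (it equals $\{(a,b):|a|-|b|>0\}$) and meets $P$ (e.g.\ $(1,0)\in P\cap U$), so $P\in\mathcal U$ and hence $\alpha(P)=P\in\mathcal U$. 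The entire argument then reduces to showing that no neighborhood of $P$ is carried by $\alpha$ into $\mathcal U$.

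To that end I would fix an arbitrary basic neighborhood $\mathcal V=\ps{U_1}\cap\cdots\cap\ps{U_n}$ of $P$, so each $U_i\in\topology(\CC^2)$ meets the $x$-axis, say $(a_i,0)\in U_i$. I then test the generic lines $L_t=(1,t)\CC$ for $t\neq0$. For each $i$, either $(a_i,0)$ is the origin, in which case $L_t$ automatically meets $U_i$ (every subspace contains $0$), or $a_i\neq0$, in which case $(a_i,a_it)\in L_t$ tends to $(a_i,0)\in U_i$ as $t\to0$, hence lies in $U_i$ for small $|t|$. Since there are finitely many indices, there is $\delta>0$ with $L_t\in\mathcal V$ whenever $0<|t|<\delta$.

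The key computation is that $\alpha$ throws these lines to the opposite axis: because both coordinates of $(1,t)$ are nonzero, $\alpha(L_t)=(t,1)\CC$, whose nonzero vectors $(\mu t,\mu)$ satisfy $|\mu t|=|t|\,|\mu|<|\mu|$ as soon as $|t|<1$; thus $(t,1)\CC\cap U=\emptyset$, i.e.\ $\alpha(L_t)\notin\mathcal U$. Choosing any $t$ with $0<|t|<\min(\delta,1)$ yields $L_t\in\mathcal V$ while $\alpha(L_t)\notin\mathcal U$, so $\alpha(\mathcal V)\not\subseteq\mathcal U$. As $\mathcal V$ was an arbitrary basic neighborhood of $P$, this establishes discontinuity of $\alpha$ at $P$ and hence the theorem.

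The one genuine design decision — and the only step requiring thought — is the choice of $U$: it must be a dilation-invariant cone, so that $\ps U$ detects the \emph{direction} of a line rather than merely whether the line passes through a fixed bounded region. A ball or a half-space around $(1,0)$ would not work, since the line $(t,1)\CC$ still runs off to infinity and meets any such region far from the origin, leaving $\alpha(L_t)\in\ps U$ and destroying the argument. Everything else is routine continuity bookkeeping, complicated only by the need to treat the finitely many subbasic factors of $\mathcal V$ uniformly and to record that every linear subspace contains $0$.
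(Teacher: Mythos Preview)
Your proof is correct and follows the same strategy as the paper's: exhibit discontinuity at $P=(1,0)\CC$ by showing that nearby lines $L_t=(1,t)\CC$ are mapped to $(t,1)\CC$, which miss a fixed subbasic open $\ps U\ni\alpha(P)=P$. The paper uses the sequence $t=1/n$ and the polydisk $U=D(1)\times D(0)$ (open disks of radius $\tfrac12$ about $1$ and $0$); you use a generic small $t$ and the cone $\{|b|<|a|\}$, with a direct neighborhood argument in place of the paper's sequential one. These are cosmetic differences.

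One correction to your closing commentary: a bounded region around $(1,0)$ \emph{does} work, and the paper's polydisk is exactly such a region. For $|t|<1$, any point $(\mu t,\mu)$ on $(t,1)\CC$ with $|\mu|<\tfrac12$ has $|\mu t|<\tfrac12$ and hence $|\mu t-1|>\tfrac12$, so the line misses $D(1)\times D(0)$ entirely. Your intuition that the line ``runs off to infinity and meets any such region'' is a residue of real $2$-dimensional thinking; in $\CC^2\cong\RR^4$ the complex line $(t,1)\CC$ is a real $2$-plane that simply avoids a small polydisk near $(1,0)$. Dilation-invariance of $U$ is a pleasant convenience (it makes $\ps U$ select lines purely by direction) but not a necessity.
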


\begin{proof}
Let us assume that $\Max \alpha$ is continuous and derive a contradiction. We have $\Max \alpha\bigl((1,\frac 1 n)\CC\bigr)=(\frac 1 n, 1)\CC$ for all $n\in\NN_{>0}$, and therefore the sequence $(\frac 1 n,1)\CC$ in $\Max \CC^2$ must converge to $\Max\alpha\bigl((1,0)\CC\bigr)=(1,0)\CC$. Let $D(0)$ and $D(1)$ be the open disks in $\CC$ with radius $1/2$ centered in $0$ and $1$, respectively, and let $U$ be the open set $D(1)\times D(0)\subset \CC^2$. Then $(1,0)\CC\in\ps U$ because $(1,0)\in U\cap (1,0)\CC$, whereas for each $n\in\NN_{>0}$ the condition $(\frac 1 n,1)\CC\in\ps U$ means that there is $\lambda_n\in\CC$ such that $\lambda_n(\frac 1 n,1)\in U$, so we have both
\[
\left\vert\frac{\lambda_n}n-1\right\vert<\frac 1 2\quad\textrm{and}\quad\vert\lambda_n\vert<\frac 1 2.
\]
This is impossible, so $\Max\alpha$ is not continuous.
\end{proof}

An in-depth study of the properties of the functor $\Max$ is beyond the scope of this paper. I shall only record the following useful lemma, which does not follow from Theorem~\ref{completeinv} and also applies to non-unital C*-algebras:

\begin{lemma}
Let $A$ be a C*-algebra. Then $A$ is commutative if and only if $\Max A$ is commutative.
\end{lemma}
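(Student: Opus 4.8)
The plan is to prove both implications directly. One direction is essentially trivial: if $A$ is commutative, then for all $a,b\in A$ we have $ab=ba$, so the pointwise products satisfy $P\cdot Q = Q\cdot P$ for all linear subspaces $P,Q\subset A$, whence $\langle P\cdot Q\rangle = \langle Q\cdot P\rangle$ and therefore $PQ=\overline{\langle P\cdot Q\rangle}=\overline{\langle Q\cdot P\rangle}=QP$. So $\Max A$ is commutative.

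For the converse, I would argue contrapositively: if $A$ is not commutative, I want to exhibit two elements of $\Max A$ that fail to commute. The natural candidates are the one-dimensional subspaces (spans). If $a,b\in A$ satisfy $ab\neq ba$, then I would like to conclude that $\langle a\rangle\langle b\rangle \neq \langle b\rangle\langle a\rangle$. Now $\langle a\rangle\langle b\rangle = \overline{\langle ab\rangle} = \overline{\langle\{ \lambda ab \mid \lambda\in\CC\}\rangle}$, which (since $\langle ab\rangle$ is already the closed one-dimensional span of $ab$, or $\{0\}$ when $ab=0$) equals the one-dimensional space $\langle ab\rangle$; similarly $\langle b\rangle\langle a\rangle=\langle ba\rangle$. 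Two spans $\langle ab\rangle$ and $\langle ba\rangle$ coincide precisely when $ab$ and $ba$ are scalar multiples of one another. So the subtlety is that $ab\neq ba$ does \emph{not} immediately force $\langle ab\rangle\neq\langle ba\rangle$: it could happen that $ba=\mu\, ab$ for some scalar $\mu\neq 1$ even though $ab\neq ba$.

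This is the step I expect to be the main obstacle, and handling it is where the C*-structure must enter (the statement, after all, is false for general rings). I would resolve it using the involution and the norm. Suppose, for contradiction, that $\langle ab\rangle=\langle ba\rangle$ for all $a,b$, i.e. $ba=\mu\,ab$ with $\mu=\mu(a,b)\in\CC$. A clean way to force $\mu=1$ is to specialize: take $b=a^*$. Then the hypothesis gives $a^*a = \mu\, a a^*$. Taking the trace-free / norm-based invariants of a C*-algebra, $a^*a$ and $aa^*$ always have the same norm, $\norm{a^*a}=\norm{a}^2=\norm{aa^*}$, and both are positive elements with equal spectra; from $a^*a=\mu\,aa^*$ together with positivity of both sides (a positive element equals a scalar times a positive element only if the scalar is a nonnegative real) one gets $\mu\ge 0$, and then comparing norms forces $\mu=1$, so $a^*a=aa^*$, i.e. every element of $A$ is normal. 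It is a standard C*-algebra fact that a C*-algebra in which every element is normal is commutative. Thus the assumption that all spans commute in $\Max A$ forces $A$ commutative, completing the contrapositive.

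Alternatively, and perhaps more cleanly, I would avoid the scalar-ambiguity entirely by testing commutativity of $\Max A$ on larger subspaces rather than spans: for instance compare $\langle a\rangle\langle b\rangle$ with the join-generated products, or use that if $\Max A$ is commutative then in particular $PQ=QP$ when $P=\langle a,a^*\rangle$ and $Q=\langle b,b^*\rangle$, which pins down the products as honest two-sided objects immune to the scalar rescaling. Either route reduces the problem to the elementary C*-fact that ``every element normal'' implies commutative, so the real content of the lemma is isolating that reduction; the rest is the routine computation $PQ=\overline{\langle P\cdot Q\rangle}$ recorded in the preceding theorem.
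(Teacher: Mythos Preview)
Your argument is correct and takes a genuinely different route from the paper's. The paper proves the nontrivial direction by invoking representation theory: if $A$ is noncommutative it has an irreducible representation $\pi$ on a Hilbert space $H$ of dimension $>1$, and Kadison's transitivity theorem then furnishes $a,b\in A$ and $x\in H$ with $\pi(ab)x=0$ but $\pi(ba)x\neq 0$; the induced $\Max A$-module structure on $P(H)$ from~\cite{KR} then separates $\langle a\rangle\langle b\rangle$ from $\langle b\rangle\langle a\rangle$. Your approach is considerably more elementary: you test commutativity of $\Max A$ only on pairs $\langle a\rangle,\langle a^*\rangle$, use positivity and the C*-identity to force $a^*a=aa^*$, and then appeal to the standard fact that a C*-algebra all of whose elements are normal is commutative (equivalently, $a+ib$ normal for self-adjoint $a,b$ forces $ab=ba$). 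This avoids representation theory and the external module machinery entirely, and in fact yields a cleaner explicit witness: any non-normal $a$ gives $\langle a\rangle\langle a^*\rangle\neq\langle a^*\rangle\langle a\rangle$. The paper's route, by contrast, exhibits how the $\Max A$-module on $P(H)$ detects noncommutativity, which ties in with the themes of~\cite{KR} used elsewhere in the section. Your ``alternative'' paragraph at the end is unnecessary; the main line through normality already closes the argument.
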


\begin{proof}
It is trivial that if $A$ is commutative then so is $\Max A$. For the converse assume that $A$ is not commutative. Then there is an irreducible representation $\pi:A\to B(H)$ on a Hilbert space with dimension greater than $1$. Due to the transitivity theorem there is a copy of $M_2(\CC)$ in the image $\pi(A)$, so we can find $a,b\in A$ and $x\in H$ such that $\pi(ab)(x)=0$ and $\pi(ba)(x)\neq 0$. Then the induced principal left $\Max A$-module $P(H)$ (\cf~\cite{KR}) satisfies $\langle a\rangle\langle b\rangle\cdot\langle x\rangle=\{0\}$ and $\langle b\rangle\langle a\rangle\cdot\langle x\rangle\neq\{0\}$, so $\langle a\rangle \langle b\rangle\neq\langle b\rangle\langle a\rangle$, showing that $\Max A$ is not commutative.
\end{proof}

\subsection{Classical measurements and groupoids}\label{sec:groupoids}

\paragraph{Classical measurement spaces.}

Adopting the stance of section~\ref{sec:classical}, according to which classical measurements are those that are compatible with the idea that the system being measured has an intrinsic state that does not depend on observation, one is led to the following definition:

\begin{definition}\cite{fop}\label{def:classicalmeasurementspaces}
A measurement space $M$ is said to be \emph{classical} if, as a sober lattice, it is a space of classical measurements in the sense of Definition~\ref{def:classicalmeasurements}, and moreover it satisfies the following condition for all $m\in M$:
\[
m\le mm^*m.
\]
\end{definition}

In particular, the involutive quantale associated to a classical measurement space is strongly Gelfand. The intuitive justification for this rests on the ideas that the product $m^*m$ corresponds to performing $m$ and then undoing it to the extent possible, so that, even if the system is not necessarily brought back to its original state, at least the original state should be one of the possible outcomes~\cite{fop}.

\begin{example}\label{exm:relationalms}
Let $X$ be any countable set. The quantale of binary relations $\mathcal P(X\times X)$ is strongly Gelfand (\cf\ Example~\ref{exm:quantales}).
Therefore, since $X\times X$ with the discrete topology is a second countable locally compact sober space, its topology $\topology(X\times X)=\mathcal P(X\times X)$, carrying the Scott topology, is a classical measurement space.
This is a particular example of the measurement spaces associated to groupoids which will be addressed below. Here $X$ can be regarded as being the set of \emph{states} of a classical system whose changes of state are described by the binary relations.
\end{example}

\paragraph{Local measurement spaces.} A particular type of a classical measurement space is obtained from any countably based locally compact locale without adding any extra structure for the multiplication and the involution:

\begin{definition}\cite{fop}
By a \emph{local measurement space} is meant a classical measurement space $M$ for which the multiplication and the involution are given for all $m,n\in M$ by
\[
mn = m\wedge n\quad\text{and}\quad m^*=m.
\]
\end{definition}

Note that in any local measurement space $M$ the multiplication is commutative and idempotent, thus conveying that the order in which measurements are performed is irrelevant, and that performing the same measurement more than once adds no new information.
Moreover, in such a measurement space we can proceed by successive approximations, for if $m\le n$ we have $mn=m$, which can be interpreted as saying that performing $n$ can be improved by subsequently performing the ``sharper'' measurement $m$. Similarly, $nm=m$ implies that after $m$ has been observed no new information is obtained by performing the less precise measurement $n$. In particular, the least precise measurement $1$ is such that $m1=1m=m$ for all $m\in M$ (so $M$ is unital with $e=1$), so it adds nothing to any other measurement. More generally, for arbitrary $m,n\in M$ we have $mn\le m$ and $mn\le n$, meaning that performing both $m$ and $n$ provides information which is at least as sharp as that of either $m$ or $n$ alone.

Since every local measurement space is homeomorphic to one of the form $\topology(X)$ for some second-countable locally compact sober space $X$, 
we may ask what the role played by the points of $X$ is. If $X$ happens to be a discrete set then each point $x\in X$ can be identified with the singleton open set $\{x\}$. This is the case, for instance, when $X$ is just a finite set of values that an observable can have: we can identify measurements of those values with the values themselves. However, if $X$ is not discrete then this identification of points with singleton open sets disappears, but in any case we may regard $X$ as providing the set of \emph{states} of the system being measured.

\begin{example}\label{exm:labwall}\cite{fop}*{Example 7} Let $W=[a,b]\times[c,d]\subset \RR^2$ ($a<b$ and $c<d$) represent the surface of a \emph{wall} of the lab where we are making measurements. If we observe the wall by visual inspection, photons reflected by the wall carry information about regions of the wall rather than about infinitely precise points, so such visual measurements can be represented by open sets of $W$.
However, the space $W$ does provide a useful model of the local measurement space $\topology(W)$. And, since $W$ is sober (because it is Hausdorff), the set $W$ together with its topology can be recovered from the locale structure of $\topology(W)$ up to a homeomorphism. Hence, the information carried by the local measurement space $\topology(W)$ is equivalent to that of the space $W$, whose points are the ``states'' of the lab wall. But it should be stressed that the points are farther away from direct observation than the open sets are.
\end{example}

\paragraph{Groupoids.}

Groupoids provide a way of describing symmetries of a non-global type~\cite{Weinstein}, and they also yield examples of classical measurement spaces. Let us give a first introduction to this, essentially recalling basic definitions mostly to fix terminology and notation.

\begin{definition}\label{def:groupoid}
A \emph{topological group\-oid} $G$, or simply a \emph{group\-oid} since no other kind will be addressed here, consists of topological spaces $G_0$ (the space of \emph{objects}, or \emph{units}) and $G_1$ (the space of \emph{arrows}), together with continuous structure maps
\[G\ \ \ \ =\ \ \ \ \xymatrix{
G_2\ar[rr]^-m&&G_1\ar@(ru,lu)[]_i\ar@<1.2ex>[rr]^r\ar@<-1.2ex>[rr]_d&&G_0\ar[ll]|u
}\]
that satisfy the axioms listed below,
where: $G_2$ is the pullback of the \emph{domain map} $d$ and the \emph{range map} $r$,
\[
G_2=\{(g,h)\in G_1\times G_1\st d(g)=r(h)\};
\]
the map $m$ is the \emph{multiplication} and $i$ is the \emph{inversion}; the map $u$ is required to be a section of both $d$ and $r$, so
often $G_0$ is identified with a subspace of $G_1$ (with $u$ being the inclusion map). With this convention, and writing $gh$ instead of $m(g,h)$, as well as $g^{-1}$ instead of $i(g)$, the axioms satisfied by the structure maps are:
\begin{itemize}
\item $d(x)=x$ and $r(x)=x$ for each $x\in G_0$;
\item $g(hk)=(gh)k$ for all $(g,h)\in G_2$ and $(h,k)\in G_2$;
\item $g d(g)=g$ and $r(g) g=g$ for all $g\in G_1$;
\item $d(g)=g^{-1}g$ and $r(g)= gg^{-1}$ for all $g\in G_1$.
\end{itemize}
The group\-oid $G$ is said to be \emph{open} if the map $d$ is open, which is equivalent to requiring that $r$ is open and also equivalent to requiring that $m$ is open. Then by a \emph{locally compact group\-oid} will be meant an open group\-oid $G$ such that $G_1$ (and hence also $G_0$) is a locally compact space. Finally, a locally compact group\-oid is \emph{second-countable} if $G_1$ (and hence also $G_0$) is second-countable, and it is \emph{sober} if both spaces are sober.
\end{definition}

\begin{example}
Any locally compact group\-oid in the sense of~\cite{Paterson} is a second-countable locally compact group\-oid in the sense of Definition~\ref{def:groupoid}, and so is any Lie group\-oid.
\end{example}

\begin{example}
Let $X$ be a topological space. The \emph{pair groupoid} of $X$, written $\pair(X)$, is the groupoid $G$ defined by
\begin{eqnarray*}
G_1 &=& X\times X\ \text{(with the product topology)}\\
G_0 &=& X
\end{eqnarray*}
such that $d$ and $r$ are the projections $\pi_2$ and $\pi_1$, respectively, $u$ is the diagonal map $X\to X\times X$, and multiplication and inversion are respectively given by
\[
(x,y)(y,z) = (x,z)\quad\text{and}\quad (x,y)^{-1}=(y,x).
\]
\end{example}

\begin{lemma}
Let $G$ be a sober locally compact group\-oid. The topology $\topology(G_1)$, with the Scott topology, is a measurement space --- denoted by $\ms(G)$ ---, and it is classical if and only if $G$ is second-countable (\cf\ Example~\ref{exm:quantales}).
\end{lemma}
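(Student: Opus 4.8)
The plan is to isolate the purely algebraic content, which is already available, from the topological content, which is where the work lies. Algebraically, $\ms(G):=\topology(G_1)$ is the involutive quantale $\opens(G)$ of Example~\ref{exm:quantales}(4) and~\cite{Re07}: the product of two open sets is the convolution $UV=m\bigl((U\times V)\cap G_2\bigr)$, which is open precisely because $G$ is assumed open (so $m$ is an open map), and the involution is $U^*=i(U)$. Granting this, the first four algebraic axioms of a measurement space in Definition~\ref{def:ms} --- absorption $U\emptyset=\emptyset$, distributivity over binary joins, $U^{**}=U$, and $(UV)^*=V^*U^*$ --- are exactly the involutive quantale identities, and the stably Gelfand axiom holds because $\opens(G)$ is in fact strongly Gelfand and, by the proposition following the definition of quantale, every strongly Gelfand quantale is stably Gelfand. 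So nothing algebraic needs to be reproved.

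The topological part is to check that $\opens(G)$, carrying the Scott topology, is a sober lattice and that the product and the involution are continuous. First I would observe that $G_1$ is locally compact and sober, so $\topology(G_1)$ is a locally compact locale, in particular a continuous lattice; by the earlier lemma that any continuous lattice with the Scott topology is a sober lattice, $\ms(G)$ is a sober lattice, its joins being unions and hence continuous by Theorem~\ref{arbitrarycontinuousjoins}. For the operations I would use the characterization of the Scott topology recalled in the proof of that same lemma: a monotone map of continuous lattices is continuous iff it preserves directed joins, and a two-variable map is continuous for the product topology iff it is continuous in each variable separately. The involution $U\mapsto U^*$ preserves arbitrary joins (since $i$ is a homeomorphism it permutes open sets and commutes with unions), hence directed joins, so it is Scott-continuous. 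The convolution distributes over arbitrary joins in each variable by the quantale axioms, hence preserves directed joins in each variable, hence is separately Scott-continuous and therefore jointly continuous. This establishes that $\ms(G)$ is a measurement space.

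For the equivalence it then suffices to compare $\ms(G)$ against the conditions defining a classical measurement space in Definition~\ref{def:classicalmeasurementspaces}. All but one of these hold for every sober locally compact groupoid: $\ms(G)$ is a measurement space by the above; as a sober lattice it is a locally compact locale equipped with the Scott topology (local compactness of $G_1$); and it is strongly Gelfand by Example~\ref{exm:quantales}(4). The one outstanding condition from Definition~\ref{def:classicalmeasurements} is that the locale be countably based, i.e.\ that $\topology(G_1)$ admit a countable basis, which holds exactly when $G_1$ is second-countable. Since second-countability of a locally compact groupoid is by definition that of $G_1$ (Definition~\ref{def:groupoid}), $\ms(G)$ is classical if and only if $G$ is second-countable.

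The only point genuinely requiring care --- and what I would flag as the main obstacle --- is keeping the two roles of $\topology(G_1)$ distinct: as a frame it is the locally compact locale responsible for sobriety of the Scott topology (its finitary lattice operation being meet), while as an involutive quantale its multiplication is the groupoid convolution, and it is this quantale that carries the strongly Gelfand property. Once these are kept apart, continuity of the convolution reduces cleanly to separate Scott-continuity through quantale distributivity, so no hands-on neighbourhood estimate of the kind used for the lower Vietoris topology is needed here.
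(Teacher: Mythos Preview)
Your proposal is correct and follows essentially the same route as the paper's proof: cite the involutive quantale structure of $\opens(G)$ from~\cite{Re07}, note strong Gelfandness, use local compactness of $G_1$ to get a continuous lattice and hence a sober lattice under the Scott topology, and then derive continuity of the operations from join preservation in each variable; the equivalence with second-countability is handled identically. You are simply more explicit than the paper about the involution's continuity and about the passage from separate to joint Scott-continuity of the product.
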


\begin{proof}
Since the multiplication map of an open group\-oid is open, the topology $\topology(G_1)$ is a quantale under pointwise multiplication (see~\cite{Re07}). It is also involutive under pointwise involution, and clearly it is strongly Gelfand. Hence, since $G$ is locally compact, $\topology(G_1)$ is a locally compact locale and thus the Scott topology makes it a sober lattice. Moreover, the multiplication preserves joins in each variable, and thus it is continuous (\cf\ comments that precede Lemma~\ref{lem:contscottsober}). So $\topology(G_1)$ is a space of classical measurements if and only if it is countably-based, \ie, if and only if $G$ is a second-countable groupoid.
\end{proof}

A groupoid can be regarded as a space $G_0$ which is equipped with additional local symmetries given by the groupoid structure. For instance, instead of regarding the lab wall $W$ of Example~\ref{exm:labwall} as a local measurement space $\topology(W)$, additional structure may be considered such that $W$ is the object space of a groupoid $G$ with associated classical measurement space $\ms(G)$. (Depending on the type of topology carried by the groupoid, $\topology(W)$ may or may not be an actual subspace of $\ms(G)$ --- \cf\ section~\ref{sec:etale}.)

\subsection{\'Etale groupoids}\label{sec:etale}

Let $Q$ be a stably Gelfand quantale, and $b\in Q$ a \emph{projection}, by which is meant that $b=b^2=b^*$. Then the set
\[
\ipi_b(Q) = \{s\in M\st ss^*\le b,\ s^* s\le b,\ sb\le s,\ bs\le s\},
\]
with the same multiplication as that of $Q$,
is a \emph{pseudogroup}~\cite{SGQ}. Details about pseudogroups will not be needed in this paper, but the definition of $\ipi_b(Q)$ will be important in a few places. We remark that $\ipi_b(Q)$ is a monoid with unit $b$. Note that if $b=1$ we have $\ipi_b(Q)=\ts(Q)$. Especially relevant is that if $Q$ is unital and $b=e$, in which case we write $\ipi(Q)$ instead of $\ipi_e(Q)$, the definition simplifies to
\[
\ipi(Q) = \{s\in M\st ss^*\le e,\ s^* s\le e\}.
\]

\begin{definition}\label{def:IQF}~\cite{Re07,SGQ}
By an \emph{inverse quantal frame} is meant a unital stably Gelfand quantale which is also a locale such that $1=\V\ipi(Q)$. The set of elements $b\le e$ is called the \emph{base locale} and it is denoted by $Q_0$. Then $Q$ is said to be \emph{spatial} if $Q_0$ is a spatial locale, and \emph{locally compact} if $Q_0$ is a locally compact locale (equivalently, $Q$ itself is spatial and locally compact, respectively).
\end{definition}

\begin{example}
Let $X$ be a set. The quantale of binary relations $\mathcal P(X\times X)$ is an inverse quantal frame whose set of partial units consists of the graphs $\operatorname{Graph}(f)$ of bijections $f:U\to V$ for subsets $U,V\subset X$.
\end{example}

A topological groupoid $G$ is said to be \emph{\'etale} if it satisfies the three equivalent conditions of the following proposition:

\begin{theorem}[\cite{Re07}]\label{thm:etale}
Let $G$ be a topological groupoid. The following conditions are equivalent:
\begin{enumerate}
\item The domain map $d:G_1\to G_0$ is a local homeomorphism (equivalently, all the structure maps are local homeomorphisms).
\item $G$ is an open groupoid and $u:G_0\to G_1$ is an open map.
\item $G$ is an open groupoid and the quantale $\topology(G_1)$ is unital with unit $e=u(G_0)$. 
\end{enumerate}
\end{theorem}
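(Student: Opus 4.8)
The plan is to prove the cycle of implications $(1)\Rightarrow(2)\Rightarrow(3)\Rightarrow(1)$, using throughout the facts recalled in Definition~\ref{def:groupoid} that openness of $d$, of $r$, and of $m$ are equivalent, and that for an open groupoid the topology $\topology(G_1)$ is a quantale under pointwise multiplication.

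For $(1)\Rightarrow(2)$ I would first note that a local homeomorphism is an open map, so if $d$ is a local homeomorphism then $d$ is open and $G$ is an open groupoid. The unit map $u$ is a continuous section of $d$ (since $d(u(x))=x$), so it remains to show that a continuous section of a local homeomorphism is open. I would include a short proof: given $x\in G_0$, choose an open $W\ni u(x)$ on which $d$ restricts to a homeomorphism onto the open set $d(W)$; then on the open set $u^{-1}(W)$ the map $u$ coincides with $(d|_W)^{-1}$, whence $u$ sends opens to opens. For $(2)\Rightarrow(3)$: since $G$ is open, $\topology(G_1)$ is a quantale, and since $u$ is open the set $u(G_0)$ is open, hence an element of this quantale. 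It then remains to check that $u(G_0)$ is a two-sided unit, which is a direct computation with the axioms: for any open $U$ and any $h\in U$ one has $h=u(r(h))\,h\in u(G_0)\cdot U$, while every composable product $u(y)\,h$ equals $h$; thus $u(G_0)\cdot U=U$, and symmetrically $U\cdot u(G_0)=U$.

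For $(3)\Rightarrow(1)$ the hypothesis gives that the unit $e=u(G_0)$ is an element of $\topology(G_1)$, i.e.\ $u(G_0)$ is open, and openness of $G$ gives that $d$ is open. The key construction is the continuous map
\[
\alpha:\{(g,h)\in G_1\times G_1\st d(g)=d(h)\}\longrightarrow G_1,\qquad \alpha(g,h)=gh^{-1},
\]
which is well defined because $d(g)=d(h)=r(h^{-1})$. A computation with the axioms shows that $gh^{-1}\in u(G_0)$ forces $r(g)=r(h)$ and then $g=h$, so $\alpha^{-1}(u(G_0))$ is exactly the diagonal of the fibre product; since $u(G_0)$ is open, this diagonal is open. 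I would then finish with the general topological lemma that a continuous open map with open diagonal is a local homeomorphism: the open diagonal yields, around each point, an open set on which $d$ is injective, and a continuous injective open map onto an open set is a homeomorphism. Once $d$ is a local homeomorphism the remaining structure maps are too: $r=d\circ i$ with $i$ a homeomorphism, $u$ is the open embedding already produced, and $m$ is handled similarly.

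The main obstacle is the step $(3)\Rightarrow(1)$ (equivalently $(2)\Rightarrow(1)$): everything before it is bookkeeping with the groupoid axioms, whereas here one must extract a genuinely topological conclusion, namely local injectivity of $d$, from the mere openness of the unit space. The device that makes this work is the map $\alpha$, which converts openness of $u(G_0)$ into openness of the diagonal of the $d$-fibre product; identifying $\alpha^{-1}(u(G_0))$ with the diagonal is the crucial computation, and the passage from an open diagonal to a local homeomorphism, using that $d$ is already open, is the technical heart.
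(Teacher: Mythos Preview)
Your argument is correct and is essentially the standard proof of this equivalence. Note, however, that the paper does not itself prove Theorem~\ref{thm:etale}: it is quoted from~\cite{Re07} with no proof given, so there is nothing in the paper to compare your approach against. Your cycle $(1)\Rightarrow(2)\Rightarrow(3)\Rightarrow(1)$, and in particular the device of using $\alpha(g,h)=gh^{-1}$ to identify $\alpha^{-1}(u(G_0))$ with the diagonal of $G_1\times_{G_0}G_1$ and then invoking the lemma that a continuous open map with open diagonal is a local homeomorphism, is exactly the classical argument one finds in the literature on \'etale groupoids.
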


\'Etale groupoids are intimately related to inverse quantal frames. The general correspondence requires the notion of \emph{localic} groupoid instead of topological groupoid, but for the purposes of this paper this can be skipped because only spatial (in fact locally compact) inverse quantal frames are relevant. Let us summarize the results we need here:

\begin{lemma}
Let $G$ be a sober locally compact groupoid, and $\ms(G)$ its associated measurement space (which is classical if and only if $G$ is second-countable). The following conditions are equivalent:
\begin{enumerate}
\item $G$ is \'etale.
\item $\ms(G)$ is a (necessarily locally compact) inverse quantal frame.
\end{enumerate}
Moreover, for any locally compact inverse quantal frame $Q$ there is a sober locally compact groupoid $G$ such that $\ms(G)\cong Q$, and $G$ is unique up to a functorial homeomorphism.
\end{lemma}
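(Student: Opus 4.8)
The plan is to prove the equivalence $(1)\Leftrightarrow(2)$ directly from Theorem~\ref{thm:etale} and the earlier lemma, and then to obtain the representation and uniqueness statement by invoking the localic correspondence of~\cite{Re07,SGQ} and spatializing. Note first that conditions $(1)$ and $(2)$ concern only the involutive-quantale-and-locale structure of $\topology(G_1)$, not the Scott topology placed on it, so the Scott topology plays no role in this part.

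For $(1)\Rightarrow(2)$ I would start from the earlier lemma, which gives that for an open groupoid $\topology(G_1)$ is a strongly Gelfand (hence stably Gelfand) quantale and, since $G$ is locally compact, a locally compact locale. If $G$ is \'etale then Theorem~\ref{thm:etale}(3) provides that this quantale is unital with $e=u(G_0)$, so the only remaining point is the covering condition $1=\V\ipi(\ms(G))$. Here I would identify the partial units with the \emph{open bisections} of $G$: for an open $s\subseteq G_1$ the inequalities $ss^*\le e$ and $s^*s\le e$ translate precisely into the requirement that $d$ and $r$ restrict to injections on $s$, so $\ipi(\ms(G))$ consists exactly of the open sets on which $d$ and $r$ are homeomorphisms onto open subsets of $G_0$. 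Since \'etaleness makes $d$ a local homeomorphism, every point of $G_1$ lies in such a bisection, whence $G_1=\bigcup\ipi(\ms(G))$ and the covering condition holds. Local compactness of the resulting inverse quantal frame is inherited from local compactness of $G_0$, its base locale being $\topology(G_0)$.

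For $(2)\Rightarrow(1)$ I would observe that $G$ is already open, being locally compact in the sense of Definition~\ref{def:groupoid}, and that any inverse quantal frame is unital by definition, so $\topology(G_1)$ is a unital quantale. The one delicate point is that Theorem~\ref{thm:etale}(3) requires the unit to be specifically $u(G_0)$; this is supplied by the result of~\cite{Re07} that for an open groupoid $\topology(G_1)$ is unital if and only if $G$ is \'etale, the unit then being forced to equal $u(G_0)$. With that, Theorem~\ref{thm:etale} yields $(1)$.

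The harder part is the representation and uniqueness statement, and here my strategy is to invoke the equivalence between inverse quantal frames and \'etale \emph{localic} groupoids from~\cite{Re07} and then pass to spaces. Given a locally compact inverse quantal frame $Q$, its underlying locale is locally compact, hence spatial, so it is isomorphic to $\topology(G_1)$ for a sober locally compact space $G_1$, namely the locale spectrum; likewise the base locale $Q_0$ spatializes to a sober locally compact $G_0$. The structure maps $d,r,m,i,u$ are then obtained by applying the spatial spectrum functor to the locale morphisms that~\cite{Re07} extracts from the quantale operations (involution giving inversion, multiplication giving partial composition, the subunits $Q_0$ identifying the objects), and the covering condition $1=\V\ipi(Q)$ is exactly what forces $d$ to be a local homeomorphism, \ie\ $G$ to be \'etale. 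Verifying that $\ms(G)\cong Q$ then amounts to checking that the spectrum functor commutes with these constructions. Uniqueness up to a functorial homeomorphism follows because $G_1$, $G_0$ and all structure maps arise by applying the spectrum functor, which is determined up to natural isomorphism. I expect the main obstacle to be precisely this spatiality bookkeeping: ensuring that local compactness of $Q$ supplies enough points for the spectrum to recover $G$ faithfully, so that the localic correspondence of~\cite{Re07} descends to an honest topological, second-countable-free statement at the level of locally compact groupoids.
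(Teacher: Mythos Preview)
Your proposal is correct and follows essentially the same route as the paper, which dispatches the entire lemma in one sentence by invoking the localic \'etale groupoid/inverse quantal frame correspondence of~\cite{Re07} and noting that it specializes to sober groupoids versus spatial inverse quantal frames. You unpack this more explicitly---proving $(1)\Leftrightarrow(2)$ via Theorem~\ref{thm:etale} and the identification of partial units with open bisections, and spelling out the spatiality bookkeeping for the representation statement---but the underlying argument is the same deferral to~\cite{Re07} plus spatialization via local compactness.
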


\begin{proof}
The correspondence between localic \'etale groupoids and inverse quantal frames of~\cite{Re07} specializes to a similar correspondence between sober \'etale groupoids and spatial inverse quantal frames.
\end{proof}

\begin{corollary}
The classical measurement spaces of the form $\ms(G)$ for (second-countable and locally compact) \'etale groupoids $G$ are, up to isomorphism, exactly the same as the countably-based locally compact inverse quantal frames equipped with the Scott topology.
\end{corollary}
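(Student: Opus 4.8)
The plan is to deduce the corollary purely by assembling the two preceding lemmas, treating it as a matter of matching the two descriptions object-by-object up to isomorphism. I would prove the two class inclusions separately, so that together they give the asserted equality ``up to isomorphism.''

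For the forward inclusion I would start with a second-countable locally compact \'etale groupoid $G$ and show that $\ms(G)$ is a countably-based locally compact inverse quantal frame carrying the Scott topology. First I invoke the lemma stating that $\ms(G)$ is classical precisely when $G$ is second-countable: this gives that $\ms(G)$ is a classical measurement space, hence by Definition~\ref{def:classicalmeasurements} a countably-based locally compact locale equipped with the Scott topology. Then I invoke the \'etale lemma, whose equivalence ``$G$ \'etale $\iff$ $\ms(G)$ is an inverse quantal frame'' upgrades $\ms(G)$ to a (necessarily locally compact) inverse quantal frame. Combining the two descriptions of the same object yields exactly a member of the target class.

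For the backward inclusion I would begin with a countably-based locally compact inverse quantal frame $Q$ with the Scott topology and produce a groupoid. The ``Moreover'' clause of the \'etale lemma supplies a sober locally compact groupoid $G$ with $\ms(G)\cong Q$; the equivalence in that same lemma then forces $G$ to be \'etale, since $\ms(G)\cong Q$ is an inverse quantal frame. The only remaining point is second-countability of $G$, which I obtain by running the first lemma in reverse: since $Q$ is countably-based (and, being an inverse quantal frame, is a locale that is strongly Gelfand, hence a classical measurement space), $\ms(G)\cong Q$ is classical, so $G$ must be second-countable. Thus $G$ belongs to the source class with $\ms(G)\cong Q$, closing the inclusion.

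The only step requiring genuine care --- and hence the ``main obstacle'' --- is the bookkeeping hidden in the word \emph{classical} in the backward direction: I must confirm that a countably-based locally compact inverse quantal frame is a classical measurement space in the sense of Definition~\ref{def:classicalmeasurementspaces}, which additionally demands the strongly Gelfand law $m\le mm^*m$. This holds because in any inverse quantal frame every element is a join $a=\V_i s_i$ of partial units $s_i\in\ipi(Q)$, and each partial unit satisfies $s_i=s_i s_i^* s_i$ (as $s_i s_i^*\le e$ gives $s_i s_i^* s_i\le s_i$, whence stability yields equality); consequently $aa^*a\ge\V_i s_i s_i^* s_i=\V_i s_i=a$. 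With this observation the first lemma applies and second-countability of $G$ is secured, while the uniqueness of $G$ up to functorial homeomorphism in the \'etale lemma guarantees that $G\mapsto\ms(G)$ is a bijection between the two classes rather than a mere surjection.
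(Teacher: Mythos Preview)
Your proposal is correct and matches the paper's approach: the paper gives no proof at all, treating the corollary as immediate from the two preceding lemmas, which is exactly what you do by assembling them for both inclusions.

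One small comment: the ``main obstacle'' you identify is not really needed. In the backward direction you want $G$ second-countable from $\ms(G)\cong Q$ countably-based. But $\ms(G)=\topology(G_1)$ by definition, so countably-based directly means $G_1$ is second-countable; alternatively, the proof of the first lemma already shows that $\ms(G)$ is strongly Gelfand for \emph{any} open groupoid $G$, so once $\ms(G)\cong Q$ as quantales you get the strongly Gelfand law for free without your partial-unit computation. Your argument via $s_i=s_is_i^*s_i$ is nonetheless valid and self-contained. The final remark about bijectivity goes slightly beyond what the corollary asserts (it only claims equality of the two classes up to isomorphism), but it is correct and does no harm.
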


If $G$ is a sober locally compact \'etale groupoid, the partial units of $\ms(G)$ are referred to as \emph{local bisections} of $G$, and we also denote their pseudogroup by $\ipi(G)$ instead of $\ipi(\ms(G))$. From here on, given a locally compact inverse quantal frame $Q$, I will write $\groupoid(Q)$ to denote a sober locally compact \'etale groupoid groupoid which is canonically defined by the isomorphism $\ms(\groupoid(Q))\cong Q$.

\begin{example}
Let $X$ be a countable set. The discrete pair groupoid $G:=\pair(X)$ is \'etale, and the classical measurement space $\ms(G)$ is the quantale of binary relations $\mathcal P(X\times X)$ equipped with the Scott topology.\end{example}

\subsection{Example: selective measurements}\label{sec:selective}

\paragraph{Schwinger's picture.}

Schwinger~\cite{Schwinger} proposed a formulation of quantum mechanics in terms of a notion of ``selective measurement,'' based on an idealized notion of physical system for which each physical quantity assumes only a finite set of distinct values. If $A$ is such a physical quantity, $M(a)$ is the measurement that selects those systems whose value of that quantity is $a$ and rejects all others. He also considers a more general type of measurement $M(a',a)$ which disturbs the systems being measured. It selects the systems whose value of $A$ is $a$, after which those systems emerge in a new state for which the value is $a'$. So $M(a)$ is identified with $M(a,a)$, and a natural notion of composition of such measurements is given by a multiplication operation such that
\begin{eqnarray*}
M(a'',a')M(a',a)&=&M(a'',a)\\
\textrm{and}\quad M(a''',a'')M(a',a)&=&0\quad\textrm{if}\quad a''\neq a',
\end{eqnarray*}
where $0$ is the measurement that rejects all systems. (The latter plays a similar role to that of the impossible measurement of measurement spaces.)

Schwinger also defines sums of measurements, for which he gives the following intuitive description:

\begin{quote}
``We define the addition of such symbols to signify less specific selective measurements that produce a subensemble associated with any of the values in the summation, none of these being distinguished by the measurement.''
\end{quote}
The expression ``less specific'' brings to mind the interpretation of the specialization order in measurement spaces, suggesting that a sum of two selective measurements $M(a',a)$ and $M(a''',a'')$ might be represented by a join
\[
M(a',a)\vee M(a''',a'').
\]
However, in Schwinger's formulation the sum is that of a $\CC$-algebra, and it satisfies the axiom
\[
\sum_{a\in X_A} M(a)=1,
\]
where $1$ is said to be the measurement that accepts all systems and $X_A$ is the set of possible values of the physical quantity $A$.

As pointed out by Ciaglia et al~\cite{CIM19-I}, the symbols $M(a',a)$ can be identified with arrows of a discrete group\-oid, namely the pair group\-oid $G_A=\pair(X_A)$,
so Schwinger's collection of measurements can be taken to be the group\-oid algebra $\mathfrak A=\CC G_A\cong M_{\vert X_A\vert}(\CC)$, which has unit $1$.

\begin{remark}\label{rem:acceptingmnot1}
Joins of measurements can still be defined in the measurement space $\Max\mathfrak A$, but note that $\V_{a\in X_A}\langle M(a)\rangle$
is the commutative algebra of diagonal matrices rather than $\langle 1\rangle$.
\end{remark}

If $B$ is another physical quantity there is another pair group\-oid $G_B$, and a corresponding algebra $\mathfrak B=\CC G_B$. If both $A$ and $B$ correspond to complete families of observables we should have $\mathfrak A\cong\mathfrak B$. So $G_B$ can also be embedded into $\mathfrak A$, and we can regard $G_A$ and $G_B$ as two different ``bases'' of $\mathfrak A$.

Finally, it is also possible to have measurements $M(b,a)$, where $a$ is a value of the physical quantity $A$ and $b$ is a value of $B$. An example of such a situation can be a measurement of spin that accepts particles with positive spin along $x$ and changes their state to, say, positive spin along $z$. Ciaglia et al~\cite{CIM19-I} call such measurements Stern--Gerlach measurements.
They further argue that these can be used in order to define a 2-group\-oid whose 1-cells are the selective measurements. This structure summarizes the kinematical aspects of Schwinger's measurements.

\paragraph{Physical quantities with infinitely many values.}

If the physical quantity $A$ has infinitely many values, for instance contained in the real line $X=\RR$, there is more than one choice of associated topological groupoid.
One of the general ideas in this paper is that it is open sets, rather than points, that correspond to observable entities. Hence, a natural generalization of the selective measurements of the form $M(a)$ is to consider instead measurements $M(U)$, where $U$ is an open set of $X$.
Such a measurement is meant to select only those systems that lie in $U$. In particular, $M(X)$ itself is the trivial selective measurement which rejects nothing.

Similarly, the selective measurements $M(a,a')$, which in the discrete case were identified with elements of $X\times X$, are not necessarily ``physical'' and should be replaced by something else. An immediate possibility is to consider the pair groupoid $X\times X$ (with the product topology), which is a Lie groupoid, in which case the measurements $M(a,a')$ ought to be replaced by measurements $M(V)$ with $V$ an open set of $X\times X$. However, doing so means that $X$ itself cannot be identified with an open set of $X\times X$ because the groupoid is not \'etale (the diagonal relation $\Delta_X$ is not open in $X\times X$), so the above mentioned measurements $M(U)$ with $U$ an open set of $X$ are ruled out. This means that any open set $V\subset X\times X$ must contain nonidentity arrows $(x,y)\notin\Delta_X$, in other words that it is impossible to perform a ``pure selection'' of the systems whose value of $A$ lies in $U$ without disturbing the systems that are selected.

Hence, if we want to cater for nonperturbing selective measurements, we must replace the pair groupoid by an \'etale groupoid whose object space is $X$. In order to do this, given two open sets $U,V\subset X$ and a homeomorphism $f:V\to U$, define $M(f)$ to be the selective measurement that selects systems in $V$ and carries them onto $U$ according to the rule specified by $f$.
The collection of all such selective measurements is a pseudogroup acting on $X$, and its associated \emph{groupoid of germs} is \'etale~\cite{MaRe10}.

\section{Observers and quantizations}\label{sec:observers}

Here I shall generalize the definition of observer context of~\cite{fop} by means of a new definition of \emph{observer map}, which generalizes the role played by weakly open surjections of involutive quantales in~\cite{QFB}. The main example is that of the reduced C*-algebra $C_r^*(G)$ of a second-countable locally compact Hausdorff \'etale groupoid $G$, which yields a ``quantization'' of $\ms(G)$ that consists of the measurement space $\Max C_r^*(G)$ equipped with a canonical observer map $p:\Max C_r^*(G)\to\ms(G)$. Generalizing the quantic bundle dictionary of~\cite{QFB}, we see that the type of observer map thus obtained has a close correspondence with the type of groupoid $G$, in particular as regards principal and topologically principal groupoids. But an advantage of the greater generality afforded by observer maps is that these results are no longer restricted to groupoids whose reduced C*-algebras are localizable.

\subsection{Contexts and observer contexts}\label{sec:contexts}

\paragraph{Contexts.}

Let us examine again the idea, described in section~\ref{sec:contobs}, of representing a  space of measurements $M$ on another, $N$, by a continuous map $f:M\to N$,  assuming that both $M$ and $N$ are sober lattices. Adopting the intuitive interpretation of disjunctions put forward in section~\ref{sec:disjunctions}, it is clear that, given $m,m'\in M$, a natural representation of $m\vee m'$ in $N$ is the disjunction $f(m)\vee f(m')$: \emph{the measurement that consists of performing either $m$ or $m'$ is naturally interpreted in $N$ as the measurement that consists of performing either $f(m)$ or $f(m')$}. In addition to this, the interpretation of $0$ as the impossible measurement implies that $f(0)$ should be impossible, too, so $f$ is a homomorphism of sober lattices.

Therefore, for sober lattices the natural notion of context (\cf\ section~\ref{sec:contobs}) should correspond to an embedding of sober lattices $\iota:\opens\to M$. Then, in particular, the image $\iota(\opens)$ is closed under arbitrary joins in $M$, so identifying $\opens$ with $\iota(\opens)$ we are led to the following definition:

\begin{definition}
Let $M$ be a sober lattice. By a \emph{context} in $M$ is meant a subspace $\opens\subset M$ which is closed under arbitrary joins in $M$ (a \emph{sub-sup-lattice} of $M$).
\end{definition}

The fact that this is the same as the image of an embedding of sober lattices follows from the following simple fact about sober spaces:

\begin{lemma}\label{sobersuplattices}
Let $X$ be a sober space whose specialization order is a sup-lattice (for instance a sober lattice). Any subspace $Y\subset X$ which is closed under the formation of arbitrary joins is necessarily sober.
\end{lemma}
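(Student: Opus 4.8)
The plan is to verify sobriety of $Y$ directly from the definition, reducing the existence of generic points in $Y$ to the sobriety of $X$ together with the closure-under-joins hypothesis. Since $X$ is sober it is $T_0$, and a subspace of a $T_0$ space is $T_0$, so $Y$ is automatically $T_0$; uniqueness of generic points is therefore free, and I only need to produce a generic point in $Y$ for each irreducible closed subset of $Y$.

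First I would take an irreducible closed set $C\subset Y$ and pass to its closure $\overline C$ in $X$ (note that $C = Y\cap\overline C$, since $C$ is closed in $Y$). The first key step is to show that $\overline C$ is irreducible in $X$: if $\overline C\subset A\cup B$ with $A,B$ closed in $X$, then $C\subset (Y\cap A)\cup(Y\cap B)$, and irreducibility of $C$ in $Y$ forces $C\subset A$ or $C\subset B$, whence $\overline C\subset A$ or $\overline C\subset B$ because $A$ and $B$ are closed. Since $X$ is sober there is then a unique $x\in X$ with $\overline C=\overline{\{x\}}=\downsegment x$.

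The crux is to locate $x$ inside $Y$. For this I would use the identity $\V S=\V\overline S$, which holds for any subset $S$ of a sober space whose specialization order is a sup-lattice: the argument is exactly that of Lemma~\ref{tinyprop}, as it only uses that $\downsegment(\V S)$ is the closure of the singleton $\{\V S\}$ and hence closed. Applying it to $S=C$ and using $\overline C=\downsegment x$ gives
\[
\V C = \V\overline C = \V\downsegment x = x.
\]
Because $C\subset Y$ and $Y$ is closed under arbitrary joins in $X$, this shows $x=\V C\in Y$. This is precisely the point at which the hypothesis on $Y$ is genuinely used, and I expect it to be the main obstacle: without closure under joins the candidate generic point $x$ could fail to lie in $Y$, so this is the one place where a naive ``irreducible closed in $Y$ $\Rightarrow$ irreducible closed in $X$ $\Rightarrow$ has a generic point'' argument needs real input.

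Finally I would confirm that $x$ is a generic point of $C$ in $Y$: the closure of $\{x\}$ in $Y$ is $Y\cap\overline{\{x\}}=Y\cap\downsegment x=Y\cap\overline C=C$, the last equality holding because $C$ is closed in $Y$. Hence every irreducible closed subset of $Y$ is the closure in $Y$ of a (necessarily unique) point of $Y$, so $Y$ is sober.
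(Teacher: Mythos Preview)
Your proof is correct and follows essentially the same route as the paper: show that the closure $\overline C$ in $X$ is irreducible, obtain its generic point $x$ by sobriety of $X$, and then use closure of $Y$ under joins to place the generic point back in $Y$. The only cosmetic difference is that the paper observes $C=\overline C\cap Y$ is itself join-closed (as the intersection of the principal ideal $\downsegment x$ with $Y$) and concludes $\V C\in C$ directly, whereas you compute $\V C=\V\overline C=x$ via the $\V S=\V\overline S$ identity and then argue $x\in Y$; these are equivalent ways of packaging the same step.
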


\begin{proof}
Let $Y\subset X$ be closed under joins, and let $C\subset Y$ be an irreducible closed set in the relative topology of $Y$. Then the closure $\overline C$ in $X$ is an irreducible closed set of $X$. In order to prove the latter assertion let $C_1$ and $C_2$ be closed sets of $X$ such that $C_1\cup C_2=\overline C$. Then $(C_1\cup C_2)\cap Y=\overline C\cap Y=C$, so $(C_1\cap Y)\cup(C_2\cap Y)=C$. Since $C$ is irreducible in $Y$ we must have $C_1\cap Y=C$ or $C_2\cap Y=C$. Then $\overline C\subset C_1$ or $\overline C\subset C_2$, so $\overline C$ is irreducible and therefore $\overline C=\overline{\{x\}}$ for some $x\in X$. Hence, $\overline C$ is the principal ideal $\downsegment(x)=\{z\in X\st z\le x\}$, and $C$ is closed under joins because it is the intersection $\overline C\cap Y$ of two join-closed sets. Therefore $\V C\in C$, and $C$ is the closure of $\{\V C\}$ in $Y$, so we conclude that every irreducible closed set of $Y$ is the closure of a singleton of $Y$. In addition, $Y$ is $T_0$ because it is a subspace of a $T_0$-space, so $Y$ is sober.
\end{proof}

\begin{definition}
Let $M$ be a sober lattice. A context $\opens\subset M$ is \emph{classical} if the subspace topology makes it a space of classical measurements.
\end{definition}

\begin{example}\label{exm:classicalcontext}
Let $M=\Max M_2(\CC)$ and $\opens_0=\{0,\boldsymbol z^\spindown,\boldsymbol z^\spinup,\boldsymbol z\}$ with the usual order. Then $\opens_0$ is a finite locally compact locale, and it is easy to see that the subspace topology on $\opens_0$ consists of all the upwards-closed sets of $\opens_0$, which for a finite lattice is the Scott topology. So $\opens_0$ is a classical context.
\end{example}

\begin{example}\label{exm:classicalcontext2}
More generally, take again $M=\Max M_2(\CC)$. For each matrix $A=(a_{ij})\in M_2(\boldsymbol 2)$, let $V_A\subset\Max M_2(\CC)$ be the linear space
\[
V_A = \{A'\in M_2(\CC)\st a_{ij}=0\ \Longrightarrow\ a'_{ij}=0\}.
\]
The $2\times 2$ matrices $A$ of 0s and 1s form an involutive quantale (isomorphic to the quantale of binary relations on a set of two elements), and the assignment $A\mapsto V_A$ is an injective homomorphism of involutive quantales whose image is a finite locally compact locale $\opens\subset M$ with the Scott topology. So $\opens$ is a classical context, and it is also an inverse quantal frame of which the context $\opens_0$ of Example~\ref{exm:classicalcontext} is the base locale.
\end{example}

\paragraph{Observer contexts.}

Let $\opens$ be a context of a measurement space $M$, and let $\retraction:M\to \opens$ be a homomorphism of sober lattices that interprets $M$ in $\opens$, such that $\retraction(\omega)=\omega$ for all $\omega\in\opens$, so that the pair $(\opens,\retraction)$ is an ``observer'' (\cf\ section~\ref{sec:contobs}).
For any measurement $m\in M$ think of the retract $\retraction(m)$ as the ``best'' translation of $m$ in terms of the measurements in $\opens$. This does not mean that $\retraction$ should be a homomorphism of measurement spaces, and indeed the main examples to be seen in this section show that in general it is not, the reason being that it does not necessarily preserve composition. The main properties of $\retraction$, along with the rationale behind them, were introduced in~\cite{fop}. Let us recall the corresponding definition here, now adding the new notion of \emph{persistency}:

\begin{definition}
Let $M$ be a measurement space. By an \emph{observer context} of $M$ will be meant a pair $(\opens,\retraction)$ where $\opens\subset M$ is a context and $\retraction:M\to\opens$ is a homomorphism of sober lattices such that for all $\omega\in\opens$ and $m\in M$ we have
\begin{eqnarray*}
\retraction(m^*)&=&\retraction(m)^*\quad\text{(symmetry condition)},\\
\retraction(m\omega)&=&\retraction(m)\omega\quad\text{(preparation condition)}.
\end{eqnarray*}
In addition, the observer context is said to be:
\begin{itemize}
\item \emph{classical} if $\opens$ with the subspace topology is a space of classical measurements;
\item \emph{persistent} if $\retraction(m\omega n) = \retraction(m)\omega \retraction(n)$ for all $m,n\in M$ and $\omega\in\opens$.
\end{itemize}
\end{definition}

The definition of observer context has nontrivial consequences for the underlying context, as the following simple lemma shows.

\begin{lemma}\label{prop:obscontextprops}
Let $M$ be a measurement space, and $(\opens,\retraction)$ an observer context of $M$. Then
$\opens$ is an involutive subquantale of $M$, hence being itself a measurement space. Furthermore, for all $m\in M$ and $\omega\in\opens$ we have
\begin{equation}\label{eq:leftmodule}
\retraction(\omega m)=\omega \retraction(m)\quad\text{(result condition)},
\end{equation}
and if the observer context is persistent then for any interleaved product of generic measurements $m_i$ and measurements $\omega_i$ in $\opens$, of the form
\[
m:=\cdots m_1 \omega_1 m_2 \omega_2 \cdots m_k \omega_k m_{k+1}\cdots,
\]
we obtain
\begin{equation}\label{eq:persistentiteration}
\retraction(m)=\cdots \retraction(m_1) \omega_1 \retraction(m_2) \omega_2 \cdots \retraction(m_k) \omega_k \retraction(m_{k+1})\cdots.
\end{equation}
Moreover, observer contexts can be iterated: if $(\opens',\retraction')$ is an observer context of $\opens$ then $(\opens',\retraction'\circ \retraction)$ is an observer context of $M$.
\end{lemma}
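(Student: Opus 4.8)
The plan is to verify the four assertions in turn, first establishing that $\opens$ is an involutive subquantale and then using that structure to derive the algebraic identities. Since a context is by definition closed under arbitrary joins (in particular it contains the empty join $0$), to show it is an involutive subquantale I would only check closure under involution and product. For the involution, given $\omega\in\opens$ one has $\retraction(\omega)=\omega$, so the symmetry condition gives $\retraction(\omega^*)=\retraction(\omega)^*=\omega^*$; as $\retraction$ takes values in $\opens$, this forces $\omega^*\in\opens$. For the product, given $\omega,\omega'\in\opens$ the preparation condition applied with $m=\omega'$ yields $\retraction(\omega'\omega)=\retraction(\omega')\omega=\omega'\omega$, which again lies in $\opens$. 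Once $\opens$ is a sub-sup-lattice closed under product and involution, it is sober by Lemma~\ref{sobersuplattices}, the multiplication, involution and joins restrict to continuous operations, and the stably Gelfand axiom is inherited because both the order and the product on $\opens$ are restrictions of those on $M$; hence $\opens$ is itself a measurement space.

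The genuinely non-routine step is the result condition \eqref{eq:leftmodule}, and I would obtain it by a duality trick that converts the left factor $\omega$ into a right factor, where the preparation condition applies. Writing $(\omega m)^*=m^*\omega^*$ with $\omega^*\in\opens$ (by the closure just proved), the preparation condition gives $\retraction(m^*\omega^*)=\retraction(m^*)\omega^*$, while the symmetry condition gives both $\retraction((\omega m)^*)=\retraction(\omega m)^*$ and $\retraction(m^*)=\retraction(m)^*$. Combining these yields $\retraction(\omega m)^*=\retraction(m)^*\omega^*$, and applying the involution to both sides (using $a^{**}=a$ and $(ab)^*=b^*a^*$) produces $\retraction(\omega m)=\omega\retraction(m)$.

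For the persistent iteration formula \eqref{eq:persistentiteration} I would argue by induction on the number of context factors $\omega_i$. In the generic case the word begins and ends with factors from $M$, and splitting it as $m_1\,\omega_1\,(m_2\omega_2\cdots m_{k+1})$ lets persistency peel off the first context factor, $\retraction(m_1\omega_1 x)=\retraction(m_1)\,\omega_1\,\retraction(x)$, after which the induction hypothesis handles the tail $x$; any context factor appearing at the very beginning or very end of the word is absorbed instead by the result condition \eqref{eq:leftmodule} or the preparation condition. The base case is either trivial ($k=0$) or exactly persistency ($k=1$).

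Finally, for the iteration of observer contexts I would check each defining clause for $(\opens',\retraction'\circ\retraction)$ directly. The set $\opens'$ is closed under joins in $M$ because it is closed under joins in $\opens$ and $\opens$ is closed under joins in $M$, so it is a context of $M$; the composite of two homomorphisms of sober lattices is again one; and the retraction, symmetry and preparation clauses each follow by applying the corresponding clause for $\retraction$ and then for $\retraction'$, using in the last two cases that $\retraction(m)\in\opens$ and $\omega'\in\opens'\subset\opens$. The one step that is not purely mechanical is the result condition, so I expect it to be the main obstacle; everything else is routine verification.
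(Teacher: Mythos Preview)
Your proof is correct and follows essentially the same approach as the paper: closure of $\opens$ under products and involution via the retraction identity, the result condition via the involution trick $\retraction(\omega m)=\retraction(m^*\omega^*)^*$, and the remaining claims by routine iteration and verification. You supply more detail than the paper does (explicit sobriety via Lemma~\ref{sobersuplattices}, an explicit induction for \eqref{eq:persistentiteration}, and the clause-by-clause check for composing observer contexts), but the underlying arguments are the same.
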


\begin{proof}
Let $\omega,\omega'\in\opens$. Then
\[
\retraction(\omega\omega') = \retraction(\omega) \omega'=\omega\omega',
\]
so $\omega\omega'\in\opens$.
Moreover,
\[
\omega^*=\retraction(\omega)^*=\retraction(\omega^*),
\]
so $\omega^*\in\opens$, and thus $\opens$ is an involutive subquantale (it is closed under joins because it is a context). Then \eqref{eq:leftmodule} is an easy consequence of the involution and of the preparation condition,
\[
\retraction(\omega m)=\bigl(\retraction(\omega m)^*\bigr)^*=\retraction(m^*\omega^*)^*=\bigl(\retraction(m^*)\omega^*\bigr)^*
=\omega \retraction(m),
\]
and \eqref{eq:persistentiteration} is obtained by iterating the persistency axiom and both the preparation and the result conditions. The last statement, concerning iteration of observer contexts, is obvious.
\end{proof}

\begin{remark}
The preparation (resp.\ result) condition means that the retraction of an observer context $(\opens,\retraction)$ is a continuous homomorphism of topological right (resp.\ left) $\opens$-modules.
\end{remark}

\begin{example}
The retraction map $\retraction$ of a classical observer context $(\opens,\retraction)$ of a measurement space $M$ can be used in order to define a continuous map
\[
\basechg:X\to\closedsets(X')
\]
for any other classical context $\opens'\subset M$, where $\topology(X)\cong\opens$, $\topology(X')\cong\opens'$, and $\closedsets(X')$ is the space of closed subspaces of $X'$ with the lower Vietoris topology. This can be regarded as a topological analogue of a ``change of basis'' map that to each ``basis vector'' $x\in X$ assigns a closed set of ``new basis vectors'' $\basechg(x)\subset X'$. See~\cite{fop}.
\end{example}

\begin{example}
Let $A$ be a C*-algebra, $B$ a sub-C*-algebra, and $\Theta:A\to B$ a conditional expectation. Then $(\Max B,\Max \Theta)$ is an observer context of $\Max A$.
\end{example}

\begin{example}\label{exm:classicalobservercontext}
Take $M=\Max M_2(\CC)$, as in Example~\ref{exm:classicalcontext2}. For each linear subspace $V\in M$, let $\retraction(V)$ be the linear subspace
\[
\retraction(V) = \left\{A\in M_2(\CC)\st \forall_{i,j}\bigl(a_{ij}= 0\iff \forall_{A'\in V}(a'_{ij}= 0)\bigr)\right\}.
\]
It can be verified that the classical context $\opens$ of Example~\ref{exm:classicalcontext2}, together with the mapping $\retraction:M\to \opens$ thus defined, is a persistent observer context. A proof of this, in much greater generality, will be given in section~\ref{sec:mainobsexamples}.
\end{example}

\begin{example}\label{exm:classicalobservercontext2}
Let again $M=\Max M_2(\CC)$ and $\opens_0=\{0,\boldsymbol z^\spindown,\boldsymbol z^\spinup,\boldsymbol z\}$ (\cf\ Example~\ref{exm:classicalcontext}). Let $\retraction_0:M\to \opens_0$ be defined in terms of the retraction $\retraction$ of Example~\ref{exm:classicalobservercontext}, as follows:
\[
\retraction_0(V) = \retraction(V)\cap \boldsymbol z.
\]
The pair $(\opens_0,\retraction_0)$ is a classical observer context.
\end{example}

The relation between the observer contexts of the two previous examples is an instance of the following proposition and of the fact that observer contexts can be iterated (Lemma~\ref{prop:obscontextprops}):

\begin{theorem}\label{prop:localobs}
Let $M$ be a measurement space which is a locally compact inverse quantal frame with the Scott topology --- equivalently, $M\cong\ms(G)$ for a sober locally compact \'etale groupoid $G$. Then $(M_0,\retraction)$, where $\retraction=(-)\wedge e$, is an observer context of $M$. Moreover, if $M$ is a classical measurement space (\ie, it is countably-based) then $(M_0,\retraction)$ is a classical observer context.
\end{theorem}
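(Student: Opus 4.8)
The plan is to work in the groupoid representation $M\cong\ms(G)=\topology(G_1)$ afforded by the stated equivalence, where the quantale operations become pointwise: the product of $U,V\in\topology(G_1)$ is $\{gh\st g\in U,\ h\in V,\ d(g)=r(h)\}$, the involution is $U^*=\{g^{-1}\st g\in U\}$, the unit is $e=G_0$ (open in $G_1$ because $G$ is \'etale), and the base locale is $M_0=\{U\st U\subseteq G_0\}=\topology(G_0)$. In this picture $\retraction(U)=U\wedge e=U\cap G_0$ is simply the frame restriction map $\topology(G_1)\to\topology(G_0)$, and $\retraction(\omega)=\omega$ for $\omega\le e$, so $\retraction$ is a retraction onto $M_0$.

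First I would check that $M_0$ is a context, which is immediate: $\{U\le e\}$ is a downset of $e$ and hence closed under arbitrary joins in $M$. Next, that $\retraction$ is a homomorphism of sober lattices: it sends $0$ to $0$ and preserves binary (indeed arbitrary) joins by the frame distributive law $(\V_i U_i)\cap e=\V_i(U_i\cap e)$; and it is continuous because it preserves directed joins (again by distributivity), hence is Scott-continuous as a poset map, so its corestriction to the subspace $M_0$ is continuous. The symmetry condition can be done abstractly in one line: the involution is an order isomorphism fixing $e$, so it preserves binary meets and $e^*=e$, whence $\retraction(m)^*=(m\wedge e)^*=m^*\wedge e^*=m^*\wedge e=\retraction(m^*)$.

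The substantive step is the preparation condition $\retraction(m\omega)=\retraction(m)\omega$ for $\omega\in M_0$, i.e.\ $(m\omega)\wedge e=(m\wedge e)\omega$. Writing $B=\omega\subseteq G_0$, the key observation is that right multiplication by a set of units restricts domains: since each $h\in B$ is a unit, $gh$ is defined and equals $g$ exactly when $d(g)=h$, so $UB=\{g\in U\st d(g)\in B\}$. Intersecting with $G_0$ and using $d(g)=g$ for units gives $(UB)\cap G_0=U\cap G_0\cap B$; on the other side $(U\cap G_0)B=U\cap G_0\cap B$ by the same rule, so the two sides agree. I expect this groupoid computation --- correctly tracking the partial multiplication through the $d$/$r$ matching --- to be the main obstacle, in that it is where the \'etale structure is genuinely used. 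Abstractly one can instead invoke $1=\V\ipi(M)$ and distributivity to reduce to a partial unit $s$, and argue from $s\omega\wedge e\le s\wedge e$ together with the projection property of elements below $e$, but the groupoid picture is cleaner.

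Finally, for the classical case I would show that the subspace topology on $M_0$ coincides with its intrinsic Scott topology. One inclusion is routine: the trace of a Scott-open set of $M$ is upward closed and inaccessible by directed joins in $M_0$ (joins in $M_0$ being computed as in $M$). For the converse I would use that $\retraction\colon M\to M_0$ is a Scott-continuous retraction with $\retraction|_{M_0}=\ident$: given a Scott-open $\mathcal W\subseteq M_0$, the set $\retraction^{-1}(\mathcal W)$ is Scott-open in $M$ and meets $M_0$ in exactly $\mathcal W$, so $\mathcal W$ lies in the subspace topology. Thus $M_0$ carries the Scott topology of $\topology(G_0)$, and since $G_0$ is second-countable, locally compact and sober, $\topology(G_0)$ is a countably-based locally compact locale --- a space of classical measurements --- so $(M_0,\retraction)$ is a classical observer context.
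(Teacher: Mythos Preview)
Your proof is correct and follows essentially the same route as the paper. The paper is terser: for the preparation condition it simply cites~\cite{Re07} for the identity $m\omega\wedge e=(m\wedge e)\omega$ in inverse quantal frames, whereas you give the underlying groupoid computation explicitly (which is more self-contained); and the paper does not bother to state the symmetry condition, which you verify. For the topology comparison, the paper takes the upper closure $U'$ of a Scott-open $U\subset M_0$ and checks directly that $U'$ is Scott-open in $M$; your argument via $\retraction^{-1}(\mathcal W)$ is the same set in disguise (since $m\in\retraction^{-1}(\mathcal W)$ iff $m\wedge e\in\mathcal W$ iff some $w\in\mathcal W$ lies below $m$), just packaged more conceptually as continuity of a retraction. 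One small remark: your claim that the preparation computation ``is where the \'etale structure is genuinely used'' is slightly off --- the computation itself only uses that $B\subset G_0$ consists of units; \'etaleness is needed earlier, to ensure that $e=G_0$ is open so that $M_0$ and $\retraction=(-)\wedge e$ make sense at all.
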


\begin{proof}
The mapping $\retraction=(-)\wedge e:M\to M_0$ is continuous because it preserves joins, and it is of course a retraction. Now let $m\in M$ and $\omega\in M_0$. The properties of inverse quantal frames~\cite{Re07} imply that
\[
m\omega\wedge e=(m\wedge e)\omega,
\]
so $\retraction$ is a homomorphism of $M_0$-modules. Hence, $(M_0,\retraction)$ is an observer context. Now note that $G_0$ is a sober locally compact space, second-countable if $G$ is, so in order to show that $M_0$ is a classical measurement space if $M$ is it remains to be proved that the subspace topology of $M_0$ is the Scott topology. First, since $M_0$ is necessarily sober, its topology is contained in the Scott topology. In order to prove the converse inclusion, let $U$ be a Scott open set of $M_0$, and let us see that it belongs to the subspace topology. Let $U'$ be the upper closure of $U$ in $M$. Then $U\subset M_0\cap U'$. On the other hand, if $m\in M_0\cap U'$ then there exists $n\in U$ such that $n\le m$, so $m\in U$, and we conclude $U=M_0\cap U'$. Also, if $m\in U$ and $m\le n$ then $m=m\wedge e\le n\wedge e\in M_0$, so $n\wedge e\in U$ and thus $U=U'\wedge e$. It remains to be proved that $U'$ is Scott open. Let $D\subset M$ be a directed set such that $\V D\in U'$. Then $D\wedge e=\{d\wedge e\st d\in D\}$ is a directed set of $M_0$, and
\[
\V (D\wedge e)=(\V D)\wedge e\in U,
\]
so there is $d\in D$ such that $d\wedge e\in U$, and thus $d\in U'$. This shows that $U'$ is Scott open, so the Scott open set $U$ belongs to the subspace topology of $M_0$ as intended.
\end{proof}

\subsection{Observers and quantic bundles}\label{sec:obsqubundles}

\paragraph{Observer maps.}

The definition of observer context of a measurement space $M$ consists of a subspace $\opens\subset M$ equipped with a suitable retraction $\retraction:M\to\opens$, but it will be convenient to work with more general embeddings $\iota:\opens\to M$:

\begin{definition}
Let $M$ and $\opens$ be measurement spaces. By an \emph{observer map}
\[
p:M\to\opens
\]
will be meant a pair $p=(\iota,\retraction)$ such that $\iota:\opens\to M$ is a homomorphism of measurement spaces, referred to as the \emph{embedding}, and $\retraction:M\to\opens$, referred to as the \emph{retraction}, is a homomorphism of sober lattices which satisfies
\[
\retraction(m^*)=\retraction(m)^*\quad\text{and}\quad \retraction\bigl(m\,\iota(\omega)\bigr)=\retraction(m)\omega
\]
for all $m\in M$ and $\omega\in\opens$. The observer map is \emph{persistent} if
\[
\retraction\bigl(m\,\iota(\omega)\,n\bigr)=\retraction(m)\omega \retraction(n)
\]
for all $m,n\in M$ and $\omega\in\opens$.
When necessary, in order to distinguish different observer maps, we may denote the components $(\iota,\retraction)$ of $p$ by $(p^\iota,p_\retraction)$. By an \emph{observer} of $M$ will be meant a pair $(\opens,p)$ consisting of a measurement space $\opens$ together with an observer map $p:M\to\opens$.
An observer $(\opens, p)$ is \emph{classical} if $\opens$ is a classical measurement space, and \emph{local} if $\opens$ is a local measurement space.
\end{definition}

\begin{remark}
Every observer context $(\opens,\retraction)$ of $M$ defines an observer $(\opens,p)$ where $p^\iota:\opens\to M$ is the inclusion map and $p_\retraction=\retraction$. Conversely, every observer $(\opens,p)$ defines an observer context $(p^\iota(\opens),p^\iota\circ p_\retraction)$. The terminology for observer contexts and for observers (persistent, classical, local) is consistent.
\end{remark}

\paragraph{\'Etale observers.}

Classical observers $\bigl(\ms(G),p\bigr)$ based on \'etale groupoids cater for weakened notions of persistence that will play a role in the main examples of this paper, to be seen in section~\ref{sec:mainobsexamples}. First we need some more terminology:

\begin{definition}
Let $M$ be a measurement space and $(\opens,p)$ a classical observer such that $\opens$ is an inverse quantal frame. The observer is \emph{\'etale} if for all $\omega\in\opens$ and all $m,n\in M$ such that $p_\retraction(m),p_\retraction(n)\in\ipi(G)$ we have
\[
p_\retraction(m)\, \omega\, p_\retraction(n)\le p_\retraction(m\, p^\iota(\omega)\, n).
\]
Moreover, an \'etale observer is \emph{$\ipi$-stable} if it satisfies the equivalent conditions of the following lemma.
\end{definition}

\begin{lemma}
Let $M$ be a measurement space, $(\opens,p)$ an \'etale observer, and $b=p^\iota(e)$.
The following conditions are equivalent:
\begin{enumerate}
\item\label{qb1} $p_\retraction(s)\in\ipi(\opens)$ for all $s\in\ipi_b(M)$;
\item\label{qb2} $p_\retraction(s)p_\retraction(t)\le p_\retraction(st)$ for all $s,t\in\ipi_b(M)$;
\item\label{qb3} $p_\retraction(s)\omega p_\retraction(t)\le p_\retraction(sp^\iota(\omega)t)$ for all $s,t\in\ipi_b(M)$ and $\omega\in\opens$.
\end{enumerate}
\end{lemma}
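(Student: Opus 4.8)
The plan is to prove the three conditions equivalent by the cycle $\eqref{qb1}\Rightarrow\eqref{qb3}\Rightarrow\eqref{qb2}\Rightarrow\eqref{qb1}$. Throughout I write $\sigma:=p_\retraction$ and $\iota:=p^\iota$, and I will lean on three standing facts. First, $\sigma$ preserves binary joins and is therefore monotone. Second, as the embedding/retraction terminology and the observer-context correspondence recorded after the definition require, $\sigma\circ\iota=\ident_\opens$; in particular $\sigma(b)=\sigma(\iota(e))=e$ (consistently, the preparation condition together with multiplicativity of $\iota$ forces $\sigma(\iota(\omega))=\sigma(b)\omega=\omega$). Third, $\ipi_b(M)$ is closed under the involution: if $s\in\ipi_b(M)$ then applying $(-)^*$ to the four defining inequalities $ss^*\le b$, $s^*s\le b$, $sb\le s$, $bs\le s$ and using $b^*=b$ yields the same four inequalities for $s^*$.

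For $\eqref{qb1}\Rightarrow\eqref{qb3}$ I would argue directly from the definition of étale observer. Given $s,t\in\ipi_b(M)$ and $\omega\in\opens$, hypothesis $\eqref{qb1}$ gives $\sigma(s),\sigma(t)\in\ipi(\opens)=\ipi(G)$, so the defining inequality of an étale observer applies with $m=s$ and $n=t$ and produces exactly $\sigma(s)\,\omega\,\sigma(t)\le\sigma(s\,\iota(\omega)\,t)$, which is $\eqref{qb3}$. For $\eqref{qb3}\Rightarrow\eqref{qb2}$ I would specialise $\eqref{qb3}$ to $\omega=e$: since $e$ is the unit of $\opens$ and $\iota(e)=b$, the left-hand side collapses to $\sigma(s)\sigma(t)$ and the right-hand side to $\sigma(sbt)$; as $bt\le t$ (from $t\in\ipi_b(M)$) gives $sbt\le st$, monotonicity of $\sigma$ yields $\sigma(s)\sigma(t)\le\sigma(sbt)\le\sigma(st)$.

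The crux is $\eqref{qb2}\Rightarrow\eqref{qb1}$, which must extract the two inequalities $\sigma(s)\sigma(s)^*\le e$ and $\sigma(s)^*\sigma(s)\le e$ defining membership in $\ipi(\opens)$ out of the single submultiplicativity hypothesis. Fix $s\in\ipi_b(M)$; then $s^*\in\ipi_b(M)$ by the involution-closure remark. Applying $\eqref{qb2}$ to the pair $(s,s^*)$ and using the symmetry condition $\sigma(s^*)=\sigma(s)^*$ gives $\sigma(s)\sigma(s)^*=\sigma(s)\sigma(s^*)\le\sigma(ss^*)$; since $ss^*\le b$ and $\sigma(b)=e$, monotonicity yields $\sigma(s)\sigma(s)^*\le e$. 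Applying $\eqref{qb2}$ instead to the pair $(s^*,s)$ gives symmetrically $\sigma(s)^*\sigma(s)=\sigma(s^*)\sigma(s)\le\sigma(s^*s)\le\sigma(b)=e$. Hence $\sigma(s)\in\ipi(\opens)$, which is $\eqref{qb1}$, closing the cycle.

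The main obstacle, modest as it is, is organisational rather than computational: securing the involution-closure of $\ipi_b(M)$ and, above all, the identity $\sigma(b)=e$, which is precisely the retraction property $\sigma\circ\iota=\ident$ of the observer map. With those two facts in hand, each of the three implications reduces to a one-line monotone order computation, so the lemma is essentially a bookkeeping exercise organised around the defining inequality of an étale observer.
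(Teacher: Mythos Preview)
Your proposal is correct and follows essentially the same cycle $\eqref{qb1}\Rightarrow\eqref{qb3}\Rightarrow\eqref{qb2}\Rightarrow\eqref{qb1}$ as the paper, with the same key ingredients (the \'etale-observer inequality, the specialization $\omega=e$ with $sbt\le st$, and the pairing $(s,s^*)$ together with $\sigma(b)=e$). The only difference is that you make explicit the involution-closure of $\ipi_b(M)$ and the retraction identity $\sigma\circ\iota=\ident_\opens$, which the paper uses tacitly.
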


\begin{proof}
$\eqref{qb1}\Rightarrow \eqref{qb3}$. Assume that \eqref{qb1} holds and let $s,t\in\ipi_b(M)$ and $\omega\in\opens$. Then, since the observer is \'etale, \eqref{qb3} immediately follows.

$\eqref{qb3}\Rightarrow \eqref{qb2}$. Assume that \eqref{qb3} holds and let $s,t\in\ipi_b(M)$. Then $sbt\le st$, and the following derivation proves \eqref{qb2}:
\[
p_\retraction(st)\ge p_\retraction(sbt)=p_\retraction(sp^\iota(e)t)\ge p_\retraction(s)e p_\retraction(t) = p_\retraction(s)p_\retraction(t).
\]

$\eqref{qb2}\Rightarrow \eqref{qb1}$. Assume that \eqref{qb2} holds and let $s\in\ipi_b(M)$. Then
\[
p_\retraction(s)^*p_\retraction(s)=p_\retraction(s^*)p_\retraction(s)\le p_\retraction(s^*s)\le p_\retraction(b)=p_\retraction(p^\iota(e))= e,
\]
and, similarly, $p_\retraction(s)p_\retraction(s)^*\le e$. Hence, $p_\retraction(s)\in\ipi(\opens)$.
\end{proof}

\paragraph{Increasing, full, and multiplicative observers.}

Let $(\opens,\retraction)$ be an observer context of a measurement space $M$. Suppose that for all $m\in M$ we have $m\le \retraction(m)$; that is, $m$ is always at least as determined as $\retraction(m)$. In other words, the approximation of $m$ by $\retraction(m)$ is allowed to add potential properties but to remove none. Since $\retraction$ increases indeterminacy, such an observer context  will be said to be \emph{increasing}. In terms of general observers this translates to the following definition:

\begin{definition}
Let $(\opens,p)$ be an observer of a measurement space $M$. The observer is
\begin{itemize}
\item \emph{increasing} if $p^\iota\circ p_\retraction\ge\ident_M$ (\ie, $p_\retraction$ is left adjoint to $p^\iota$),
\item \emph{full} if $p^\iota(1_{\opens})=1_M$,
\item \emph{multiplicative} if $p_\retraction(mn)\le p_\retraction(m)p_\retraction(n)$ for all $m,n\in M$.
\end{itemize}
\end{definition}

\begin{lemma}
Every increasing observer is full and multiplicative.
\end{lemma}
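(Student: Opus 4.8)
The plan is to unwind both conclusions and reduce each to the single inequality $p^\iota\bigl(p_\retraction(m)\bigr)\ge m$ supplied by the increasing hypothesis, used in tandem with the equivalent formulation stated in the definition, namely that $p_\retraction$ is left adjoint to $p^\iota$: $p_\retraction(m)\le\omega\iff m\le p^\iota(\omega)$ for all $m\in M$ and $\omega\in\opens$. The two structural facts I will lean on are that the quantale multiplication is order-preserving in each variable (it distributes over joins, so $a\le a'$ gives $ab\le a'b$ and dually), and that the embedding $p^\iota$, being a homomorphism of measurement spaces, preserves the product.

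First I would dispose of fullness. Since $p_\retraction(1_M)\le 1_\opens$ holds trivially ($1_\opens$ is the top of $\opens$), transposing across the adjunction gives $1_M\le p^\iota(1_\opens)$; as $p^\iota(1_\opens)\le 1_M$ automatically, we obtain $p^\iota(1_\opens)=1_M$, which is precisely fullness. Equivalently, without invoking the adjunction, $1_M\le p^\iota\bigl(p_\retraction(1_M)\bigr)\le p^\iota(1_\opens)\le 1_M$ using the increasing inequality and the monotonicity of $p^\iota$.

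For multiplicativity I would start from the two instances of the increasing inequality $m\le p^\iota\bigl(p_\retraction(m)\bigr)$ and $n\le p^\iota\bigl(p_\retraction(n)\bigr)$. Since multiplication is order-preserving in each argument, $mn\le p^\iota\bigl(p_\retraction(m)\bigr)\,p^\iota\bigl(p_\retraction(n)\bigr)$, and because $p^\iota$ preserves the product the right-hand side equals $p^\iota\bigl(p_\retraction(m)\,p_\retraction(n)\bigr)$, so that $mn\le p^\iota\bigl(p_\retraction(m)\,p_\retraction(n)\bigr)$. Transposing this inequality across the adjunction $p_\retraction\dashv p^\iota$ yields exactly $p_\retraction(mn)\le p_\retraction(m)\,p_\retraction(n)$, which is multiplicativity.

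The only genuinely delicate point is the final transposition: it uses the adjunction rather than the bare inequality $p^\iota\circ p_\retraction\ge\ident_M$. If one wishes to avoid the adjunction, the same conclusion follows by applying the monotone map $p_\retraction$ to $mn\le p^\iota\bigl(p_\retraction(m)\,p_\retraction(n)\bigr)$ and using $p_\retraction\circ p^\iota=\ident_\opens$, which holds because the associated observer context $(p^\iota(\opens),\,p^\iota\circ p_\retraction)$ is a genuine retraction, so $p_\retraction$ fixes $\opens$. The two routes are interchangeable, and everything else is routine monotonicity combined with the product-preservation of $p^\iota$.
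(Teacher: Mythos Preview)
Your proof is correct and matches the paper's approach: for fullness the paper notes that the right adjoint $\iota$ preserves the top, and for multiplicativity it applies $\retraction$ to the inequality $mn\le\iota(\retraction(m))\,\iota(\retraction(n))=\iota\bigl(\retraction(m)\retraction(n)\bigr)$ and invokes $\retraction\circ\iota=\ident_\opens$, which is exactly your ``alternative'' route. Your primary route via the adjunction transpose is an equivalent rephrasing of the same computation.
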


\begin{proof}
Let $p=(\iota,\retraction)$. If the observer is increasing $\iota$ preserves arbitrary meets because it is a right adjoint, so $\iota(1_{\opens})=1_M$. In addition, for all $m,n\in M$ we have
\[
\retraction(mn) \le \retraction\bigl(\iota(\retraction(m))\iota(\retraction(n))\bigr)=\retraction\bigl(\iota\bigl(\retraction(m)\retraction(n)\bigr)\bigr)= \retraction(m)\retraction(n). \qedhere
\]
\end{proof}

\begin{corollary}\label{cor:pgiso}
Any multiplicative \'etale observer $(\opens,p)$ of a measurement space $M$ satisfies
\[
p_\retraction(m)\omega p_\retraction(n)=p_\retraction\bigl(m p^\iota(\omega)n\bigr)
\]
for all $\omega\in\opens$ and all $m,n\in M$ such that $p_\retraction(m),p_\retraction(n)\in\ipi(\opens)$.
If furthermore the observer is $\ipi$-stable then it satisfies
\[
p_\retraction(s)\omega p_\retraction(t)=p_\retraction\bigl(s p^\iota(\omega)t\bigr)
\quad\text{and}\quad p_\retraction(s)p_\retraction(t)=p_\retraction(st)
\]
for all $\omega\in\opens$ and all $s,t\in\ipi_{p^\iota(e)}(M)$, and $p^\iota:\ipi(\opens)\to\ipi_{p^\iota(e)}(M)$ is an isomorphism of pseudogroups whose inverse is the restriction of $p_\retraction$ to $\ipi_{p^\iota(e)}(M)$.
\end{corollary}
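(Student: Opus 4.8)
The plan is to extract the two displayed equalities first, since they are immediate from the hypotheses, and then to bootstrap them into the pseudogroup isomorphism, whose surjectivity will be the only substantial point. Throughout write $\iota=p^\iota$, $\retraction=p_\retraction$ and $b=p^\iota(e)$.

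For the first displayed identity, observe that the two inequalities come from two different hypotheses. The \'etale condition gives $\retraction(s)\,\omega\,\retraction(t)\le\retraction\bigl(s\,\iota(\omega)\,t\bigr)$ as soon as $\retraction(s),\retraction(t)\in\ipi(\opens)$, which by clause~\eqref{qb1} of the $\ipi$-stability lemma is automatic for $s,t\in\ipi_b(M)$. The reverse inequality holds unconditionally: by multiplicativity and then the preparation condition,
\[
\retraction\bigl(s\,\iota(\omega)\,t\bigr)\le\retraction\bigl(s\,\iota(\omega)\bigr)\,\retraction(t)=\bigl(\retraction(s)\,\omega\bigr)\,\retraction(t).
\]
Hence equality, i.e.\ the first identity. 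The second identity is the special case $\omega=e$: for $s\in\ipi_b(M)$ one has $s=ss^*s$ (from $ss^*\le b$, $bs\le s$ and the stably Gelfand law, since $ss^*s\le bs\le s$ forces $s\le ss^*s$), and then $s^*s\le b$ gives $s=ss^*s\le sb\le s$, so $sb=s$; therefore $s\,\iota(e)\,t=sbt=st$ and the first identity at $\omega=e$ reads $\retraction(s)\,\retraction(t)=\retraction(st)$.

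Next I assemble the isomorphism. That $\iota$ sends $\ipi(\opens)$ into $\ipi_b(M)$ is a direct verification using that $\iota$ is a measurement-space homomorphism (so it preserves the product, the involution and the order): $\iota(\sigma)\iota(\sigma)^*=\iota(\sigma\sigma^*)\le\iota(e)=b$, likewise $\iota(\sigma)^*\iota(\sigma)\le b$, and $\iota(\sigma)b=\iota(\sigma e)=\iota(\sigma)=\iota(e\sigma)=b\,\iota(\sigma)$. Conversely $\retraction$ sends $\ipi_b(M)$ into $\ipi(\opens)$ by clause~\eqref{qb1}. Both restrictions are homomorphisms of the inverse-monoid structure: $\iota$ automatically, and $\retraction$ because it respects the involution (symmetry condition) and the product on $\ipi_b(M)$ (the second identity above), with $\retraction(b)=\retraction\bigl(\iota(e)\bigr)=e$. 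One composite is the identity: with the preparation condition and $\retraction(b)=e$,
\[
\retraction\bigl(\iota(\sigma)\bigr)=\retraction\bigl(b\,\iota(\sigma)\bigr)=\retraction(b)\,\sigma=e\sigma=\sigma,
\]
so $\retraction\circ\iota=\ident_{\ipi(\opens)}$.

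The remaining composite, $\iota\bigl(\retraction(s)\bigr)=s$ for every $s\in\ipi_b(M)$ --- equivalently, surjectivity of $\iota\colon\ipi(\opens)\to\ipi_b(M)$, or injectivity of $\retraction$ on $\ipi_b(M)$ --- is where I expect the main obstacle to lie, and it is the step that must finally use the module conditions $sb=s=bs$ left idle so far. These conditions are exactly what prevents $\ipi_b(M)$ from containing ``thin'' partial units (a single generator over a bisection is excluded, as its product with $b$ escapes it): they force $s$ to be \emph{full over its support}, which is precisely what $\iota\bigl(\retraction(s)\bigr)$ describes. The route I would take is to treat first the projections $q\le b$ (which lie in $\ipi_b(M)$), proving $q=\iota\bigl(\retraction(q)\bigr)$ by exploiting the inverse-quantal-frame structure of $\opens$ --- in particular the generation of the base locale $\opens_0$ and its infinite distributivity --- together with the persistence identities and the preparation/result conditions; and then to lift to a general partial unit through the decomposition $s=(ss^*)\,s\,(s^*s)$, matching the range and source projections $ss^*,s^*s$ with $\iota(\sigma\sigma^*),\iota(\sigma^*\sigma)$ where $\sigma=\retraction(s)$, and sliding $\iota(\sigma)$ into place by the product equalities. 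Everything preceding this surjectivity claim is formal; it is the faithful recovery of $s$ from $\retraction(s)$ that carries the real content.
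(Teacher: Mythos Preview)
Your treatment of the displayed equalities is correct and complete: the \'etale inequality together with multiplicativity and the preparation condition yield both directions, and your derivation of $sb=s$ from the stably Gelfand axiom is clean. You also correctly verify that $\iota$ and $\retraction$ restrict to maps between $\ipi(\opens)$ and $\ipi_b(M)$, and that $\retraction\circ\iota=\ident_{\ipi(\opens)}$. The paper offers no proof at all for this corollary, so up to this point you have supplied considerably more than the source does.

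However, you are right to flag the remaining composite $\iota\circ\retraction=\ident_{\ipi_b(M)}$ as the substantive step, and your proof stops short of establishing it. More importantly, the strategy you outline would not close the gap. Even granting the idempotent case $q=\iota(\retraction(q))$ for all projections $q\le b$, the lift via $s=(ss^*)\,s\,(s^*s)$ only recovers $ss^*=\iota(\sigma\sigma^*)$ and $s^*s=\iota(\sigma^*\sigma)$ where $\sigma=\retraction(s)$; but in an inverse semigroup two elements with the same source and range idempotents need not coincide (any nontrivial isotropy already shows this), so ``sliding $\iota(\sigma)$ into place by the product equalities'' is circular --- it presupposes exactly the identity you are trying to prove. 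Nor does the idempotent case itself follow from the hypotheses you list: nothing in the multiplicative, \'etale, and $\ipi$-stable axioms forces every projection $q\le b$ in $M$ to lie in the image of $\iota$.

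The paper states the isomorphism without argument, presumably deferring to the machinery in the cited references \cite{SGQ,QFB}, where the analogous claim for (increasing) quantic bundles is developed. Whether the surjectivity genuinely holds at the level of generality stated here --- multiplicative rather than increasing --- is not something you can extract from the paper itself, and your hesitation is well placed.
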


\paragraph{Quantic bundles.}

Let us relate this to the definition of map of quantales used in~\cite{MP1,OMIQ,SGQ,QFB}, where, in analogy with maps of locales, a \emph{map of involutive quantales} $p:M\to \opens$ is simply defined to be a homomorphism $p^*: \opens\to M$ of involutive quantales, called the \emph{inverse image homomorphism} of $p$. Accordingly, $p$ is a \emph{surjection} if $p^*$ is injective. If $p^*$ has a left adjoint $p_!$, referred to as the \emph{direct image homomorphism}, the map $p$ is said to be \emph{semiopen}, and it is \emph{weakly open} if $p_!$ is a homomorphism of left (equivalently, right) $\opens$-modules. Hence, an increasing observer map $(\iota,\retraction):M\to\opens$ is exactly the same as a weakly open surjection $p:M\to\opens$ such that both $p^*$ and $p_!$ are continuous; the left and right $\opens$-module actions are defined by
\[
\omega\cdot m=\iota(\omega)\,m\quad\text{and}\quad m\cdot\omega=m\,\iota(\omega)
\]
for all $\omega\in\opens$ and $m\in M$. Moreover, if $p^*$ and $p_!$ are continuous and $\opens$ is an inverse quantal frame, this correspondence is completed as follows: a \emph{quantic bundle} $p:M\to\opens$ is the same as an increasing \'etale observer map; an \emph{$\ipi$-stable} quantic bundle is the same as an increasing $\ipi$-stable \'etale observer map; and a \emph{stable quantic bundle} is the same as an increasing persistent \'etale observer map.

In \cite{QFB} a dictionary was presented that compared Fell bundles $\pi:E\to G$ on \'etale groupoids with maps of involutive quantales $p:\Max A\to\opens(G)$, and in particular with the various types of quantic bundles. So the definition of observer map has mathematical interest of its own, since it caters for a generalization of quantic bundles (because the analogue of $p_!$ is available, but free from the requirement of adjointness), and, on the other hand, it brings with it the new requirement of continuity with respect to the topologies of measurement spaces. Below we shall see that part of the dictionary of~\cite{QFB} can be replicated in the more general setting of observer maps, with the advantage of applying to general \'etale groupoids instead of only those that satisfy the restriction of localizability, whose classification is unknown (\cf\ section~\ref{sec:principal}). On the other hand, below we only work with groupoid C*-algebras (of complex valued functions), which is equivalent to restricting to C*-algebras of trivial Fell line bundles. A purely mathematical study of observer maps versus more general Fell bundles is interesting in its own right, but we shall not pursue it in this paper. In particular, we note that any abelian Fell bundle that yields a quantic bundle is necessarily a line bundle, and an interesting question is that of whether a similar fact holds for \'etale observers without the adjointness condition relating $p^\iota$ and $p_\retraction$.

\subsection{Quantum amplitudes and groupoid C*-algebras}\label{quetgrpd}

A natural question is that of how the well known constructions of C*-algebras from groupoids~\cite{Paterson,RenaultLNMath} translate to the language of measurement spaces, and this section provides some answers to this question. Specifically, we shall see that if $G$ is a second-countable locally compact groupoid which is both \'etale and Hausdorff, and if $A$ is the reduced C*-algebra of $G$, then $\Max A$ has, generalizing Example~\ref{exm:classicalobservercontext}, a classical observer $\bigl(\ms(G),p\bigr)$ whose properties mirror properties of $G$. Conversely, $\Max A$ is a ``quantization'' of $\ms(G)$.

\paragraph{Quantum amplitudes.} Let us regard the arrows of a groupoid $G$ as transitions between states of a classical system which is described by a classical measurement space $\ms(G)$, for now assuming that $G$ is finite with the discrete topology. A quantization of this system can be obtained by assigning a probability amplitude $\alpha(g)$ to each transition $g\in G_1$,
such that $\alpha(g)$ coincides, \`a-la Feynman, with the sum of the amplitudes of all the ``trajectories'' that realize $g$:
\[
\alpha(g) = \sum_{g=hk}\alpha(h)\alpha(k).
\]
In other words, the probability amplitudes define a function $\alpha: G_1\to\CC$ that satisfies $\alpha=\alpha*\alpha$ for the convolution product which is given, for all $\alpha,\beta:G_1\to\CC$ and all $g\in G_1$, by
\[
\alpha * \beta (g) = \sum_{g=hk} \alpha(g)\beta(k).
\]
The set of all the functions $\alpha:G_1\to\CC$ equipped with this product is an algebra $C(G)$, isomorphic to the groupoid algebra $\CC G$, and it has an involution given by
\[
\alpha^*(g) = \overline{\alpha(g^{-1})}.
\]
Then $C(G)$ is a natural locus in which to describe algebraically the quantizations of the system.

More generally, constructions of C*-algebras exist for more general locally compact groupoids equipped with Haar systems of measures~\cite{Paterson}. In particular, if $G$ is a second-countable locally compact groupoid that is both Hausdorff and \'etale (for instance an \'etale Lie groupoid), and if $A=C_r^*(G)$ is its reduced C*-algebra, the elements of $A$ can be identified with complex valued functions on $G_1$~\cites{Renault,QFB}, just as in the finite discrete case, and again we may regard $A$ as a locus for describing quantum probability amplitudes. So in the rest of this section I shall assume that $G$ is a groupoid of this type, as usual writing $\ms(G)$ for the associated \'etale measurement space.

\paragraph{Reduced C*-algebras.}

Let $G$ be a second-countable locally compact Hausdorff \'etale group\-oid, and let us recall some facts about its reduced C*-algebra.
I shall write $\supp f$ and $\osupp f$, respectively, for the \emph{support} and the \emph{open support} of a continuous function $f:G_1\to\CC$:
\begin{eqnarray*}
\osupp f &=& \{x\in G_1\st f(x)\neq 0\},\\
\supp f &=& \overline{\osupp f}.
\end{eqnarray*}
The \emph{convolution algebra} of $G$,
\[
C_c(G) = \{f:G\to\CC\st f\textrm{ is continuous and }\supp f\textrm{ is compact}\},
\]
has multiplication and involution defined for all $x\in G_1$ by
\begin{equation}\label{convolution}
f\conv g(x) = \sum_{x=yz} f(y)g(z), \quad f^*(x) = \overline{f(x^{-1})}.
\end{equation}
(Note that the above sum has only finitely many nonzero summands.)
I shall also use the following notation for all open sets $U\subset G_1$:
\[
C_c(U) := \{f\in C_c(G)\st\supp f\subset U\}.
\]

By~\cite{QFB}, the \emph{reduced norm} on $C_c(G)$ can be defined to be the unique C*-norm $\norm\cdot$ such that the following three conditions hold:
\begin{enumerate}
\item $\norm f_\infty\le \norm f$ for all $f\in C_c(G)$;
\item $\norm f_\infty=\norm f$ if $f\in C_c(U)$ for some $U\in\ipi(G)$;
\item The completion $A$ of $C_c(G)$ under this norm continuously extends the inclusion $C_c(G)\to C_0(G)$ to an injective function $A\to C_0(G)$.
\end{enumerate}
The completion $A$ is the \emph{reduced C*-algebra} of $G$, usually denoted $C_r^*(G)$. Due to condition 3 above I will always regard it concretely as an algebra of functions on $G$. Note that if $G$ is compact then $C_r^*(G)$ coincides with $C(G)$ because
\[
C_c(G)\subset C_r^*(G)\subset C_0(G)\subset C(G)=C_c(G).
\]

\begin{lemma}\label{prop:genconv}
Let $G$ be a second-countable locally compact Hausdorff \'etale group\-oid, let $f,g\in C_r^*(G)$, and let $U\in\ipi(G)$.
\begin{enumerate}
\item\label{prop:genconv1} If $g\in \overline{C_c(U)}$ then for all $x\in G_1$
\[
fg(x) = \left\{
\begin{array}{ll}
0 & \textrm{if }d(x)\notin d(U),\\
f(xz^{-1})g(z) & \textrm{if }z\in U\textrm{ and }d(z)=d(x).
\end{array}\right.
\]
\item\label{prop:genconv2} If $f\in\overline{C_c(U)}$ then for all $x\in G_1$
\[
fg(x) = \left\{
\begin{array}{ll}
0 & \textrm{if }r(x)\notin r(U),\\
f(y)g(y^{-1}x) & \textrm{if }y\in U\textrm{ and }r(y)=r(x).
\end{array}\right.
\]
\end{enumerate}
(Note that $y$ and $z$ above are unique.)
\end{lemma}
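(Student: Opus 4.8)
The plan is to reduce everything to the dense convolution algebra $C_c(G)$, where the product is the finite sum of~\eqref{convolution}, and then pass to the limit. Two ingredients make this work: multiplication is continuous in the C*-algebra $C_r^*(G)$, and the reduced norm dominates the uniform norm (the defining inequality $\norm f_\infty\le\norm f$), so that norm convergence in $C_r^*(G)$ entails uniform, hence pointwise, convergence of the associated functions on $G_1$. The hypothesis that $U\in\ipi(G)$ is an open bisection — concretely, that $d$ and $r$ restrict on $U$ to homeomorphisms onto the open sets $d(U)$ and $r(U)$, so in particular $d|_U$ and $r|_U$ are injective — is exactly what forces the convolution sum to collapse to at most one surviving term, and it yields the asserted uniqueness of $z$ and $y$.

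First I would prove \eqref{prop:genconv1} in the base case $f\in C_c(G)$, $g\in C_c(U)$. Parametrizing the factorizations $x=yz$ by their right factor (any $z$ with $d(z)=d(x)$ gives $y=xz^{-1}$, which is defined since then $d(x)=r(z^{-1})$), the convolution reads $fg(x)=\sum_{d(z)=d(x)} f(xz^{-1})g(z)$. Since $g$ is supported in $U$, a summand can be nonzero only for $z\in U$; and because $d|_U$ is injective there is at most one such $z$, existing precisely when $d(x)\in d(U)$. Hence the sum is empty, giving $fg(x)=0$, when $d(x)\notin d(U)$, and otherwise reduces to the single term $f(xz^{-1})g(z)$ for the unique $z\in U$ with $d(z)=d(x)$, which is the claimed formula.

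Next I would remove the $C_c$ restriction. Choose $f_n\to f$ in $C_r^*(G)$ with $f_n\in C_c(G)$, and $g_n\to g$ with $g_n\in C_c(U)$, the latter being possible because $g\in\overline{C_c(U)}$. Continuity of multiplication gives $f_ng_n\to fg$ in norm, whence $f_ng_n\to fg$ uniformly and in particular pointwise, and likewise $f_n\to f$, $g_n\to g$ pointwise. Applying the base case to each $f_ng_n$ and letting $n\to\infty$ yields \eqref{prop:genconv1} for all $f,g\in C_r^*(G)$: when $d(x)\notin d(U)$ every $f_ng_n(x)=0$, so $fg(x)=0$; and when $d(x)\in d(U)$ we get $fg(x)=\lim_n f_n(xz^{-1})g_n(z)=f(xz^{-1})g(z)$ for the unique $z\in U$ with $d(z)=d(x)$. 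Finally I would deduce \eqref{prop:genconv2} from \eqref{prop:genconv1} by involution: since $U^*=U^{-1}\in\ipi(G)$ and $f\in\overline{C_c(U)}$ implies $f^*\in\overline{C_c(U^{-1})}$, apply \eqref{prop:genconv1} to the product $g^*f^*$, whose right factor $f^*$ is supported in $U^{-1}$, and use $(fg)(x)=\overline{(g^*f^*)(x^{-1})}$. Writing the resulting unique factor as $\zeta=y^{-1}$ with $y\in U$ turns $d(\zeta)=d(x^{-1})=r(x)$ into the condition $r(y)=r(x)$ (and $r(x)\in r(U)$), and the identities $g^*(x^{-1}y)=\overline{g(y^{-1}x)}$, $f^*(y^{-1})=\overline{f(y)}$ reproduce exactly the stated formula $fg(x)=f(y)g(y^{-1}x)$, with $fg(x)=0$ when $r(x)\notin r(U)$.

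The only genuinely delicate point is the passage to pointwise limits: it is legitimate precisely because of the domination $\norm f_\infty\le\norm f$ built into the reduced norm, which ensures that multiplication, \emph{a priori} only norm-continuous, also controls the pointwise values. Everything else is the bookkeeping of the single surviving convolution term, governed by the bisection property of $U$, and the routine algebra of the involution for part~\eqref{prop:genconv2}.
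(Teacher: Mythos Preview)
Your proof is correct and self-contained: the reduction to $C_c(G)$ via the bisection property of $U$, the limit argument using $\norm\cdot_\infty\le\norm\cdot$, and the involution trick for part~\eqref{prop:genconv2} are all sound. The paper itself does not give a proof but merely cites \cite{QFB}*{Lemma~3.7} for part~\eqref{prop:genconv1} and says part~\eqref{prop:genconv2} is similar; your argument is essentially the standard one underlying that citation, so there is no substantive divergence to compare.
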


\begin{proof}
The first statement is a particular case of~\cite{QFB}*{Lemma 3.7}, and the other can be proved in a similar way.
\end{proof}

\subsection{Observers from groupoid C*-algebras}\label{sec:mainobsexamples}

Now we will see that $\Max C_r^*(G)$ has a canonical observer $\bigl(\ms(G),p\bigr)$.

\begin{lemma}\label{prop:p}
Let $G$ be a second-countable locally compact Hausdorff \'etale group\-oid. The mapping $\overline{C_c(-)}:\ms(G)\to\Max C_r^*(G)$ that sends each $U\in\ms(G)$ to the closure $\overline{C_c(U)}$ in $C_r^*(G)$ is a homomorphism of measurement spaces.
\end{lemma}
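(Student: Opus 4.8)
The plan is to verify directly that $\iota:=\overline{C_c(-)}$ meets the requirements of a homomorphism of measurement spaces: preservation of $0$ and of binary joins, continuity, preservation of the involution, and preservation of the product. Well-definedness (each $\overline{C_c(U)}$ is a closed subspace of $C_r^*(G)$) and $\iota(\emptyset)=\{0\}$ are immediate, as is the involution: since $f^*(x)=\overline{f(x^{-1})}$ and inversion is a homeomorphism, one has $C_c(U)^*=C_c(U^*)$ at the level of the convolution algebra, and applying the (isometric, hence closure-preserving) involution of $C_r^*(G)$ gives $\iota(U^*)=\overline{C_c(U)^*}=\iota(U)^*$.

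For binary joins I would first establish the exact identity $C_c(U\cup V)=C_c(U)+C_c(V)$ inside $C_c(G)$: the inclusion $\supseteq$ is clear from $\supp(f+g)\subseteq\supp f\cup\supp g$, while $\subseteq$ follows from a partition of unity $\phi_U,\phi_V$ subordinate to the cover $\{U,V\}$ of the compact set $\supp f$ (available since $G_1$ is locally compact Hausdorff), writing $f=f\phi_U+f\phi_V$ with $f\phi_U\in C_c(U)$ and $f\phi_V\in C_c(V)$. Taking closures and using $P\vee Q=\overline{P+Q}$ in $\Max C_r^*(G)$ then yields $\iota(U\cup V)=\iota(U)\vee\iota(V)$.

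Continuity I would prove by checking that the preimage of each subbasic lower-Vietoris open $\ps W=\{P\mid P\cap W\neq\emptyset\}$ is Scott open in $\topology(G_1)$. It is upward closed because $U\subseteq U'$ forces $\overline{C_c(U)}\subseteq\overline{C_c(U')}$. For the directed condition the crucial observation is that for a directed family $D$ of opens one has $C_c\bigl(\bigcup D\bigr)=\bigcup_{U'\in D}C_c(U')$, since any $f\in C_c(\bigcup D)$ has compact support covered by finitely many, hence (by directedness) by a single member of $D$. Consequently $\overline{C_c(\bigcup D)}=\overline{\bigcup_{U'\in D}C_c(U')}$, so if some element of $\overline{C_c(\bigcup D)}$ lies in the open set $W$ it can be approximated in norm by an element of $C_c(U')$ for a single $U'\in D$, which will still lie in $W$; thus $U'\in\iota^{-1}(\ps W)$.

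The main obstacle is preservation of the product, namely $\overline{C_c(UV)}=\overline{\langle\overline{C_c(U)}\cdot\overline{C_c(V)}\rangle}$, where on the left $UV$ is the pointwise groupoid product of opens and on the right the product is that of $\Max C_r^*(G)$. My plan is to reduce everything to open bisections, using that for an \'etale groupoid $\ipi(G)$ is a basis and that every $f\in C_c(U)$ decomposes, via a partition of unity over its compact support, as a finite sum of functions supported in bisections contained in $U$. The heart of the argument is the claim that for bisections $U',V'$ one has $\{f\conv g\mid f\in C_c(U'),\ g\in C_c(V')\}=C_c(U'V')$: the inclusion $\subseteq$ follows from the convolution formula of Lemma~\ref{prop:genconv} (which gives $\osupp(f\conv g)\subseteq U'V'$, and, since $m$ restricts to a homeomorphism $(U'\times_{G_0}V')\to U'V'$ and $G$ is Hausdorff, that $\supp(f\conv g)$ is a compact subset of $U'V'$), while $\supseteq$ is proved by an explicit factorization $h=f\conv g$: given $h\in C_c(U'V')$, choose via Urysohn a $g\in C_c(V')$ equal to $1$ on the $V'$-component of $\supp h$, transport $h$ through $m^{-1}$ to define $f\in C_c(U')$, and verify $f\conv g=h$ (the choice of $g$ removes any division). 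Granting this bisection identity, the partition-of-unity reductions give $\langle C_c(U)\cdot C_c(V)\rangle=C_c(UV)$ for arbitrary opens; one inclusion of the product formula is then immediate, and the other follows from joint continuity of multiplication in $C_r^*(G)$, which gives $\overline{C_c(U)}\cdot\overline{C_c(V)}\subseteq\overline{C_c(U)\cdot C_c(V)}\subseteq\overline{C_c(UV)}$. The delicate points to get right are the continuity and compact support of the factor $f$ near the boundary of the region where $m^{-1}$ is defined, and ensuring the various partitions of unity keep supports inside the prescribed bisections.
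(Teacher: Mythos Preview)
Your argument is correct, and each of the four ingredients (zero/involution, binary joins via partition of unity, continuity via Scott-open preimages, multiplication via reduction to bisections and explicit factorization) goes through as you sketch it. The bisection factorization $h=f\conv g$ with $g\equiv 1$ on the relevant compact works for the reason you indicate: the $U'$-component of $m^{-1}(\supp h)$ is compact in $U'$ and contained in the open set $\{y\in U':d(y)\in r(V')\}$, so the transported function extends by zero to a continuous compactly supported function on $U'$.

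That said, your route is markedly different from the paper's. The paper's proof is two lines: it cites~\cite{QFB} for the fact that $\overline{C_c(-)}$ is a homomorphism of involutive quantales (so arbitrary joins, products and involution are preserved), and then observes that any join-preserving map is Scott continuous, while the lower Vietoris topology on $\Max C_r^*(G)$ is contained in the Scott topology (a general fact about sober lattices), so continuity is automatic. In effect you have unpacked the cited reference. What your approach buys is self-containment; what the paper's buys is brevity and a cleaner separation of concerns. In particular, your continuity argument is really a disguised proof that $\iota$ preserves directed joins (that is the content of $C_c(\bigcup D)=\bigcup_{U'\in D}C_c(U')$), and once you have that together with preservation of finite joins you get preservation of all joins and hence Scott continuity --- at which point the paper's one-line observation about the lower Vietoris topology finishes it without inspecting subbasic preimages.
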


\begin{proof}
$\overline{C_c(-)}$ is a homomorphism of involutive quantales due to the results of~\cite{QFB}. And, since the topology of $\Max C_r^*(G)$ is contained in the Scott topology, it is continuous because it preserves joins.
\end{proof}

\begin{lemma}\label{prop:osuppcont}
Let $G$ be a second-countable locally compact Hausdorff \'etale group\-oid. The mapping
$\osupp : C_r^*(G)\to\ms(G)$
is continuous and sup-linear.
\end{lemma}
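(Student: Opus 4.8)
The plan is to treat the two assertions separately: sup-linearity is a pointwise triviality, while continuity is the substantive part and rests on the domination $\norm\cdot_\infty\le\norm\cdot$ of the supremum norm by the reduced norm together with the local compactness of $G_1$.

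First I would check sup-linearity. By the third defining condition of the reduced norm every $f\in C_r^*(G)$ is realized as a continuous function $G_1\to\CC$, so $\osupp f=f^{-1}(\CC\setminus\{0\})$ is open and is a genuine element of $\ms(G)=\topology(G_1)$, in which the join is union of opens and the specialization order is inclusion. Then $\osupp 0=\emptyset=0$, and for $f,g\in C_r^*(G)$ and $\lambda\in\CC$ the implication ``$(\lambda f+g)(x)\neq 0\Rightarrow f(x)\neq 0$ or $g(x)\neq 0$'' gives, pointwise,
\[
\osupp(\lambda f+g)\subseteq\osupp f\cup\osupp g=\osupp f\vee\osupp g,
\]
which is exactly the defining inequality of a sup-linear map.

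For continuity I would fix $f\in C_r^*(G)$ and a Scott-open $\mathcal U\subseteq\topology(G_1)$ with $\osupp f\in\mathcal U$, and first extract a compact set controlling a Scott-neighborhood. Since $G_1$ is locally compact, $\osupp f$ is the directed union of those open $W$ with $W\ll\osupp f$ in the locally compact locale $\topology(G_1)$, i.e.\ those $W$ admitting a compact $K$ with $W\subseteq K\subseteq\osupp f$. As $\mathcal U$ is Scott-open and contains this directed join, some such $W$ already lies in $\mathcal U$; choosing the accompanying compact $K$ and using that $\mathcal U$ is upward closed yields
\[
\{V\in\topology(G_1)\st K\subseteq V\}\subseteq\{V\st W\subseteq V\}\subseteq\mathcal U.
\]

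Finally I would turn a norm estimate into the inclusion $K\subseteq\osupp g$. Since $K$ is compact, $K\subseteq\osupp f$, and $f$ is continuous, the number $\delta:=\min_{x\in K}\vert f(x)\vert$ is strictly positive. Conditions~1 and~3 in the definition of the reduced norm give $\norm h_\infty\le\norm h$ for all $h\in C_r^*(G)$, so whenever $\norm{g-f}<\delta$ we have $\vert g(x)-f(x)\vert\le\norm{g-f}_\infty\le\norm{g-f}<\delta$ for every $x$, whence $\vert g(x)\vert\ge\vert f(x)\vert-\vert g(x)-f(x)\vert\ge\delta-\vert g(x)-f(x)\vert>0$ for all $x\in K$. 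Thus $K\subseteq\osupp g$, so $\osupp g\in\mathcal U$, and the $\delta$-ball about $f$ maps into $\mathcal U$. The only real obstacle is the extraction step: passing from membership of $\osupp f$ in an abstract Scott-open set to a concrete compact $K\subseteq\osupp f$ whose up-set lands in $\mathcal U$; once the supremum-norm domination $\norm\cdot_\infty\le\norm\cdot$ is available, the remaining estimate is a routine uniform-approximation argument.
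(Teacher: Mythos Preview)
Your proof is correct and follows essentially the same route as the paper: extract, via local compactness and Scott-openness, a compact set $K\subset\osupp f$ whose upper set lies in $\mathcal U$, take the positive minimum of $\vert f\vert$ on $K$, and use the domination $\norm\cdot_\infty\le\norm\cdot$ to conclude that nearby functions do not vanish on $K$. The only cosmetic difference is that the paper phrases the extraction step in terms of an open $V$ with compact closure $\overline V\subset\osupp f$ rather than in terms of the way-below relation, and takes the minimum over $\overline V$; the sup-linearity argument is identical.
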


\begin{proof}
Let $f\in C_r^*(G)$, and let $\mathcal U\subset\ms(G)$ be a Scott open set such that
\[
\osupp f\in\mathcal U.
\]
From the hypothesis that $G$ is locally compact Hausdorff it follows that $\osupp f$ is a directed union of the open sets $V$ such that $\overline V$ is compact and $\overline V\subset\osupp f$, so for some of these open sets we must have $V\in\mathcal U$. Let us fix one such $V$ and define
\[
\varepsilon=\min\{\vert f(x)\vert\st x\in\overline V\}.
\]
Then $\varepsilon>0$, and in order to prove that $\osupp$ is continuous it suffices to show that for the open ball $B_\varepsilon(f)\subset C_r^*(G)$ we have
\[
\osupp B_\varepsilon(f)\subset\mathcal U.
\]
In order to see this let $g\in B_{\varepsilon}(f)$ in $C_r^*(G)$. Then
\[
\norm{f-g}_{\infty}\le\norm{f-g}<\varepsilon,
\]
so for all $x\in V$ we have $g(x)\neq 0$ because
\[
\varepsilon-\vert g(x)\vert \le \vert f(x)\vert-\vert g(x)\vert\le\vert f(x)-g(x)\vert<\varepsilon.
\]
So $V\subset\osupp g$, and thus $\osupp g\in\mathcal U$ because $\mathcal U$ is upwards closed. This shows that $\osupp$ is continuous. Moreover, notice that: $\osupp\{0\}=\emptyset$; $\osupp \lambda f=\osupp f$ for all $\lambda\neq 0$; $\osupp \lambda f=\emptyset$ if $\lambda=0$; and evidently $\osupp(f+g)\subset\osupp f\cup\osupp g$. So $\osupp$ is sup-linear.
\end{proof}

The following results yield the main examples of classical observers of this paper, respectively generalizing Example~\ref{exm:classicalobservercontext} and Example~\ref{exm:classicalobservercontext2}. I shall use the following notation, where $V\in \Max C_r^*(G)$ for a group\-oid $G$:
\begin{equation}\label{def:oSupp}
\osupp V := \bigcup_{f\in V}\osupp f.
\end{equation}

\begin{theorem}\label{prop:osuppprops}
Let $G$ be a second-countable locally compact Hausdorff \'etale group\-oid. The pair $p:=\bigl(\overline{C_c(-)},\osupp)$ is an observer map
\[
p:\Max C_r^*(G)\to\ms(G),
\]
and the observer $\bigl(\ms(G),p\bigr)$ is \'etale, full and multiplicative (this is referred to as the \emph{associated observer} of $G$).
\end{theorem}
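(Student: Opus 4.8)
The plan is to verify in turn that $p=(\overline{C_c(-)},\osupp)$ satisfies the defining conditions of an observer map and then the three extra properties. The embedding $\overline{C_c(-)}$ is already a homomorphism of measurement spaces by Lemma~\ref{prop:p}, and the retraction $\osupp\colon\Max C_r^*(G)\to\ms(G)$ is a homomorphism of sober lattices: by Lemma~\ref{prop:osuppcont} the underlying map $\osupp\colon C_r^*(G)\to\ms(G)$ is continuous and sup-linear, so by Theorem~\ref{smallvsbigcont} it extends uniquely to such a homomorphism, and the formula~\eqref{uniqueext} identifies this extension with the set-theoretic $\osupp$ of~\eqref{def:oSupp}. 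The single most useful consequence I would extract at the outset is that, since in $\Max C_r^*(G)$ one has $VW=\overline{\langle V\cdot W\rangle}=\V_{f\in V,\,g\in W}\langle fg\rangle$ and $\osupp$ preserves joins,
\[
\osupp(VW)=\bigcup_{f\in V,\,g\in W}\osupp(fg),
\]
and likewise for triple products; every subsequent identity then reduces to a statement about $\osupp(fg)$ for individual functions. The symmetry condition $\osupp(V^*)=\osupp(V)^*$ follows because both sides are join-preserving maps agreeing on the spans $\langle f\rangle$, where $\langle f\rangle^*=\langle f^*\rangle$, $\osupp\langle f\rangle=\osupp f$, and $\osupp(f^*)=(\osupp f)^*$ directly from $f^*(x)=\overline{f(x^{-1})}$.

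For the preparation condition $\osupp\bigl(V\,\overline{C_c(U)}\bigr)=\osupp(V)\,U$ I would first reduce to the case $U\in\ipi(G)$: since $\ms(G)$ is an inverse quantal frame every open set is a join of local bisections, and both $\overline{C_c(-)}$ and $\osupp$ preserve joins while the quantale products distribute over them. For $U$ a bisection I would invoke Lemma~\ref{prop:genconv}(\ref{prop:genconv1}): uniqueness of $z\in U$ with $d(z)=d(x)$ gives $\osupp(fg)=\osupp f\cdot(U\cap\osupp g)$ for $f\in V$ and $g\in\overline{C_c(U)}$. Taking the union over $f,g$ and using $\bigcup_g(U\cap\osupp g)=U\cap\osupp\overline{C_c(U)}=U$ — the inclusion $U\subseteq\osupp\overline{C_c(U)}$ being an immediate Urysohn argument — yields $\osupp(V\,\overline{C_c(U)})=\osupp(V)\cdot U$.

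Fullness is immediate, since $\overline{C_c(G_1)}=C_r^*(G)$ is the top element of $\Max C_r^*(G)$; multiplicativity is equally direct, because each nonzero term of $fg(x)=\sum_{x=yz}f(y)g(z)$ forces $x\in\osupp f\cdot\osupp g$, whence $\osupp(fg)\subseteq\osupp f\cdot\osupp g$ and so $\osupp(VW)\subseteq\osupp(V)\,\osupp(W)$ after taking unions. The étale condition $\osupp(V)\,U\,\osupp(W)\le\osupp\bigl(V\,\overline{C_c(U)}\,W\bigr)$, assumed with $\osupp(V),\osupp(W)\in\ipi(G)$, is where the real work lies and is the step I expect to be the main obstacle, since it runs against the grain of multiplicativity and must rule out cancellation in the convolution. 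My plan is: given $x=abc$ with $a\in\osupp V$, $b\in U$, $c\in\osupp W$, pick $f\in V$ and $h\in W$ with $a\in\osupp f$, $c\in\osupp h$ and choose $g\in C_c(U)$ supported on a small open bisection $U_b\subseteq U$ with $g(b)\neq0$. Because $\osupp f\subseteq\osupp V$ and $\osupp h\subseteq\osupp W$ are bisections and $U_b$ is a bisection, the factorization of $x$ as a product of one arrow from each is unique (the leftmost factor is pinned down by $r(x)$, then the middle by $d$ of the leftmost, then the right), so the sum defining $fgh(x)$ has the single nonzero term $f(a)g(b)h(c)\neq0$; hence $x\in\osupp(fgh)\subseteq\osupp\bigl(V\,\overline{C_c(U)}\,W\bigr)$. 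Producing this bisection neighbourhood $U_b$ and verifying uniqueness of the triple factorization is the crux of the argument.
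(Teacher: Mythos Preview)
Your proposal is correct and follows essentially the same route as the paper: reduce to single elements via join-preservation, handle the preparation condition by reducing $U$ to local bisections and invoking Lemma~\ref{prop:genconv}, obtain multiplicativity from the convolution formula, and prove the \'etale inequality by exploiting uniqueness of factorizations through three bisections. Your \'etale argument differs only cosmetically from the paper's --- you shrink the support of $g$ to a bisection $U_b\subset U$, whereas the paper first reduces $U$ itself to a bisection --- and both work for the same reason.

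One omission: you do not separately verify the retraction identity $\osupp\,\overline{C_c(U)}=U$, which the paper lists as the first condition to check. You have the inclusion $U\subseteq\osupp\,\overline{C_c(U)}$ via the Urysohn-type argument you mention; the reverse inclusion requires the observation that any $f\in\overline{C_c(U)}$ vanishes outside $U$, since $f$ is a limit (in the reduced norm, hence in $\norm{\cdot}_\infty$) of functions in $C_c(U)$. This is easy but should be stated, and once you have it the intersection $U\cap\osupp g$ in your preparation step becomes simply $\osupp g$.
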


\begin{proof}
In order to show that $p$ is an observer map we need to prove that
\[
\osupp:\Max C_r^*(G)\to\ms(G)
\]
is a homomorphism of sober lattices and that
the following conditions hold for all $U\in\ms(G)$ and $V\in\Max C_r^*(G)$:
\begin{enumerate}
\item\label{osuppretraction} $\osupp \overline{C_c(U)}=U$.
\item\label{osuppprop2} $\osupp V^\inv=(\osupp V)^\inv$.
\item\label{osuppbimorph} $\osupp(V\overline{C_c(U)})=(\osupp V)U$.
\end{enumerate}

First, by Theorem~\ref{smallvsbigcont}, $\osupp:\Max C_r^*(G)\to\ms(G)$ is a homomorphism of sober lattices because, by Lemma~\ref{prop:osuppcont}, $\osupp:C_r^*(G)\to\ms(G)$ is continuous and sup-linear. Now, for proving \eqref{osuppretraction}, let $U\in\ms(G)$. If $f\in \overline{C_c(U)}$ we must have a sequence $(f_i)$ in $C_c(U)$ such that $\lim_i f_i = f$. Hence, if $x\notin U$ we have $f(x)=\lim_i f_i(x) = 0$. This shows that $\osupp f\subset U$. On the other hand, if $x\in U$ there is $f\in C_c(G)$ such that $f(x)\neq 0$ and $\supp f\subset U$ (because $G$ is locally compact Hausdorff), and thus $\osupp \overline{C_c(U)}=U$.

Now observe that if $f\in C_r^*(G)$ we have $f^*(x)=0$ if and only if $\overline{f(x^{-1})}=0$, so $\osupp f^*=(\osupp f)^*$. From here \eqref{osuppprop2} easily follows, using the fact that $\osupp$ preserves joins.

Next let us prove \eqref{osuppbimorph}. For notational convenience I will prove the left $\ms(G)$-module version of it:
\[
\osupp(\overline{C_c(U)}V)=U\osupp V.
\]
It suffices to prove the equation for one dimensional subspaces $V=\langle f\rangle$ because
\[
\osupp(\overline{C_c(U)}V)=\osupp\bigl(\overline{C_c(U)}\V_{f\in V}\langle f\rangle\bigr)=\bigcup_{f\in V}\osupp(\overline{C_c(U)}\langle f\rangle)\]
and
\[U\osupp V = U\osupp\bigl(\V_{f\in V}\langle f\rangle\bigr)=\bigcup_{f\in V} U\osupp\langle f\rangle.
\]
It also suffices to prove it for $U\in\ipi(G)$ because every open set of $G_1$ is a union of local bisections $(U_i)$, and thus if the equation holds for each $U_i$ it also holds for $U$:
\begin{eqnarray*}
\osupp(\overline{C_c(U)}V)&=&\osupp\bigl(\overline{C_c\bigl(\bigcup_i U_i\bigr)}V\bigr)=\bigcup_i\osupp(\overline{C_c(U_i)}V)\\
&=&\bigcup_i U_i\osupp V= U\osupp V.
\end{eqnarray*}
So for all $U\in\ipi(G)$ and $f\in C_r^*(G)$ let us prove
\begin{equation}\label{onedimbim}
\osupp(\overline{C_c(U)}f)=U\osupp f.
\end{equation}
Let $x\in U\osupp f$. Then we obtain $x=yz$ with $y\in U$ and $z\in\osupp f$ for the unique element
\[
y\in r^{-1}(\{r(x)\})\cap U,
\]
where $z=y^{-1}x$. Letting $g\in C_c(U)$ such that $g(y)\neq 0$ we conclude, using Lemma~\ref{prop:genconv}\eqref{prop:genconv2},
\[
x\in\osupp(gf)\subset \osupp(\overline{C_c(U)}f),
\]
and thus
\[
U\osupp f\subset\osupp(\overline{C_c(U)}f).
\]
For the converse inclusion let $x\in\osupp(\overline{C_c(U)}f)$. Since $\overline{C_c(U)}=\V_{g\in C_c(U)}\langle g\rangle$, we obtain
\[
\osupp(\overline{C_c(U)}f)=\bigcup_{g\in C_c(U)}\osupp(gf),
\]
so for some $g\in C_c(U)$ we have $x\in\osupp(gf)$. Again from Lemma~\ref{prop:genconv}\eqref{prop:genconv2} we conclude that
$x=yz$ with $y\in\osupp g$ and $z\in\osupp f$, so
\[
x\in\osupp g\osupp f\subset U\osupp f.
\]
This proves \eqref{onedimbim}.

Now let us prove that the observer is \'etale. Again it suffices to restrict to one dimensional subspaces of $C_r^*(G)$ and local bisections of $G$; that is, we need to prove the inclusion
\begin{equation}\label{frob3}
\osupp f\ U \osupp g\subset\osupp\bigl(f\overline{C_c(U)}g\bigr)
\end{equation}
for all $U\in\ipi(G)$ and all $f,g\in C_r^*(G)$ such that $\osupp f,\osupp g\in\ipi(G)$.
First note that the condition $U\in\ipi(G)$ implies $\overline{C_c(U)}=C_0(U)$. Then $f\in \overline{C_c\bigl(\osupp f\bigr)}$ because $\osupp f\in\ipi(G)$. Similarly, $g\in \overline{C_c\bigl(\osupp g\bigr)}$. So, for all $s\in \overline{C_c(U)}$, if $x\in \osupp f\osupp s\osupp g$ there is a unique decomposition $x=yzw$ such that $y\in\osupp f$, $z\in\osupp s$, and $w\in\osupp g$, and thus $fsg(s)=f(y)s(z)g(z)\neq 0$, so
\[
\osupp f\,\osupp s\,\osupp g\subset\osupp(fsg).
\]
Hence,
\begin{eqnarray*}
\osupp f\ U \osupp g &=& \bigcup_{s\in \overline{C_c(U)}} \osupp f\,\osupp s\,\osupp g\\
&\subset&\bigcup_{s\in \overline{C_c(U)}} \osupp(asb)\\
&=& \osupp(f \overline{C_c(U)}g),
\end{eqnarray*}
thus proving \eqref{frob3}.

Finally, it is obvious that the observer $\bigl(\ms(G),p\bigr)$ is full, and in order to prove that it is multiplicative all we need is to prove that for all $V,W\in\Max C_r^*(G)$ we have
\begin{equation}\label{eq:fullobs}
\osupp(VW)\subset\osupp V\osupp W.
\end{equation}
Let $f,g\in C_r^*(G)$ and $x\in\osupp(f\conv g)$. From the convolution formula in \eqref{convolution} it follows that for some $(y,z)\in G_2$ such that $x=yz$ we must have both $f(y)\neq 0$ and $g(z)\neq 0$, since otherwise we would obtain $f\conv g(x)=0$. So $x\in\osupp f\osupp g$, showing that $\osupp(f\conv g)\subset\osupp f\osupp g$. Then, using again the fact that $\osupp$ preserves joins of $\Max C_r^*(G)$, \eqref{eq:fullobs} is obtained.
\end{proof}

\begin{corollary}
Let $G$ be a second-countable locally compact Hausdorff \'etale group\-oid. There is an observer map
\[
p_0:\Max C_r^*(G)\to\topology(G_0)
\]
which coincides with the pair $\bigl(\overline{C_c(-)},\osupp(-)\cap G_0\bigr)$.
\end{corollary}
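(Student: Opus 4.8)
The plan is to realize $p_0$ as the composite of two observer maps already available: the associated observer $p=\bigl(\overline{C_c(-)},\osupp\bigr)\colon\Max C_r^*(G)\to\ms(G)$ of Theorem~\ref{prop:osuppprops}, followed by the canonical observer of the base locale. Since $G$ is a second-countable locally compact sober \'etale groupoid, $\ms(G)$ is a countably-based locally compact inverse quantal frame with the Scott topology, so Theorem~\ref{prop:localobs} applies and yields a classical observer context $\bigl(\ms(G)_0,(-)\wedge e\bigr)$; as an observer this is the map $q\colon\ms(G)\to\ms(G)_0$ whose embedding $q^\iota\colon\ms(G)_0\hookrightarrow\ms(G)$ is the inclusion of the base locale and whose retraction is $q_\retraction=(-)\wedge e$. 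Under the canonical order isomorphism $\ms(G)_0\cong\topology(G_0)$ (here $e=u(G_0)$ and the elements $b\le e$ are exactly the open subsets of $G_0$, the multiplication restricting to binary meet and the involution being trivial), the base locale is precisely $\topology(G_0)$ as a local measurement space.

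The only genuinely new ingredient is that observer maps compose, generalizing the iteration of observer contexts recorded in Lemma~\ref{prop:obscontextprops}. Given observer maps $p=(p^\iota,p_\retraction)\colon M\to\opens$ and $q=(q^\iota,q_\retraction)\colon\opens\to\opens'$, I would set $q\circ p:=\bigl(p^\iota\circ q^\iota,\,q_\retraction\circ p_\retraction\bigr)$. The embedding is a composite of homomorphisms of measurement spaces, hence one; the retraction is a composite of homomorphisms of sober lattices, hence one. The symmetry condition follows by applying first the symmetry of $p$ and then that of $q$. For the preparation condition, given $m\in M$ and $\omega'\in\opens'$, I would compute
\[
q_\retraction\bigl(p_\retraction\bigl(m\,p^\iota(q^\iota(\omega'))\bigr)\bigr)
=q_\retraction\bigl(p_\retraction(m)\,q^\iota(\omega')\bigr)
=q_\retraction\bigl(p_\retraction(m)\bigr)\,\omega',
\]
using the preparation condition of $p$ (with the element $q^\iota(\omega')\in\opens$) and then that of $q$. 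This is the step carrying the most bookkeeping, but it is routine.

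It then remains to identify the two components of $p_0:=q\circ p$ with the stated maps. The embedding $p^\iota\circ q^\iota$ is $\overline{C_c(-)}$ restricted to $\topology(G_0)$: for an open $W\subseteq G_0$, regarded via $u$ as an open subset of $G_1$, it sends $W$ to $\overline{C_c(W)}$. The retraction $q_\retraction\circ p_\retraction$ sends $V\in\Max C_r^*(G)$ to $\osupp V\wedge e$; since the multiplication of $\ms(G)=\topology(G_1)$ restricts to intersection and $e=u(G_0)=G_0$, we get $\osupp V\wedge e=\osupp V\cap G_0$, as claimed. I expect no real obstacle beyond these verifications: the content of the corollary is exactly the composability of observer maps together with the elementary identity $(-)\wedge e=(-)\cap G_0$ on $\ms(G)$.
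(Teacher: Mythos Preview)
Your proof is correct and takes essentially the same approach as the paper: the paper's proof simply cites Theorem~\ref{prop:osuppprops} and Theorem~\ref{prop:localobs} and the identification of $G_0$ with $u(G_0)$, leaving the composition implicit. Your contribution is to spell out explicitly that observer maps compose (a mild generalization of the iteration of observer contexts in Lemma~\ref{prop:obscontextprops}, which as stated only covers the case of contexts), and to verify the two components; this is exactly the content the paper's one-line proof presupposes.
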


\begin{proof}
This follows from Theorem~\ref{prop:osuppprops} and Theorem~\ref{prop:localobs}, where for simplicity we are identifying $G_0$ with its image $u(G_0)$ in $G_1$.
\end{proof}

We also record the following previously not obvious fact:

\begin{corollary}
Let $A$ be a commutative C*-algebra. The lower Vietoris topology on the locale of closed ideals $I(A)$ coincides with the Scott topology.
\end{corollary}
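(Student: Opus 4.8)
The plan is to realise $A$ as the reduced C*-algebra of a unit groupoid and then read the corollary off the properties of the associated observer map of Theorem~\ref{prop:osuppprops}. By Gelfand duality write $A\cong C_0(X)$, where $X$ is the (locally compact Hausdorff) spectrum of $A$, and regard $X$ as the étale groupoid $G$ with $G_1=G_0=X$ and only identity arrows. For this groupoid the only composable pairs are the diagonal ones, so the convolution product on $C_c(G)$ reduces to pointwise multiplication; hence $C_r^*(G)\cong C_0(X)\cong A$ and $\ms(G)=\topology(X)$ carries the Scott topology. Since every open set of $X$ is a local bisection, $\overline{C_c(U)}=C_0(U)$ for all open $U$, and the standard description of the closed ideals of $C_0(X)$ shows that the embedding $\overline{C_c(-)}$ of Lemma~\ref{prop:p} is precisely the lattice isomorphism $U\mapsto C_0(U)$ from $\topology(X)$ onto the locale $I(A)$ of closed ideals.

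Next I would exploit both components of the observer map $p=\bigl(\overline{C_c(-)},\osupp\bigr)$. The embedding $\overline{C_c(-)}:\ms(G)\to\Max A$ is continuous by Lemma~\ref{prop:p}, so its corestriction to $I(A)$ equipped with the subspace (lower Vietoris) topology is continuous as well. In the other direction the retraction $\osupp:\Max A\to\ms(G)$ is continuous by Lemma~\ref{prop:osuppcont} together with Theorem~\ref{smallvsbigcont}, hence so is its restriction to $I(A)$. These two maps are mutually inverse: one has $\osupp\overline{C_c(U)}=U$ by \eqref{osuppretraction} in Theorem~\ref{prop:osuppprops}, while for a closed ideal $I=C_0(U)$ one has $\osupp C_0(U)=U$ (each point of $U$ is in the open support of some Urysohn function in $C_c(U)$, and every element of $C_0(U)$ vanishes off $U$), whence $\overline{C_c(\osupp I)}=C_0(U)=I$. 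Thus $\overline{C_c(-)}$ is a homeomorphism from $\ms(G)$, with its Scott topology, onto $I(A)$ with the lower Vietoris topology.

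It then remains to identify the transported topology. Because $\overline{C_c(-)}$ and its inverse $\osupp$ both preserve joins, $\overline{C_c(-)}$ is an order isomorphism of the complete lattices $\topology(X)$ and $I(A)$; as the Scott topology is determined by the order alone, the homeomorphism above carries the Scott topology of $\topology(X)$ onto the Scott topology of $I(A)$. Comparing this with the lower Vietoris topology yields the claimed coincidence.

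The one point that needs care — and the main obstacle — is the second-countability hypothesis built into Theorem~\ref{prop:osuppprops}, which applies verbatim only when $X$ is second-countable, i.e.\ when $A$ is separable. I would dispose of the general case by noting that second-countability enters there only to make $\ms(G)$ a \emph{classical} (countably based) measurement space, a role that is irrelevant here: the facts actually used above — that $\topology(X)$ with the Scott topology is a sober lattice (being the Scott topology of a continuous lattice), that $\overline{C_c(-)}$ and $\osupp$ are continuous, and the retraction identity $\osupp\overline{C_c(U)}=U$ — all rest on local compactness and Hausdorffness alone, and $\Max A$ is a measurement space for every C*-algebra. Hence the argument goes through for an arbitrary commutative $A$.
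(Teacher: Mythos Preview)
Your proof is correct and follows essentially the same approach as the paper's: both reduce to the observer map $p=\bigl(\overline{C_c(-)},\osupp\bigr)$ for the unit groupoid on the spectrum $X$, note that the relevant lemmas (continuity of $\overline{C_c(-)}$ and of $\osupp$, the retraction identity) do not actually use second-countability, and conclude that the order isomorphism $U\mapsto C_0(U)$ is a homeomorphism from $\topology(X)$ with the Scott topology onto $I(A)$ with the lower Vietoris topology. The paper's proof compresses all of this into one sentence; your version spells out explicitly why the two continuous inverse maps give a homeomorphism and why an order isomorphism transports the Scott topology, which is exactly the content implicit in the paper's ``so the topology on $I(A)=C_0(\topology(X))$ is the Scott topology.''
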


\begin{proof}
The same techniques used for proving Theorem~\ref{prop:osuppprops} show that for any locally compact Hausdorff space $X$ (not necessarily second-countable) the pair $p=(C_0(-),\osupp)$ is an observer map $p:\Max C_0(X)\to\topology(X)$, so the topology on $I(A)=C_0(\topology(X))$ is the Scott topology.
\end{proof}

\subsection{Principality and localizability}\label{sec:principal}

The conditions satisfied by the associated observer map of a groupoid have a close correspondence with conditions satisfied by the groupoid itself. Let us see examples of this.

\paragraph{Principal and topologically principal groupoids.}

Here we see the main example of a persistent classical observer and its close relation to groupoids that are equivalence relations, and also look at $\ipi$-stable \'etale observers.

Recall that a groupoid $G$ is \emph{principal} if for each $x\in G_0$ the \emph{isotropy group} $I_x=\hom_G(x,x)$ is trivial, and \emph{topologically principal} if the set of all $x\in G_0$ with trivial $I_x$ is dense in $G_0$.
An equivalent definition of principal groupoid is that $G$ is principal if its pairing map $\langle d,r\rangle:G\to G_0\times G_0$ is injective, so $G$ can be identified with an equivalence relation and the pairing map defines
a continuous bijection
\begin{equation}\label{pairingmapdr}
\phi:G\to G_0\times_{G_0/G}G_0.
\end{equation}
However, $\phi$ is not necessarily a homeomorphism (so the action of $G$ on $G_0$ does not necessarily define a principal $G$-bundle over $G_0/G$), and  the orbits of $G$ are not necessarily discrete subspaces of $G_0$ (\cf\ \cite{QFB}*{Example 5.6}).

\begin{theorem}\label{thm:principal}
Let $G$ be a second-countable locally compact Hausdorff \'etale groupoid, and let $\bigl(\ms(G),p\bigr)$ be the associated observer.
\begin{enumerate}
\item\label{thm:principal1} If $G$ is principal with discrete orbits then the observer is persistent.
\item\label{thm:principal2} If the observer is persistent then $G$ is principal.
\item\label{thm:Istable} If $G$ is topologically principal then the observer is $\ipi$-stable.
\end{enumerate}
\end{theorem}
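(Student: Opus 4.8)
The observer $\bigl(\ms(G),p\bigr)$ with $p=\bigl(\overline{C_c(-)},\osupp\bigr)$ is already known to be \'etale, full and multiplicative (Theorem~\ref{prop:osuppprops}). In every part I plan to work inside the convolution algebra and, exactly as in the proof of Theorem~\ref{prop:osuppprops}, reduce each statement to one-dimensional subspaces $\langle f\rangle,\langle g\rangle\in\Max C_r^*(G)$ and to local bisections $U\in\ipi(G)$, using that $\osupp$ preserves joins and that multiplication distributes over joins. The difficulty in all three parts is the same: controlling \emph{cancellation} in convolution sums indexed by isotropy, and this is precisely what principality, respectively topological principality, governs.

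\textbf{Part~\ref{thm:principal1}.} For persistency I must prove $\osupp\bigl(V\,\overline{C_c(U)}\,W\bigr)=\osupp V\,U\,\osupp W$. The inclusion $\subseteq$ is immediate from multiplicativity together with $\osupp\overline{C_c(U)}=U$. For $\supseteq$ I take $x=ywz\in\osupp f\,U\,\osupp g$ with $y\in\osupp f$, $w\in U$, $z\in\osupp g$, and it suffices to produce $s\in C_c(U)$ with $(f\conv s\conv g)(x)\neq 0$. Expanding the double convolution and using that $U$ is a bisection, the nonzero contributions are indexed by pairs $(b,c)$ with $b\in U\cap\osupp s$ and $c\in\hom_G\bigl(d(x),d(b)\bigr)$, the first factor then being forced to equal $xc^{-1}b^{-1}$. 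Here the two hypotheses enter: principality makes each $\hom_G\bigl(d(x),d(b)\bigr)$ a singleton, so there is at most one decomposition per middle arrow $b$; and since $d$ restricts to a homeomorphism of $U$ onto an open subset of $G_0$, discreteness of the orbit of $d(x)$ lets me shrink the support of $s$ to arrows $b$ with $d(b)$ in a neighbourhood of $d(w)$ meeting that orbit only at $d(w)$. Then $b=w$ is the only surviving index, so $(f\conv s\conv g)(x)=f(y)\,s(w)\,g(z)$, which is nonzero once $s(w)\neq 0$. This no-cancellation bookkeeping is the main obstacle, resolved by the interplay of the two hypotheses (principality isolates $c$, discrete orbits isolate $b$).

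\textbf{Part~\ref{thm:principal2}.} I argue by contraposition: assuming $G$ not principal, I exhibit a failure of persistency. I pick $x\in G_0$ and a nontrivial isotropy arrow $h\in\hom_G(x,x)$ with $h\neq u(x)$; since $G$ is Hausdorff \'etale, $G_0=u(G_0)$ is clopen in $G_1$ and $h\notin G_0$, so I may choose disjoint open bisections $U_0\subseteq G_0$ with $u(x)\in U_0$ and $U_1\ni h$ with $U_1\cap G_0=\emptyset$, together with bumps $\phi\in C_c(U_0)$, $\psi\in C_c(U_1)$ satisfying $\phi(x)=\psi(h)=1$. Put $f=\phi+\psi$, $g=\phi-\psi$ and $U=U_0$. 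On one hand $h\in\osupp f\,U_0\,\osupp g$, via the factorisation $h=h\,u(x)\,u(x)$. On the other hand, for every $\eta\in C_c(U_0)$ the factor $\eta$ acts as a diagonal multiplier, so the decompositions $h=a\,b\,c$ contributing to $(f\conv\eta\conv g)(h)$ have $b\in G_0$ and $r(a)=x$, forcing $a\in\{u(x),h\}$; the two resulting terms are $-\eta(x)$ and $+\eta(x)$ and cancel, so $(f\conv\eta\conv g)(h)=0$, and this persists for $\eta\in\overline{C_c(U_0)}$ by $\norm\cdot_\infty$-continuity of evaluation. Since the open support of a closed subspace is the union of the open supports of its elements (see~\eqref{def:oSupp}), this yields $h\notin\osupp\bigl(\langle f\rangle\,\overline{C_c(U_0)}\,\langle g\rangle\bigr)$, contradicting persistency. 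The delicate point is the clean cancellation $-\eta(x)+\eta(x)=0$, which hinges on $h$ being genuine nontrivial isotropy (both $u(x)$ and $h$ lie over the same unit $x$).

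\textbf{Part~\ref{thm:Istable}.} Using condition~\eqref{qb1} of the $\ipi$-stability lemma, it suffices to show $\osupp s\in\ipi(G)$ for every $s\in\ipi_b(\Max C_r^*(G))$, where $b=p^\iota(e)=C_0(G_0)$. First, from $s\,b\le s$, $b\,s\le s$, $ss^*\le b$ and $s^*s\le b$ I deduce that every $\sigma\in s$ is a \emph{normalizer} of the diagonal $D=C_0(G_0)$, that is $\sigma D\sigma^*\subseteq D$ and $\sigma^* D\sigma\subseteq D$. The substantive step is then a lemma of Renault~\cite{Renault}: when $G$ is topologically principal, the open support $\osupp\sigma$ of any normalizer is an open bisection; I plan to reproduce its proof, namely that if $\osupp\sigma$ contained two distinct arrows with a common range (or domain) then, near a point of trivial isotropy --- which exist densely by hypothesis --- the relevant convolution fibre would be a singleton, so $\sigma\conv\sigma^*$ would be nonzero off $G_0$, contradicting $\sigma\sigma^*\in D$. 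Granting that each $\osupp\sigma$ is a bisection, I finish with the cross-conditions $\sigma\conv\tau^*,\ \sigma^*\conv\tau\in D$: if $d$ failed to be injective on $\osupp s$ there would be $g_1\neq g_2$ in $\osupp s$ with $d(g_1)=d(g_2)$, whence $g_1 g_2^{-1}\notin G_0$; choosing $\sigma,\tau\in s$ with $\sigma(g_1),\tau(g_2)\neq0$ and using that $\osupp\tau$ is a bisection, the sum defining $(\sigma\conv\tau^*)(g_1 g_2^{-1})$ reduces to the single term $\sigma(g_1)\overline{\tau(g_2)}\neq0$, contradicting $\sigma\tau^*\in D$; the case of $r$ is symmetric via the involution. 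Hence $\osupp s$ is an open bisection. I expect Renault's single-normalizer lemma to be the main obstacle, since it is the only place where topological principality is genuinely used and where cancellation must be excluded on a merely dense set rather than controlled exactly; the cross-term step above is then cancellation-free because each support is already a bisection.
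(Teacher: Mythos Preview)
Your proposal is correct and follows essentially the same route as the paper, which simply cites \cite{QFB}*{Theorems 5.8, 5.10, 6.10} and notes that those proofs carry over verbatim once $p^*$ and $p_!$ are replaced by $p^\iota=\overline{C_c(-)}$ and $p_\retraction=\osupp$, since the adjunction is never used. Your three arguments---isolating the unique convolution term via principality and discrete orbits, the $\pm\eta(x)$ cancellation from a nontrivial isotropy arrow, and Renault's normalizer lemma combined with the cross-term $\sigma\tau^*\in C_0(G_0)$---are precisely the content of those cited theorems, spelled out in full.
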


\begin{proof}
Conditions \eqref{thm:principal1}, \eqref{thm:principal2} and \eqref{thm:Istable} correspond respectively to theorems 5.8, 5.10 and 6.10 of~\cite{QFB}, but with  stable quantic bundles generalized to persistent observer maps, and $\ipi$-stable quantic bundles generalized to $\ipi$-stable observer maps. The proofs are essentially the same once we replace the inverse image homomorphism $p^*$ and the direct image homomorphism $p_!$ of a quantic bundle $p$ by their analogues $p^\iota=\overline{C_c(-)}$ and $p_\retraction=\osupp$ for an observer map $p$, since the proofs do not depend on the adjunction between $p^*$ and $p_!$.
\end{proof}

\begin{corollary}\label{cor:principal}
Let $G$ be a topologically principal second-countable locally compact Hausdorff \'etale groupoid with reduced C*-algebra $A$. Writing $B=\overline{C_c(G_0)}$, we obtain an isomorphism of pseudogroups
\[
\overline{C_c(-)}:\ipi(G)\stackrel\cong\to\ipi_B(\Max A)
\]
whose inverse is the restriction of $\osupp$ to $\ipi_B(\Max A)$.
\end{corollary}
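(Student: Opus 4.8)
The plan is to recognize this corollary as a direct instance of the second part of Corollary~\ref{cor:pgiso}, once the associated observer of $G$ has been shown to satisfy all of that corollary's hypotheses. First I would recall that, by Theorem~\ref{prop:osuppprops}, the associated observer $\bigl(\ms(G),p\bigr)$ with $p=(\overline{C_c(-)},\osupp)$ is \'etale, full, and multiplicative. Since $G$ is assumed topologically principal, Theorem~\ref{thm:principal}\eqref{thm:Istable} then tells us that this observer is moreover $\ipi$-stable. Thus we are in exactly the setting covered by the final assertion of Corollary~\ref{cor:pgiso}: a multiplicative $\ipi$-stable \'etale observer. (Only the properties \emph{multiplicative}, \emph{\'etale}, and \emph{$\ipi$-stable} are actually needed; fullness comes for free.)

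Next I would instantiate Corollary~\ref{cor:pgiso} with $\opens=\ms(G)$, $M=\Max A$, $p^\iota=\overline{C_c(-)}$, and $p_\retraction=\osupp$. That corollary asserts that $p^\iota\colon\ipi(\opens)\to\ipi_{p^\iota(e)}(M)$ is an isomorphism of pseudogroups whose inverse is the restriction of $p_\retraction$ to $\ipi_{p^\iota(e)}(M)$. It then remains only to translate the decorations back into groupoid notation: by definition $\ipi(\ms(G))=\ipi(G)$, and $p^\iota(e)=\overline{C_c(e)}=\overline{C_c(G_0)}=B$, so that $\ipi_{p^\iota(e)}(M)=\ipi_B(\Max A)$. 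Substituting these identifications yields precisely the claimed isomorphism $\overline{C_c(-)}\colon\ipi(G)\stackrel\cong\to\ipi_B(\Max A)$ with inverse the restriction of $\osupp$.

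The only point requiring any care beyond citing the earlier results is the bookkeeping identification $p^\iota(e)=B$, that is, that the projection relative to which $\ipi_B(\Max A)$ is formed is indeed the image under the embedding of the unit of $\ms(G)$. This is immediate from the \'etale hypothesis: by Theorem~\ref{thm:etale} the unit of the quantale $\ms(G)=\topology(G_1)$ is $e=u(G_0)$, which we have agreed to identify with $G_0$, and $\overline{C_c(-)}$ is the embedding $p^\iota$, so $p^\iota(e)=\overline{C_c(G_0)}=B$ by definition. No genuine obstacle remains; the substance of the result lives entirely in Theorem~\ref{prop:osuppprops} and Theorem~\ref{thm:principal}\eqref{thm:Istable}, combined with the abstract pseudogroup isomorphism supplied by Corollary~\ref{cor:pgiso}.
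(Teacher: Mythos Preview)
Your proposal is correct and follows essentially the same approach as the paper, which simply cites Corollary~\ref{cor:pgiso} and Theorem~\ref{prop:osuppprops}. Your version is in fact more explicit: you correctly invoke Theorem~\ref{thm:principal}\eqref{thm:Istable} to obtain $\ipi$-stability from the topologically principal hypothesis, a step the paper's terse proof leaves implicit (presumably because the corollary immediately follows that theorem), and you spell out the identification $p^\iota(e)=\overline{C_c(G_0)}=B$.
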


\begin{proof}
This is a consequence of Corollary~\ref{cor:pgiso} and Theorem~\ref{prop:osuppprops}.
\end{proof}

\paragraph{Localizable groupoids.}

Let $G$ be a second-countable locally compact Hausdorff \'etale group\-oid. By definition, every $f\in C_r^*(G)$ is a limit $\lim_i f_i$ of a sequence $(f_i)$ in $C_c(G)$, but it is not clear whether it is always possible to find such a sequence if we impose \[\supp f_i\subset\osupp f\] for all $i$. If this is possible we say that $f$ is a \emph{local limit} (because it is approached by continuous compactly supported functions \emph{locally} in $\osupp f$). This motivates the following definition:

\begin{definition}
By a \emph{localizable group\-oid} will be meant a second-countable locally compact Hausdorff \'etale group\-oid $G$ such that for all $f\in C_r^*(G)$ we have
\[
f\in \overline{C_c(\osupp f)}.
\]
\end{definition}

\begin{remark}
The terminology ``localizable'' is taken from~\cite{QFB}, where it was applied to C*-completions of convolution algebras of Fell bundles. From the results there it  follows that any (not a
priori Hausdorff) localizable group\-oid $G$ such that $G_0$ is Hausdorff must have $G_1$ Hausdorff.
\end{remark}

The characterization of the class of localizable group\-oids is an open problem. The following are sufficient but not necessary conditions for localizability:

\begin{theorem}[\cite{QFB}]
Let $G$ be a second-countable locally compact Hausdorff \'etale group\-oid.
\begin{enumerate}
\item If $G_0=G_1$ (\ie, $G$ is just a space) then $G$ is localizable.
\item If $G$ is compact then $G$ is localizable.
\end{enumerate}
\end{theorem}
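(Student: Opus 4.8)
The plan is to treat both parts by the same mechanism: localizability asks that every $f\in C_r^*(G)$ lie in $\overline{C_c(\osupp f)}$, and I would obtain the approximating functions as pointwise products $\phi f$, where $\phi$ is an Urysohn bump concentrated on the compact set $\{\,|f|\ge\varepsilon\,\}\subset\osupp f$. The only genuine work is to identify $C_r^*(G)$ concretely and to control its norm; once the reduced norm is seen to be (equivalent to) the sup norm, the approximation is the classical fact that a $C_0$ function is uniformly approximable by continuous functions supported inside its open support.

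For part (1), when $G_0=G_1$ the groupoid is just a space: convolution reduces to pointwise multiplication and the involution to complex conjugation, so the convolution algebra is $C_c(G_0)$ with the pointwise operations. Here $e=1$ and the whole space $G_1=G_0$ is a single partial unit, $G_1\in\ipi(G)$, so by condition~2 defining the reduced norm we have $\norm f=\norm f_\infty$ for every $f\in C_c(G_0)$; hence $C_r^*(G)=C_0(G_0)$ with the sup norm. Given $f\in C_0(G_0)$ and $\varepsilon>0$, the set $K_\varepsilon=\{x\st|f(x)|\ge\varepsilon\}$ is compact and contained in the open set $\osupp f$, so Urysohn's lemma furnishes $\phi\in C_c(G_0)$ with $0\le\phi\le1$, $\phi\equiv1$ on $K_\varepsilon$, and $\supp\phi\subset\osupp f$. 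Then $g=\phi f\in C_c(\osupp f)$ and $\norm{f-g}_\infty\le\varepsilon$, since $f-g$ vanishes on $K_\varepsilon$ and $|f|<\varepsilon$ off it. Letting $\varepsilon\to0$ proves $f\in\overline{C_c(\osupp f)}$.

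For part (2), when $G_1$ is compact the inclusion chain $C_c(G)\subset C_r^*(G)\subset C_0(G)\subset C(G)=C_c(G)$ recalled above shows that $C_r^*(G)=C(G_1)$ as a set, so every element of $C_r^*(G)$ is already a continuous function. The key additional step is to show that the reduced norm is equivalent to the sup norm. Because $G$ is \'etale, each point of $G_1$ has an open bisection neighbourhood, and compactness yields a finite cover $G_1=U_1\cup\cdots\cup U_N$ by open bisections together with a subordinate partition of unity $\psi_1,\ldots,\psi_N$ (so $\supp\psi_i\subset U_i$, $0\le\psi_i\le1$, $\sum_i\psi_i=1$). For any $h\in C(G_1)$ write $h=\sum_i\psi_i h$ with $\psi_i h\in C_c(U_i)$; condition~2 gives $\norm{\psi_i h}=\norm{\psi_i h}_\infty\le\norm h_\infty$, so by the triangle inequality $\norm h\le N\norm h_\infty$, while $\norm h_\infty\le\norm h$ by condition~1. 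With the two norms equivalent, the Urysohn construction of part~(1) (now with the bump $\phi$ taken on $G_1$) produces $g=\phi f\in C_c(\osupp f)$ satisfying $\norm{f-g}\le N\norm{f-g}_\infty\le N\varepsilon$, and again letting $\varepsilon\to0$ gives $f\in\overline{C_c(\osupp f)}$.

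The hard part is exactly the norm equivalence in part~(2): in general the reduced norm strictly dominates the sup norm, so uniform approximation need not give approximation in $C_r^*(G)$, and it is the interplay of compactness (a finite bisection cover), the \'etale property, and condition~2 of the reduced norm that rescues the argument. I would also stress the one conceptual point that is easy to misread: the bump $\phi$ multiplies $f$ \emph{pointwise}, not via the convolution product, which is legitimate because localizability concerns only supports and metric approximation, and $g=\phi f$ is manifestly a continuous function with $\supp g\subset\osupp f$.
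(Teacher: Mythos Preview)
The paper does not supply its own proof of this theorem: it is stated with a citation to~\cite{QFB} and no argument is given, since the result is quoted only as background for the discussion of localizability. So there is nothing in the present paper to compare your argument against directly.

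That said, your proof is correct and self-contained. Part~(1) is the standard identification $C_r^*(G)=C_0(G_0)$ followed by the textbook Urysohn approximation. In part~(2) the essential point, which you isolate correctly, is the equivalence of the reduced norm and the sup norm: your partition-of-unity argument over a finite bisection cover, using condition~2 of the reduced norm on each piece, gives $\norm h\le N\norm h_\infty$, and this is exactly what is needed to upgrade sup-norm approximation to reduced-norm approximation. Your closing remark that $\phi f$ is a \emph{pointwise} product is well placed: this is legitimate precisely because in the compact case $C_r^*(G)=C(G_1)$ as vector spaces of functions, so $\phi f$ is automatically an element of the algebra and its support is controlled by $\supp\phi\subset\osupp f$ regardless of the convolution structure.
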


The significance of localizability comes from its relation to increasing observer contexts, as implied by the following simple theorem:

\begin{theorem}
Let $G$ be a second-countable locally compact Hausdorff \'etale group\-oid with associated observer $\bigl(\ms(G),p\bigr)$. The following conditions are equivalent:
\begin{enumerate}
\item $G$ is localizable.\label{localizable1}
\item For all $V\in\Max C_r^*(G)$ we have $V\subset p^\iota(p_\retraction( V))$.\label{localizable2}
\item $p_\retraction:\Max C_r^*(G)\to\ms(G)$ is left adjoint to $p^\iota$.\label{localizable3}
\end{enumerate}
\end{theorem}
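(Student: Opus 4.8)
The plan is to prove the two equivalences \eqref{localizable1}$\Leftrightarrow$\eqref{localizable2} and \eqref{localizable2}$\Leftrightarrow$\eqref{localizable3} after unwinding the abbreviations $p^\iota=\overline{C_c(-)}$ and $p_\retraction=\osupp$. With these, condition \eqref{localizable2} reads $V\subset\overline{C_c(\osupp V)}$ for every $V\in\Max C_r^*(G)$, while the adjunction in \eqref{localizable3} is the biconditional $\osupp V\subset U\iff V\subset\overline{C_c(U)}$, for all $V\in\Max C_r^*(G)$ and $U\in\ms(G)$.

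First I would establish \eqref{localizable1}$\Leftrightarrow$\eqref{localizable2} by reducing to one-dimensional subspaces. For \eqref{localizable2}$\Rightarrow$\eqref{localizable1}, given $f\in C_r^*(G)$ I would apply \eqref{localizable2} to $V=\langle f\rangle$; since $\osupp\langle f\rangle=\osupp f$ (because $\osupp\lambda f=\osupp f$ for $\lambda\neq 0$ and $\osupp 0=\emptyset$), this yields $f\in\overline{C_c(\osupp f)}$, which is localizability. For the converse \eqref{localizable1}$\Rightarrow$\eqref{localizable2}, given $V$ and $f\in V$, the inclusion $\osupp f\subset\osupp V$ gives $C_c(\osupp f)\subset C_c(\osupp V)$ and hence $\overline{C_c(\osupp f)}\subset\overline{C_c(\osupp V)}$ by monotonicity of $\overline{C_c(-)}$; localizability then places $f$ in $\overline{C_c(\osupp f)}\subset\overline{C_c(\osupp V)}$, and since this holds for every $f\in V$ we obtain $V\subset\overline{C_c(\osupp V)}$.

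For \eqref{localizable2}$\Leftrightarrow$\eqref{localizable3} I would invoke the elementary characterization of a poset adjunction: for monotone maps $L=p_\retraction$ and $R=p^\iota$ between the specialization orders, $L\dashv R$ holds if and only if the unit inequality $\ident\le R\circ L$ and the counit inequality $L\circ R\le\ident$ both hold. Both $p_\retraction$ and $p^\iota$ are order-preserving since they are homomorphisms of sober lattices (Theorem~\ref{prop:osuppprops} and Lemma~\ref{prop:p}). The unit inequality $V\subset\overline{C_c(\osupp V)}$ is exactly \eqref{localizable2}. The point I want to emphasize is that the counit inequality is automatic: Theorem~\ref{prop:osuppprops} supplies the retraction identity $\osupp\overline{C_c(U)}=U$, so in fact $p_\retraction\circ p^\iota=\ident_{\ms(G)}$. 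Hence the adjunction collapses to the single requirement \eqref{localizable2}, giving both directions at once.

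The only subtlety — and the step I would be most careful about — is precisely this collapse: one must confirm that Theorem~\ref{prop:osuppprops} yields $\osupp\overline{C_c(U)}=U$ with genuine equality rather than a one-sided inclusion, since it is this equality that makes the counit free and isolates localizability as the entire content of the adjunction. Beyond that there is no analytic difficulty; the remaining care is merely bookkeeping of the order directions (inclusion in both $\Max C_r^*(G)$ and $\ms(G)$) and the reduction to spans $\langle f\rangle$, where the identity $\osupp\langle f\rangle=\osupp f$ does the essential work.
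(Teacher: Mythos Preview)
Your proof is correct and follows the paper's own argument essentially step for step. The paper proves \eqref{localizable1}$\Leftrightarrow$\eqref{localizable2} by the same reduction to $V=\langle f\rangle$ and the inclusion $\osupp f\subset\osupp V$, and for \eqref{localizable2}$\Leftrightarrow$\eqref{localizable3} it likewise notes that $\osupp$ and $\overline{C_c(-)}$ are monotone and that the retraction identity $\osupp\,\overline{C_c(U)}=U$ (established within the proof of Theorem~\ref{prop:osuppprops}) supplies the counit, so the adjunction collapses to the unit inequality~\eqref{localizable2}.
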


\begin{proof}
\eqref{localizable1} $\Longrightarrow$ \eqref{localizable2}: If $G$ is localizable then \eqref{localizable2} holds because for all $f\in V$ 
\[
f\in \overline{C_c(\osupp f)}\subset \overline{C_c(\osupp V)}.
\]

\eqref{localizable2} $\Longrightarrow$ \eqref{localizable1}: The localizability condition $f\in \overline{C_c(\osupp f)}$ results from taking $V=\langle f\rangle$ in \eqref{localizable2}.

\eqref{localizable2} $\iff$ \eqref{localizable3}: $\osupp$ and $\overline{C_c(-)}$ are monotone and the retraction condition provides the counit of the adjunction, so we have an adjunction if and only if \eqref{localizable2} holds, for this provides the unit.
\end{proof}

\begin{remark}
Localizability means that $p^\iota$ is right adjoint to $p_\retraction$, so it implies that $p^\iota$ preserves arbitrary meets:
\begin{equation}
p^\iota(\bigwedge_i U_i)=\bigcap_i p^\iota(U_i)\quad\textrm{for all families }(U_i)\textrm{ in }\ms(G).\label{qb0}
\end{equation}
It is not known whether the mere existence of a left adjoint of $p^\iota$ implies localizability, but it is easy to conclude that if $p^\iota$ has a left adjoint $p_!$ then $p_\retraction\le p_!$.
\end{remark}

\section{Discussion and further work}

A philosophical consequence of the work presented in this paper and~\cite{fop}, if measurement spaces are taken to be foundational building blocks of physical theories, is that it is possible to regard measurements themselves as being fundamental: they are the events from which reality is weaved. Such a viewpoint is consistent with the relational interpretation of quantum mechanics~\cite{Rovelli96} (even though this paper is not presented in a way that highlights this): if we adopt the stance that a physical object ``is'' a measurement space, the structure of observers in that space encodes all the conceivable ways in which the object can relate to other objects. A mathematical depiction of this can be obtained by looking further into appropriate categories of measurement spaces --- not only $\MS$ but also related categories, for instance whose morphisms are (possibly generalized) observer maps.

Although the aim here was mainly to provide precise mathematical translations of the concepts surrounding measurements and their relation to classical information, further validation of its ideas may be sought by using them as a stepping stone for a reconstruction of quantum mechanics that relates to actual quantum theory in a way which is similar to how point-set topology relates to measure theory: measurement spaces are the topological (and logical) layer on which additional measure-theoretic structure needs to be placed in order to account for statistical aspects. For instance, observer maps can be loosely regarded as ``numbers free'' versions of quantum information channels~\cite{Werner} and, hence, the embedding $p_\iota:\opens\to M$ of a classical observer map $p:M\to\opens$ is the corresponding analogue of a POVM.

An investigation of measurement spaces in relation to full fledged quantum mechanics ought to be addressed in future work, and a related direction should examine the appearance of classical geometric structure and Hamiltonian dynamics in the state spaces of classical observers.
In addition to this, immediate pending mathematical problems include: (i)~to improve the understanding of measurement spaces of C*-algebras, in particular with regard to their automorphism groups, as hinted at in section~\ref{sec:Cstar}; (ii)~the characterization of the class of localizable groupoids, including their relation to amenability; (iii)~to test the results of section~\ref{sec:observers} against generalizations that involve locally trivial Fell line bundles as in~\cite{Renault} or non-Hausdorff groupoids as in~\cite{ExePit19v3}.

\begin{bibdiv}

\begin{biblist}

\bib{Ab91}{article}{
  author={Abramsky, Samson},
  title={Domain theory in logical form},
  note={Second Annual IEEE Symposium on Logic in Computer Science (Ithaca, NY, 1987)},
  journal={Ann. Pure Appl. Logic},
  volume={51},
  date={1991},
  number={1-2},
  pages={1--77},
  issn={0168-0072},
  review={\MR {1100496}},
}

\bib{AV93}{article}{
  author={Abramsky, Samson},
  author={Vickers, Steven},
  title={Quantales, observational logic and process semantics},
  journal={Math. Structures Comput. Sci.},
  volume={3},
  date={1993},
  number={2},
  pages={161--227},
  issn={0960-1295},
  review={\MR {1224222}},
}

\bib{adler}{article}{
  author={Adler, Stephen L.},
  title={Why decoherence has not solved the measurement problem: a response to P.W.\ Anderson},
  journal={Studies in History and Philosophy of Modern Physics},
  volume={34},
  date={2003},
  pages={135--142},
}

\bib{Akem70}{article}{
  author={Akemann, Charles A.},
  title={Left ideal structure of $C\sp *$-algebras},
  journal={J. Functional Analysis},
  volume={6},
  date={1970},
  pages={305--317},
  review={\MR {0275177 (43 \#934)}},
}

\bib{Araki93}{book}{
  author={Araki, Huzihiro},
  title={Mathematical theory of quantum fields},
  series={International Series of Monographs on Physics},
  volume={101},
  note={Translated from the 1993 Japanese original by Ursula Carow-Watamura; Reprint of the 1999 edition [MR1799198]},
  publisher={Oxford University Press, Oxford},
  date={2009},
  pages={xii+236},
  isbn={978-0-19-956640-2},
  review={\MR {2542202}},
}

\bib{PhillipBall}{book}{
  author={Ball, Phillip},
  title={Beyond weird},
  subtitle={Why everything you thought you knew about quantum physics is... different},
  publisher={Vintage},
  date={2018},
}

\bib{Bell90}{article}{
  author={Bell, John},
  title={Against `measurement'},
  journal={Phys. World},
  volume={3},
  date={1990},
  number={8},
  pages={33--40},
  issn={0953-8585},
}

\bib{BRB}{article}{
  author={Borceux, Francis},
  author={Rosick{\'y}, Ji{\v {r}}{\'{\i }}},
  author={Van den Bossche, Gilberte},
  title={Quantales and $C\sp *$-algebras},
  journal={J. London Math. Soc. (2)},
  volume={40},
  date={1989},
  number={3},
  pages={398--404},
  issn={0024-6107},
  review={\MR {1053610 (91d:46075)}},
}

\bib{BB86}{article}{
  author={Borceux, Francis},
  author={Van den Bossche, Gilberte},
  title={Quantales and their sheaves},
  journal={Order},
  volume={3},
  date={1986},
  number={1},
  pages={61--87},
  issn={0167-8094},
  review={\MR {850399 (87k:18012)}},
}

\bib{CIM19-I}{article}{
  author={Ciaglia, F. M.},
  author={Ibort, A.},
  author={Marmo, G.},
  title={Schwinger's picture of quantum mechanics I: Groupoids},
  journal={Int. J. Geom. Methods Mod. Phys.},
  volume={16},
  date={2019},
  number={8},
  pages={1950119, 31},
  issn={0219-8878},
  review={\MR {3995439}},
  doi={10.1142/S0219887819501196},
}

\bib{Connes}{book}{
  author={Connes, Alain},
  title={Noncommutative Geometry},
  publisher={Academic Press Inc.},
  place={San Diego, CA},
  date={1994},
  pages={xiv+661},
  isbn={0-12-185860-X},
  review={\MR {1303779 (95j:46063)}},
}

\bib{ExePit19}{report}{
  author={Exel, Ruy},
  author={Pitts, David},
  title={Characterizing groupoid C*-algebras of non-Hausdorff \'etale groupoids},
  eprint={arXiv:1901.09683},
  date={2019},
}

\bib{ExePit19v3}{report}{
  author={Exel, Ruy},
  author={Pitts, David},
  title={Characterizing groupoid C*-algebras of non-Hausdorff \'etale groupoids},
  eprint={arXiv:1901.09683v3},
  date={2021},
}

\bib{bigdomainbook}{book}{
  author={Gierz, G.},
  author={Hofmann, K. H.},
  author={Keimel, K.},
  author={Lawson, J. D.},
  author={Mislove, M.},
  author={Scott, D. S.},
  title={Continuous lattices and domains},
  series={Encyclopedia of Mathematics and its Applications},
  volume={93},
  publisher={Cambridge University Press, Cambridge},
  date={2003},
  pages={xxxvi+591},
  isbn={0-521-80338-1},
  review={\MR {1975381}},
}

\bib{Giles}{article}{
  author={Giles, Robin},
  title={Foundations for quantum mechanics},
  journal={J. Mathematical Phys.},
  volume={11},
  date={1970},
  pages={2139--2160},
  issn={0022-2488},
  review={\MR {0272275 (42 \#7156)}},
}

\bib{GK}{article}{
  author={Giles, Robin},
  author={Kummer, Hans},
  title={A non-commutative generalization of topology},
  journal={Indiana Univ. Math. J.},
  volume={21},
  date={1971/72},
  pages={91--102},
  issn={0022-2518},
  review={\MR {0293408 (45 \#2485)}},
}

\bib{stonespaces}{book}{
  author={Johnstone, Peter T.},
  title={Stone Spaces},
  series={Cambridge Studies in Advanced Mathematics},
  volume={3},
  note={Reprint of the 1982 edition},
  publisher={Cambridge University Press},
  place={Cambridge},
  date={1986},
  pages={xxii+370},
  isbn={0-521-33779-8},
  review={\MR {861951 (87m:54001)}},
}

\bib{pointless}{article}{
  author={Johnstone, Peter T.},
  title={The point of pointless topology},
  journal={Bull. Amer. Math. Soc. (N.S.)},
  volume={8},
  date={1983},
  number={1},
  pages={41--53},
  issn={0273-0979},
  review={\MR {682820 (84f:01043)}},
}

\bib{elephant}{book}{
  author={Johnstone, Peter T.},
  title={Sketches of an Elephant: A Topos Theory Compendium. Vol. 2},
  series={Oxford Logic Guides},
  volume={44},
  publisher={The Clarendon Press Oxford University Press},
  place={Oxford},
  date={2002},
  pages={i--xxii, 469--1089 and I1--I71},
  isbn={0-19-851598-7},
  review={\MR {2063092 (2005g:18007)}},
}

\bib{JT}{article}{
  author={Joyal, Andr{\'e}},
  author={Tierney, Myles},
  title={An extension of the Galois theory of Grothendieck},
  journal={Mem. Amer. Math. Soc.},
  volume={51},
  date={1984},
  number={309},
  pages={vii+71},
  issn={0065-9266},
  review={\MR {756176 (86d:18002)}},
}

\bib{KPRR}{article}{
  author={Kruml, David},
  author={Pelletier, Joan Wick},
  author={Resende, Pedro},
  author={Rosick{\'y}, Ji{\v {r}}{\'{\i }}},
  title={On quantales and spectra of $C\sp *$-algebras},
  journal={Appl. Categ. Structures},
  volume={11},
  date={2003},
  number={6},
  pages={543--560},
  issn={0927-2852},
  review={\MR {2017650 (2004i:46107)}},
}

\bib{KR}{article}{
  author={Kruml, David},
  author={Resende, Pedro},
  title={On quantales that classify $C\sp \ast $-algebras},
  language={English, with French summary},
  journal={Cah. Topol. G\'eom. Diff\'er. Cat\'eg.},
  volume={45},
  date={2004},
  number={4},
  pages={287--296},
  issn={1245-530X},
  review={\MR {2108195 (2006b:46096)}},
}

\bib{Kumjian}{article}{
  author={Kumjian, Alexander},
  title={On $C^\ast $-diagonals},
  journal={Canad. J. Math.},
  volume={38},
  date={1986},
  number={4},
  pages={969--1008},
  issn={0008-414X},
  review={\MR {854149 (88a:46060)}},
}

\bib{Kumjian98}{article}{
  author={Kumjian, Alex},
  title={Fell bundles over groupoids},
  journal={Proc. Amer. Math. Soc.},
  volume={126},
  date={1998},
  number={4},
  pages={1115--1125},
  issn={0002-9939},
  review={\MR {1443836 (98i:46055)}},
  doi={10.1090/S0002-9939-98-04240-3},
}

\bib{MaRe10}{article}{
  author={Matsnev, Dmitry},
  author={Resende, Pedro},
  title={\'Etale groupoids as germ groupoids and their base extensions},
  journal={Proc. Edinb. Math. Soc. (2)},
  volume={53},
  date={2010},
  number={3},
  pages={765--785},
  issn={0013-0915},
  review={\MR {2720249}},
  doi={10.1017/S001309150800076X},
}

\bib{M86}{article}{
  author={Mulvey, Christopher J.},
  title={\&},
  note={Second topology conference (Taormina, 1984)},
  journal={Rend. Circ. Mat. Palermo (2) Suppl.},
  number={12},
  date={1986},
  pages={99--104},
  review={\MR {853151 (87j:81017)}},
}

\bib{Curacao}{misc}{
  author={Mulvey, Christopher J.},
  title={Quantales},
  note={Invited talk at the Summer Conference on Locales and Topological Groups (Cura\c {c}ao, 1989)},
}

\bib{MP1}{article}{
  author={Mulvey, Christopher J.},
  author={Pelletier, Joan Wick},
  title={On the quantisation of points},
  journal={J. Pure Appl. Algebra},
  volume={159},
  date={2001},
  number={2-3},
  pages={231--295},
  issn={0022-4049},
  review={\MR {1828940 (2002g:46126)}},
}

\bib{MR}{article}{
  author={Mulvey, Christopher J.},
  author={Resende, Pedro},
  title={A noncommutative theory of Penrose tilings},
  journal={Internat. J. Theoret. Phys.},
  volume={44},
  date={2005},
  number={6},
  pages={655--689},
  issn={0020-7748},
  review={\MR {2150184 (2006a:58011)}},
}

\bib{Omnes1994}{book}{
  author={Omn\`es, Roland},
  title={The interpretation of quantum mechanics},
  series={Princeton Series in Physics},
  note={Princeton Paperbacks},
  publisher={Princeton University Press, Princeton, NJ},
  date={1994},
  pages={xvi+550},
  isbn={0-691-03336-6},
  isbn={0-691-03669-1},
  review={\MR {1291602}},
}

\bib{Paterson}{book}{
  author={Paterson, Alan L. T.},
  title={Groupoids, inverse semigroups, and their operator algebras},
  series={Progress in Mathematics},
  volume={170},
  publisher={Birkh\"auser Boston Inc.},
  place={Boston, MA},
  date={1999},
  pages={xvi+274},
  isbn={0-8176-4051-7},
  review={\MR {1724106 (2001a:22003)}},
}

\bib{picadopultr}{book}{
  author={Picado, Jorge},
  author={Pultr, Ale{\v {s}}},
  title={Frames and locales --- topology without points},
  series={Frontiers in Mathematics},
  publisher={Birkh\"auser/Springer Basel AG, Basel},
  date={2012},
  pages={xx+398},
  isbn={978-3-0348-0153-9},
  review={\MR {2868166}},
  doi={10.1007/978-3-0348-0154-6},
}

\bib{PBRtheorem}{article}{
  author={Pusey, M.F.},
  author={Barrett, J.},
  author={Rudolph, T.},
  title={On the reality of the quantum state},
  journal={Nature Physics},
  volume={8},
  pages={475--478},
  year={2012},
}

\bib{Funct2}{report}{
  author={Quijano, Juan Pablo},
  author={Resende, Pedro},
  title={Functoriality of groupoid quantales. II},
  date={2018},
  eprint={https://arxiv.org/abs/1803.01075},
}

\bib{RenaultLNMath}{book}{
  author={Renault, Jean},
  title={A groupoid approach to $C^{\ast } $-algebras},
  series={Lecture Notes in Mathematics},
  volume={793},
  publisher={Springer},
  place={Berlin},
  date={1980},
  pages={ii+160},
  isbn={3-540-09977-8},
  review={\MR {584266 (82h:46075)}},
}

\bib{Renault}{article}{
  author={Renault, Jean},
  title={Cartan subalgebras in $C^*$-algebras},
  journal={Irish Math. Soc. Bull.},
  number={61},
  date={2008},
  pages={29--63},
  issn={0791-5578},
  review={\MR {2460017 (2009k:46135)}},
}

\bib{Re01}{article}{
  author={Resende, Pedro},
  title={Quantales, finite observations and strong bisimulation},
  journal={Theoret. Comput. Sci.},
  volume={254},
  date={2001},
  number={1-2},
  pages={95--149},
  issn={0304-3975},
  review={\MR {1816827}},
}

\bib{Re2forms}{article}{
  author={Resende, Pedro},
  title={Sup-lattice 2-forms and quantales},
  journal={J. Algebra},
  volume={276},
  date={2004},
  number={1},
  pages={143--167},
  issn={0021-8693},
  review={\MR {2054391}},
}

\bib{Re07}{article}{
  author={Resende, Pedro},
  title={\'Etale groupoids and their quantales},
  journal={Adv. Math.},
  volume={208},
  date={2007},
  number={1},
  pages={147--209},
  issn={0001-8708},
  review={\MR {2304314 (2008c:22002)}},
}

\bib{GSQS}{article}{
  author={Resende, Pedro},
  title={Groupoid sheaves as quantale sheaves},
  journal={J. Pure Appl. Algebra},
  volume={216},
  date={2012},
  number={1},
  pages={41--70},
  issn={0022-4049},
  review={\MR {2826418}},
  doi={10.1016/j.jpaa.2011.05.002},
}

\bib{Re15}{article}{
  author={Resende, Pedro},
  title={Functoriality of groupoid quantales. I},
  journal={J. Pure Appl. Algebra},
  volume={219},
  date={2015},
  number={8},
  pages={3089--3109},
  issn={0022-4049},
  review={\MR {3320209}},
  doi={10.1016/j.jpaa.2014.10.004},
}

\bib{OMIQ}{article}{
  author={Resende, Pedro},
  title={Open maps of involutive quantales},
  journal={Appl. Categ. Structures},
  volume={26},
  number={4},
  date={2018},
  pages={631--644},
  doi={10.1007/s10485-017-9506-y},
}

\bib{QFB}{article}{
  author={Resende, Pedro},
  title={Quantales and Fell bundles},
  journal={Adv. Math.},
  volume={325},
  date={2018},
  pages={312--374},
  issn={0001-8708},
  review={\MR {3742593}},
  doi={10.1016/j.aim.2017.12.001},
}

\bib{SGQ}{article}{
  author={Resende, Pedro},
  title={The many groupoids of a stably Gelfand quantale},
  journal={J. Algebra},
  volume={498},
  date={2018},
  pages={197--210},
  doi={10.1016/j.jalgebra.2017.11.042},
}

\bib{fop}{article}{
  author={Resende, Pedro},
  title={An abstract theory of physical measurements},
  journal={Found. Phys.},
  volume={51},
  date={2021},
  number={6},
  pages={Paper No. 108},
  issn={0015-9018},
  review={\MR {4335154}},
  doi={10.1007/s10701-021-00513-1},
}

\bib{RS}{article}{
  author={Resende, Pedro},
  author={Santos, Jo\~ao Paulo},
  title={Open quotients of trivial vector bundles},
  journal={Topology Appl.},
  volume={224},
  date={2017},
  pages={19--47},
  issn={0166-8641},
  review={\MR {3646416}},
  doi={10.1016/j.topol.2017.04.001},
}

\bib{RS2}{article}{
  author={Resende, Pedro},
  author={Santos, Jo\~{a}o Paulo},
  title={Linear structures on locales},
  journal={Theory Appl. Categ.},
  volume={31},
  date={2016},
  pages={Paper No. 20, 502--541},
  issn={1201-561X},
  review={\MR {3513965}},
  eprint={http://www.tac.mta.ca/tac/volumes/31/20/31-20.pdf},
}

\bib{RV}{article}{
  author={Resende, Pedro},
  author={Vickers, Steven},
  title={Localic sup-lattices and tropological systems},
  note={Topology in computer science (Schlo\ss \ Dagstuhl, 2000)},
  journal={Theoret. Comput. Sci.},
  volume={305},
  date={2003},
  number={1-3},
  pages={311--346},
  issn={0304-3975},
  review={\MR {2013577 (2004i:68130)}},
  doi={10.1016/S0304-3975(02)00702-8},
}

\bib{Rosenthal1}{book}{
  author={Rosenthal, Kimmo I.},
  title={Quantales and Their Applications},
  series={Pitman Research Notes in Mathematics Series},
  volume={234},
  publisher={Longman Scientific \& Technical},
  place={Harlow},
  date={1990},
  pages={x+165},
  isbn={0-582-06423-6},
  review={\MR {1088258 (92e:06028)}},
}

\bib{Rosicky}{article}{
  author={Rosick{\'y}, Ji{\v {r}}{\'{\i }}},
  title={Multiplicative lattices and $C\sp *$-algebras},
  language={English, with French summary},
  journal={Cahiers Topologie G\'eom. Diff\'erentielle Cat\'eg.},
  volume={30},
  date={1989},
  number={2},
  pages={95--110},
  issn={0008-0004},
  review={\MR {1004734 (91e:46079)}},
}

\bib{Rovelli96}{article}{
  author={Rovelli, Carlo},
  title={Relational quantum mechanics},
  journal={Internat. J. Theoret. Phys.},
  volume={35},
  date={1996},
  number={8},
  pages={1637--1678},
  issn={0020-7748},
  review={\MR {1409502}},
  doi={10.1007/BF02302261},
}

\bib{Schwinger}{article}{
  author={Schwinger, Julian},
  title={The algebra of microscopic measurement},
  journal={Proc. Nat. Acad. Sci. U.S.A.},
  volume={45},
  date={1959},
  pages={1542--1553},
  issn={0027-8424},
  review={\MR {112598}},
  doi={10.1073/pnas.45.10.1542},
}

\bib{Smyth83}{article}{
  author={Smyth, M. B.},
  title={Power domains and predicate transformers: a topological view},
  conference={ title={Automata, languages and programming}, address={Barcelona}, date={1983}, },
  book={ series={Lecture Notes in Comput. Sci.}, volume={154}, publisher={Springer, Berlin}, },
  date={1983},
  pages={662--675},
  review={\MR {727692}},
}

\bib{Stoy}{book}{
  author={Stoy, Joseph E.},
  title={Denotational semantics: the Scott-Strachey approach to programming language theory},
  series={MIT Press Series in Computer Science},
  volume={1},
  note={Reprint of the 1977 original; With a foreword by Dana S. Scott},
  publisher={MIT Press, Cambridge, Mass.-London},
  date={1981},
  pages={xxx+414},
  isbn={0-262-69076-4},
  review={\MR {629830}},
}

\bib{topologyvialogic}{book}{
  author={Vickers, Steven},
  title={Topology via logic},
  series={Cambridge Tracts in Theoretical Computer Science},
  volume={5},
  publisher={Cambridge University Press, Cambridge},
  date={1989},
  pages={xvi+200},
  isbn={0-521-36062-5},
  review={\MR {1002193}},
}

\bib{vonNeumann1955}{book}{
  author={von Neumann, John},
  title={Mathematical foundations of quantum mechanics},
  note={New edition of [ MR0066944]; Translated from the German and with a preface by Robert T. Beyer; Edited and with a preface by Nicholas A. Wheeler},
  publisher={Princeton University Press, Princeton, NJ},
  date={2018},
  pages={xviii+304},
  isbn={978-0-691-17857-8},
  isbn={978-0-691-17856-1},
  review={\MR {3791471}},
  doi={10.2307/j.ctt1wq8zhp},
}

\bib{Weinstein}{article}{
  author={Weinstein, Alan},
  title={Groupoids: unifying internal and external symmetry. A tour through some examples},
  conference={ title={Groupoids in analysis, geometry, and physics}, address={Boulder, CO}, date={1999}, },
  book={ series={Contemp. Math.}, volume={282}, publisher={Amer. Math. Soc., Providence, RI}, },
  date={2001},
  pages={1--19},
  review={\MR {1855239}},
  doi={10.1090/conm/282/04675},
}

\bib{Werner}{article}{
  author={Werner, Reinhard F.},
  title={Quantum Information Theory --- an Invitation},
  book={ title={Quantum Information: an Introduction to Basic Theoretical Concepts and Experiments}, series={Springer Tracts in Modern Physics}, publisher={Springer}, },
  date={2001},
  pages={14--57},
}

\bib{itfrombit}{article}{
  author={Wheeler, John Archibald},
  title={Information, physics, quantum: the search for links},
  conference={ title={Foundations of quantum mechanics in the light of new technology}, address={Tokyo}, date={1989}, },
  book={ publisher={Phys. Soc. Japan, Tokyo}, },
  date={1990},
  pages={354--368},
  review={\MR {1105973}},
}

\end{biblist}

\end{bibdiv}

\vspace*{5mm}
\noindent {\sc
Centro de An\'alise Matem\'atica, Geometria e Sistemas Din\^amicos
Departamento de Matem\'{a}tica, Instituto Superior T\'{e}cnico\\
Universidade de Lisboa\\
Av.\ Rovisco Pais 1, 1049-001 Lisboa, Portugal}\\
{\it E-mail:} {\sf pmr@math.tecnico.ulisboa.pt}

\end{document}